\numberwithin{equation}{section}
\theoremstyle{plain}
\newtheorem{theorem}[]{Theorem}[section]
\newtheorem{lemma}[]{Lemma}[section]
\newtheorem{assumption}[]{Assumption}[section]
\newtheorem{remark}[]{Remark}[section]
\def\bx{\mathbf{x}}
\def\bX{\mathbf{X}}
\def\bZ{\mathbf{Z}}
\def\bu{\mathbf{u}}
\def\bv{\mathbf{v}}
\def\bt{\mathbf{t}}
\def\bT{\mathbf{T}}
\def\E {\mathbb{E}}
\def\G {\mathbb{G}}
\def\P {\mathbb{P}}
\def\T {\mathbb{T}}
\def\bbeta{\boldsymbol{\beta}}
\def\balpha{\boldsymbol{\alpha}} 
\def\btheta{\boldsymbol{\theta}}
\def\beps{\boldsymbol{\epsilon}}
\def\bSigma{\boldsymbol{\Sigma}}
\def\bxi{\boldsymbol{\xi}}
\def\bbetahat{\widehat{\bbeta}}
\newcommand\ind{\protect\mathpalette{\protect\independenT}{\perp}}
\def\independenT#1#2{\mathrel{\rlap{$#1#2$}\mkern4mu{#1#2}}}
\newcommand\given[1][]{\:#1\vert\:}
\def\Lsc{\mathcal{L}}
\def\Usc{\mathcal{U}}
\def\Isc{\mathcal{I}}
\def\Xsc{\mathcal{X}}
\def\Ztil {\widetilde{\mathbb{Z}}}
\def\Var{\mbox{Var}}
\def\half{\frac{1}{2}}
\def\bone{\mathbf{1}}
\def\bzero{\mathbf{0}}
\def\Msc{\mathcal{M}}
\def\bT{\mathbf{T}}
\def\S{\mathbb{S}}
\def\deltabar{\overline{\delta}}
\def\nq{n_q}
\def\bgamma{\boldsymbol{\gamma}}
\def\L{\mathbb{L}}
\def\bmu{\boldsymbol{\mu}}
\def\bbetabar{\overline{\bbeta}}
\def\balphabar{\overline{\balpha}}
\def\bXbar{\overline{\bX}}
\def\Ybar{\overline{Y}}
\def\C{\mathbb{C}}
\def\Asc{\mathcal{A}}
\def\bvj{\bv_{[j]}}
\def\R{\mathbb{R}}
\def\Jsc{\mathcal{J}}
\def\Msc{\mathcal{M}}
\def\T{\mathbb{T}}
\def\Tnqo{\T_{\nq}^{(1)}}
\def\Tnqto{\T_{\nq, 1}^{(2)}}
\def\Tnqtt{\T_{\nq, 2}^{(2)}}
\def\dbar{\overline{d}}
\def\Pibac{\Pi^c_{\bbeta_0}(\balpha_0)}
\def\S{\mathbb{S}}
\def\G{\mathbb{G}}
\def\SGq{\S\G_q}
\def\bXtil{\widetilde{\bX}}
\def\Ytil{\widetilde{Y}}
\def\pqtil{\widetilde{p}_q}
\def\piqtil{\widetilde{\pi}_q}
\def\Nsc{\hbox{Normal}}
\def\zbar{\overline{z}}
\def\rhotil{\widetilde{\rho}}
\def\bt{\mathbf{t}}
\def\Z{\mathbb{Z}}
\def\Znq{\mathcal{Z}_{n_q}}
\def\Cmin{C_{\min}}
\def\Cmax{C_{\max}}
\def\Phibar{\overline{\Phi}}
\def\abar{\overline{a}}
\def\bbar{\overline{b}}
\def\cbar{\overline{c}}
\def\Rsc{\mathcal{R}}
\def\Ztil{\widetilde{\Z}}
\def\bxi{\boldsymbol{\xi}}
\def\bbetatil{\widetilde{\bbeta}}
\def\bD{\mathbf{D}}
\def\Lsc{ {\mathcal L}}
\def\argmin{\mbox{argmin}}
\def\bSigma{\boldsymbol{\Sigma}}
\def\hs2{\hspace{2mm}}
\definecolor{jcolor}{RGB}{041,122,000}
\definecolor{darkred}{RGB}{100,000,000}
\definecolor{purple}{RGB}{200,000,200}
\def\boxit#1{\vbox{\hrule\hbox{\vrule\kern6pt  \vbox{\kern6pt#1\kern6pt}\kern6pt\vrule}\hrule}}
\def\ULASSO{\mbox{U}_{\mbox{\tiny LASSO}}}
\def\Fscr{\mathscr{F}}
\def\bT{\mathbf{T}}
\def\hbar{\overline{h}}
\def\bmu{\boldsymbol{\mu}}
\def\bc{\mathbf{c}}
\begin{document}

\begin{frontmatter}

\title{Surrogate Aided Unsupervised Recovery of Sparse Signals in Single Index Models for Binary Outcomes} 
\runtitle{Surrogate Aided Unsupervised Recovery of Sparse Signals}

\begin{aug}
 \author{\fnms{Abhishek} \snm{Chakrabortty}\corref{}\ead[label=e1]{abhich@mail.med.upenn.edu}\thanksref{t1,t5},}
\author{\fnms{Matey} \snm{Neykov}\corref{}\ead[label=e2]{mneykov@andrew.cmu.edu}}, 
\author{\fnms{Raymond} \snm{Carroll}\corref{}\ead[label=e3]{carroll@stat.tamu.edu}} 
\and \author{\fnms{Tianxi} \snm{Cai}\corref{}\ead[label=e4]{tcai@hsph.harvard.edu}\thanksref{t5}}

\thankstext{t1}{Corresponding author; previously at Harvard University during the time of this work.} 
\thankstext{t5}{This research was partially supported by the National Institutes of Health grants U54 HG007963 and U01 CA057030.}

\affiliation{University of Pennsylvania, Carnegie Mellon University, Texas A\&M University and Harvard University}

\runauthor{A. Chakrabortty et al.}

\address{Abhishek Chakrabortty\\
Department of Statistics\\
University of Pennsylvania\\
Philadelphia, PA 19104, USA.\\
\printead{e1}\\
\phantom{E-mail:\ }}

\address{Matey Neykov\\
Department of Statistics \& Data Science\\
Carnegie Mellon University\\
Pittsburgh, PA 15213, USA.\\
\printead{e2}\\
\phantom{E-mail:\ }}

\address{Raymond Carroll \\
Department of Statistics\\
Texas A\&M University\\
College Station, TX 77843-3143, USA.\\
\printead{e3}\\
\phantom{E-mail:\ }}

\address{Tianxi Cai\\
Department of Biostatistics\\
Harvard University\\
Boston, MA 02115, USA.\\
\printead{e4}\\
\phantom{E-mail:\ }}
\end{aug}

\begin{keyword}[class=MSC]
\kwd{62J12}
\kwd{62J07}
\kwd{62H30}
\kwd{62G32}
\kwd{62F10}
\kwd{62F30.}
\end{keyword}

\begin{keyword}
\kwd{Extreme Sampling}
\kwd{Surrogate Outcome}
\kwd{Misclassification Error}
\kwd{Single Index Models}
\kwd{Unsupervised Signal Recovery}
\kwd{Sparsity and Lasso.}
\end{keyword}

\begin{abstract}

We consider the recovery of regression coefficients, denoted by $\bbeta_0$, for a single index model (SIM) relating a binary outcome $Y$ to a set of possibly high dimensional covariates $\bX$, based on a large but \emph{unlabeled} dataset $\Usc$, with $Y$ never observed. On $\Usc$, we fully observe $\bX$ and additionally, a surrogate $S$ which, while not being strongly predictive of $Y$ throughout the entirety of its support, can forecast it with high accuracy when it assumes extreme values. Such datasets arise naturally in modern studies involving large databases such as electronic medical records (EMR) where $Y$, unlike $(\bX, S)$, is difficult and/or expensive to obtain. In EMR studies, an example of $Y$ and $S$ would be the true disease phenotype and the count of the associated diagnostic codes respectively. Assuming another SIM for $S$ given $\bX$, we show that under sparsity assumptions, we can recover $\bbeta_0$ proportionally by simply fitting a least squares LASSO estimator to the subset of the observed data on $(\bX, S)$ restricted to the extreme sets of $S$, with $Y$ imputed using the surrogacy of $S$. We obtain sharp finite sample performance bounds for our estimator, including deterministic deviation bounds and probabilistic guarantees. We demonstrate the effectiveness of our approach through multiple simulation studies, as well as by application to real data from an EMR study conducted at the Partners HealthCare Systems.
\end{abstract} %

\end{frontmatter}

\vbadness10000 

\section{Introduction}\label{P2:intro}

\subsection{Background}

Unsupervised classification methods are of great importance in a wide variety of scientific applications including image retrieval and processing, document classification, genome-phenome association analysis and other problems in biomedical sciences \citep{gllavata2004text, chen2005clue, henegar2006unsupervised}.
In recent years, many unsupervised  learning methods have been proposed to classify categorical outcomes. Examples include clustering, latent class mixture modeling, neural networks and random forest based methods \citep{merkl2000document, hofmann2001unsupervised, shi2006unsupervised, cios2007unsupervised, kosorok_2013}. Most of the related existing literature, however, largely focuses on identifying algorithms that can accurately classify the outcomes of interest with less focus on the statistical properties of the estimated model parameters. We consider in this paper 
a surrogate aided unsupervised classification problem of a very \emph{different} and unique nature. Motivated by the problem of automated phenotyping with electronic medical records (EMR) data, among other  problems, we consider a regression modeling approach to unsupervised classification with assistance from a surrogate variable 
$S$ whose extreme (but \emph{not} all) values are highly predictive of an unobserved binary outcome (or label) $Y \in \{0,1\}$.

Specifically, we consider relating  $Y$ to a $p\times 1$ covariate vector $\bX$ through a flexible single index model (SIM) under which the regression parameter, 
$\bbeta_0$, is identifiable only up to scalar multiples. The available data, $\Usc$, is large in size, but completely unlabeled with $Y$ \emph{never} observed (our usage of the term `unsupervised' is meant precisely in this sense). 
On $\Usc$, we observe $\bX$, and $S$ which is \emph{not} necessarily strongly predictive of $Y$ throughout the entirety of its support, but can 
forecast $Y$ with reasonable accuracy when it assumes extreme values. Such data arises naturally in settings where $Y$, unlike $\bX$ and $S$, is difficult and/or expensive to obtain, a scenario that is of great practical relevance especially in the modern `big data' era with massive unlabeled datasets becoming increasingly available and tractable. In particular, they are frequently encountered in modern biomedical studies involving analyses of large databases like EMR, where an example of $Y$ and $S$ could be a disease phenotype for conditions like rheumatoid arthritis (RA) and the count of International Classification of Diseases, Ninth Revision (ICD9) codes for RA, respectively. We first briefly discuss the motivating problem of automated phenotyping in 
EMR, followed by a summary of our contributions in this paper and the proposed framework for unsupervised recovery of sparse signals in single index models for binary outcomes using extremes of a surrogate variable.

\subsection{A Motivating Example (Automated Phenotyping in EMR)}\label{P2:intro_example}

EMR linked with bio-repositories provide rich resources of data for discovery research \citep{Kohane_2011}.
Integrative analyses of large scale clinical and phenotypic data, readily available from the EMR, and biological data from bio-repositories can be performed to rigorously study genome-phenome association networks and improve the understanding of disease processes and treatment responses \citep{wilke2011emerging,kohane2012translational}. For example, when new genetic variants are discovered, the scope of their clinical significance can be assessed by examining the range of disease phenotypes that are associated with these variants via phenome-wide association studies (PheWAS) based on EMR cohorts \citep{denny2010phewas}. EMR data are the key to the success of PheWAS as they contain nearly complete clinical diagnoses, broadening the ability to simultaneously test for potential associations between genetic variants and a wide range of disorders, in contrast to traditional cohort studies that typically focus 
on only a few predetermined disease phenotypes as outcomes.

However, despite its potential for translational research, one major rate-limiting step in EMR driven PheWAS is the difficulty in extracting accurate information on the true disease phenotype $Y$ from the EMR, which usually requires labor intensive manual chart review by physicians \citep{bielinski2011mayo}. Current PheWAS methods primarily rely on ICD9 codes to assess the phenotype \citep{denny2010phewas, Liao_2010}. The ICD9 codes have limited predictive accuracy for many diseases and hence, can introduce substantial noise into the subsequent association studies. For example, based on data from the Partners HealthCare Systems, among subjects with at least 3 RA ICD9 codes, only 56\% of those actually have confirmed RA \citep{Liao_2010}. However, for the subsets of patients with very high or low counts of RA ICD9 codes, the ICD9 codes can predict the true RA status with a high degree of accuracy, thereby serving as an effective surrogate outcome in these subsets (see Section \ref{P2:data_ex}). Appropriate and efficient use of such available surrogacy information can lead to unsupervised algorithms that can accurately predict $Y$ and hence, significantly reduce the burden of manual labeling in large cohorts.
In particular for EMR data, such automated unsupervised classification algorithms can pave the way for high throughput phenotyping \citep{Murphy_2009, ritchie2010robust, Yu_2015}, allowing for phenome-genome association studies that typically requires the availability of multiple phenotypes and hence does not scale well with manual labeling methods for obtaining gold standard labels/outcomes.

Finally, it is worth mentioning that the example with EMR data discussed here serves only as \emph{one} main motivation for the statistical problem we consider in this paper. Our framework (and our proposed methodology) indeed applies more generally to several other interesting problems that are of particular relevance in modern studies involving large databases, where unlabeled data for $\bX$, as well as observations for a suitable surrogate $S$, may be available in plenty but the observations for the corresponding binary outcome (or label) $Y$ may be difficult to obtain (possibly due to logistics, cost issues etc., among other reasons). Such settings, in general, are of considerable interest in unsupervised as well as semi-supervised learning \citep{Chapelle_2006} and some classic examples from machine learning include text mining, web page classification, speech recognition, natural language processing etc.

\subsection{Contributions of this Paper}\label{P2:contrib}

In this paper, we propose an unsupervised estimator of $\bbeta_0$ by making use of the extremes of a surrogate variable $S$ that is observable for the entire population.  Specifically, under another SIM for $S$ given $\bX$ with an unknown parameter $\balpha_0$, we propose to estimate $\bbeta_0$ by regressing a surrogate binary outcome $Y^*$, defined by $S$, on $\bX$ via a simple $L_1$-penalized linear regression in an extreme subset of $S$ consisting of $100q\%$ of the study population, for some small $q \in (0,1]$. Under sparsity assumptions on $\bbeta_0$ and conditions controlling the misclassification error, denoted by $\pi_q$, of $Y^* \equiv Y^*_q$ for $Y$ in the extreme subset, we show that with $p$ possibly large, our proposed simple Unsupervised LASSO ($\ULASSO$) estimator recovers $\bbeta_0$ up to a scale multiplier.
We also obtain explicit finite sample (and deterministic) deviation bounds for the performance of our estimator, along with high probabilistic guarantees for the bounds to obey satisfactory convergence rates. The results have several useful implications, including an interesting `variance-bias  tradeoff' (in terms of $q$) in the convergence rates, whereby for a given order of the misclassification error $\pi_q$, the corresponding optimal order of $q$ and the optimal convergence rate can also be determined. We also explicitly characterize the behaviour of $\pi_q$ versus $q$ for one specific setting, wherein the interplay between $\bbeta_0$ and $\balpha_0$ and the necessary conditions for our approach to succeed become more explicit.

The rest of this paper is organized as follows. In Section \ref{P2:psetup}, we detail the problem set-up, as well as some key ideas and results motivating our approach. In Section \ref{P2:ULASSO}, we present the $\ULASSO$ estimator and all its theoretical properties. Results from extensive simulation studies are given in Section \ref{P2:sim}. 
The performance of the $\ULASSO$ estimator is found to be comparable, and in fact \emph{better} in most cases, to that of a supervised estimator based on as many as $500$ gold-standard labels. Our estimator also does not appear to be too sensitive to the choice of $q$ provided that it is small enough. In Section \ref{P2:data_ex}, we apply our method to an EMR study in which a labeled set of observations is also available for validation. The results indicate that our estimator works well in real applications. Finally, concluding discussions are given in Section \ref{P2:disc}. Technical materials, including assumptions, supporting lemmas and detailed proofs of all our theoretical results, are distributed in Appendices \ref{P2:assumption}-\ref{P2:thm3.2_proof}. 
Further, in Appendix \ref{P2:splcase}, 
we also provide some results, along with useful discussions, to illustrate how our approach and main results can be applied to a specific subclass of 
models for $(Y,S, \bX')'$ that are of considerable interest in the literature and are frequently adopted in practice.

\section{Problem Set-Up}\label{P2:psetup}

\emph{Notations and Assumptions.}\label{P2:notn_assmpn}
We first introduce some notations to be used throughout. For any $\bv \in \mathbb{R}^p$ and $j \in \{1,\hdots, p\}$, let $\bvj$ denote the $j^{th}$ coordinate of $\bv$, $\| \bv \|_r$ the $L_r$ norm of $\bv$ $ \; \forall \; r \geq 0$, $\Asc(\bv) \equiv \{ j: \bvj \neq 0 \}$ the support of $\bv$, and $s_{\bv} \equiv \| \bv \|_0 $ the cardinality of $\Asc(\bv)$. Further, for any $\Jsc \subseteq \{1, \hdots,p\}$, let $\Jsc^c = \{1, \hdots, p\} \backslash \Jsc$, $\Msc_{\Jsc} = \{ \bv \in \R^p : \Asc(\bv) \subseteq \Jsc \}$ and $\Msc_{\Jsc}^{\perp} = \{ \bv \in \R^p : \Asc(\bv) \subseteq \Jsc^c \}$, and let $\Pi_{\Jsc}(\bv)$ denote the $p\times 1$ vector with $j$th element being $\bv_{[j]}1\{j \in \Jsc\}$. We use the shorthand $\Pi_{\bv}(\cdot)$ and $\Pi_{\bv}^c(\cdot)$ to denote $\Pi_{\Asc(\bv)}(\cdot)$ and $\Pi_{\Asc^c(\bv)}(\cdot)$ respectively. 
Next, for any positive definite (p.d.) matrix $\bSigma$, denoted as $\bSigma \succ 0$, we let $\lambda_{\min}(\bSigma) > 0$ denote the minimum eigenvalue of $\bSigma \succ 0$. Lastly, for any $d \geq 1$, we denote by $\Nsc_d(\bmu,\bSigma)$ the $d$-variate Gaussian distribution with mean $\bmu \in \R^d$ and covariance matrix $\bSigma_{d \times d} \succ 0$, by $\mbox{Logistic}(a,b)$ the logistic distribution with mean $a \in \R$ and variance $b > 0$, and by $\mbox{Uniform(a,b)}$ the uniform distribution on $(a,b)$ for any $a, b \in \R$ with $a < b$.

We assume throughout that $\bD = (Y, S, \bX')'$ is defined on a common probability space with probability measure $\P(\cdot)$ and has finite 2\textsuperscript{nd} moments. Let $\E(\cdot)$ denote expectation with respect to $\P(\cdot)$. For any $q \in (0,1]$, let $\delta_q$ and $\deltabar_q$ respectively denote the $(q/2)^{th}$ and $(1-q/2)^{th}$ quantiles of 
$S$, and define: 
$$\Isc_q \; = \; (-\infty,\delta_q] \hspace{0.01in} \cup \hspace{0.01in} [\deltabar_q,\infty) \quad \forall \; q \in (0,1].$$
Let $\P_q(\cdot)$ denote the  probability measure characterizing the distribution of $\bD \given S \in \Isc_q$ and let
$\E_q(\cdot)$ denote expectation with respect to $\P_q(\cdot)$. Let $p_q = \E_q(Y)$, $\bmu_q = \E_q(\bX)$ and $\bSigma_q = \Var(\bX \given S \in \Isc_q)$, where we assume $\bSigma_q \succ 0$.
Finally, let $\pi_q^- = \P(Y=1 \given S\leq \delta_q)$ and $\pi_q^+ = \P(Y=0 \given S\geq \overline{\delta}_q)$. The premises of our problem, as
formalized in Assumption \ref{P2:surr_assmpn} later, entail that $\pi_q^-$ and $\pi_q^+$ are both small for small enough $q$.

The underlying data consists of $N$ independent and identically distributed (i.i.d.)  realizations of $\bD$, denoted as:
$\Fscr_N^* = \{\bD_i = (Y_i,S_i,\bX_i')': i = 1,\hdots,N\}$, while the \emph{observed data}, completely unlabeled, consists of the $N$ i.i.d. realizations of $(S,\bX')'$ only, denoted as:
$$\Usc_N^* = \{(S_i,\bX_i')': i = 1, \hdots,N\}.$$
The variable $S$, in very heuristic terms, satisfies the following property: it is known a priori, based on domain knowledge, that when $S$ is `too low' or `too high', then the corresponding $Y$ is `very likely' to be $0$ or $1$ respectively. We formalize this assumption as follows.
\begin{assumption}[Extreme Tail Surrogacy of $S$]\label{P2:surr_assmpn}
 \emph{Define a \emph{surrogate outcome} $Y^*_q  = 1(S\geq \deltabar_q)$ in the subset $S \in \Isc_q$ and let $\pi_q = \P_q(Y \neq Y^*_q)$, where $1(\cdot)$ denotes the indicator function.  We assume that for some universal constants $\nu, C > 0$, 
 and some $q_0 \in (0,1]$ small enough,
\begin{equation}
\pi_q \; \equiv \; \P_q(Y \neq Y^*_q) \; \equiv \; \half(\pi_q^- + \pi_q^+) \; \leq \; Cq^\nu \quad \forall \; q \leq q_0 \in (0,1]. 
\label{P2:misclass_error}
\end{equation} }
\end{assumption}
Note that the surrogacy in Assumption \ref{P2:surr_assmpn} is formulated in terms of the quantiles of $S$ and therefore allows for the support of $S$ to be of arbitrary nature (continuous and/or discrete).

For most of this paper, our primary focus will be on the subsets of $\Usc_N^*$ and $\Fscr_N^*$ consisting of the observations for which $S \in \Isc_q$, defined as follows:
\begin{eqnarray}
\Usc_{\nq} &=& \{(Y_{q,i}^*, S_i,\bX_i')': \; S_i \in \Isc_q, \; i = 1,\hdots, \nq \equiv Nq\}, \;\; \mbox{and} \label{P2:restricted_surr_data} \\
\Fscr_{\nq} &=& \{(Y_i,S_i,\bX_i')': \; S_i \in \Isc_q, \; i = 1,\hdots, \nq \equiv Nq\},\label{P2:restricted_data}
\end{eqnarray}
where without loss of generality (WLOG), we re-index the observations in both $\Usc_{\nq}$ and $\Fscr_{\nq}$ for notational ease.
The sample size $N$ is assumed to be substantially \emph{large} (see Remark \ref{P2:size_remark}), so that the distribution of $(S,\bX')'$ can be presumed to be (almost) known for all practical purposes. We shall hence assume for simplicity that $\delta_q$ and $\deltabar_q$ are known as well
\footnote{In practice, the (unknown) population quantities $\delta_q$ and $\deltabar_q$ may be (near-perfectly) estimated from the original observed data $\Usc_N^*$ whose size $N$ is very large and more importantly, since $q \rightarrow 0$, $N$ is of much higher order compared to the effective sample size, $n_q \equiv Nq$, of our eventual dataset of interest $\Usc_{\nq}$. In our theoretical formulations, we shall therefore ignore for simplicity this minor (and lower order) source of randomness involved in estimating the quantiles of $S$ from $\Usc_N^*$ and assume that they are known.}.
Lastly, while all results obtained in this paper for our proposed estimators are finite sample results, they are essentially derived with the following regime in mind: $N \rightarrow \infty$, $q = O(N^{-\eta})$ for some constant $\eta \in (0,1)$, so that $q \rightarrow 0$ and $n_q = O(N^{1-\eta}) \rightarrow \infty$, as $N \rightarrow \infty$.

\begin{remark}[Large Size of the Original Data $\Usc_N^*$]\label{P2:size_remark}
\emph{
The fact that $N$ is `very large' (for example, $N \asymp 10^5$) is key to the relevance of our problem and its premises, and to the potential success of our proposed approach based on $\Usc_{\nq}$. It ensures that even for small enough $q$ such as 
$q \asymp 10^{-2}$, so that (\ref{P2:misclass_error}) can be made to hold, our effective sample size $\nq$ $(\asymp 10^3)$ is still large enough to lead to an estimator with reasonable stability and convergence guarantees. More importantly, such choices of $(N, q)$ ensure that $\nq$ easily remains comparable to the \emph{maximum size} of a 
labeled data that one can realistically hope to procure in practice, given the logistic constraints 
typically involved in obtaining $Y$ under our settings of interest.
}
\end{remark}

\subsection{Model Assumptions}\label{P2:model_assumptions}
We assume throughout the following \emph{single index models} (SIMs) for $Y $ and $S$ given $\bX$.
\begin{eqnarray}\label{P2:sim_models}
Y & = & f(\bbeta_0'\bX;\epsilon)  \quad\; \mbox{with} \; \epsilon \ind (S,\bX) \; \mbox{and} \; f(\cdot)\; \mbox{\emph{unknown}}, \quad \mbox{and} \label{P2:y_model}\\
S & = & g(\balpha_0'\bX;\epsilon^*)  \quad \mbox{with} \; \epsilon^* \ind \bX, \; \epsilon^* \ind \epsilon, \; \mbox{and} \; g(\cdot) \; \mbox{\emph{unknown}}, \label{P2:s_model}
\end{eqnarray}
where $\bbeta_0, \balpha_0 \in \R^p$ are unknown parameter vectors, and $(\epsilon, \epsilon^*)$ represent the corresponding random noise components. Since $f(\cdot)$ and $g(\cdot)$ are unspecified, $\bbeta_0$ and $\balpha_0$ are identifiable \emph{only} up to scalar multiples; see Section \ref{P2:disc} for further discussions on the model assumptions.
Note that in (\ref{P2:s_model}), we do not require $S$ to be continuous (for instance, it can be a count variable, as in the example in Section \ref{P2:intro_example}). The map $g(\cdot)$ in (\ref{P2:s_model}) is, in general, $\Xsc_S$-valued, where $\Xsc_S \subseteq \R$ denotes the appropriate support of $S$. 
As for the map $f(\cdot)$ in (\ref{P2:y_model}), it can be viewed as: $f(\bbeta_0'\bX;\epsilon)$ $= 1\{\bar{f}(\bbeta_0'\bX;\epsilon) > 0\}$ for some unknown $\R$-valued function $\bar{f}(\cdot)$.
%

SIMs have been widely studied in classical econometrics \citep{Powell_SS_1989, Ichimura_1993, Horowitz_2009}, as well as in statistics as part of the sufficient dimension reduction literature \citep{Duan_Li_1989, Duan_1991, K-C_Li_1991, Cook_2009}. Analysis of SIMs in high dimensional settings has also garnered considerable attention in recent years in the relevant statistics literature \citep{Goldstein_2016, Neykov_L1_2016, Wei_2018} as well as in the compressed sensing literature  \citep{Thramp_2015, Plan_Vershynin_2013_a, Plan_2016, Plan_2017}. Several of these recent works build upon and/or are also closely related to the seminal results and insights of \citet{Duan_Li_1989} on `link free' regression which serve as a main inspiration of our approach as well; see Section \ref{P2:basic_found} for further details.

The models (\ref{P2:y_model}) and (\ref{P2:s_model}) are flexible semi-parametric models that include all commonly used generalized linear models as special cases.
These models imply that $Y \ind \bX \given \bbeta_0'\bX$ and $S \ind \bX \given \balpha_0'\bX$, in general, so that $\bbeta_0'\bX$ and $\balpha_0'\bX$ fully capture the dependencies of $Y$ and $S$ on $\bX$ respectively.
More importantly, the models (\ref{P2:y_model}) and (\ref{P2:s_model}) imply that $Y \ind S \given \bX$, and yet $S$ is dependent on $\bX$. Thus, in some sense, $S$ behaves as a so-called `instrumental variable' \citep{Bowden_1990, Pearl_2000} under our setting.
\begin{remark}\label{P2:assmpn_remark}
\emph{
Note that the condition $Y \ind S \given \bX$ 
does \emph{not} contradict in any way the extreme surrogacy assumption (\ref{P2:misclass_error}) which only relates $Y$ and $S$ \emph{marginally}, and only so in the \emph{tails} of $S$.
Moreover, this condition 
is very different from the typical surrogacy assumption adopted in the literature on measurement error and misclassification \citep{Carroll_2006, Buonaccorsi_2010}, namely 
$S \ind \bX \given Y$. If this holds in our case, $\balpha_0 \propto \bbeta_0$ must hold 
and the problem thus becomes trivial. Moreover, the typical measurement error assumption is often not realistic in the EMR setting. For example, when $S$ is the ICD9 code and $\bX$ consists of features such as medications for treating the disease, patients with a higher value of $S$ are likely to have higher values of $\bX$ among those with $Y=1$. Our assumption of $Y \ind S \given \bX$ is more suitable for the EMR setting, since $\bX$ is often comprehensive and the ICD9 code does not contribute any additional information on $Y$ 
beyond $\bX$. This assumption
ensures that the restriction $S \in \Isc_q$ underlying the construction of $\Usc_{\nq}$ and $\Fscr_{\nq}$ does not alter the relation between $Y$ and $\bX$ in (\ref{P2:y_model}) that defines our parameter of interest $\bbeta_0$.
}
\end{remark}

\subsection{Basic Foundations of Our Approach}\label{P2:basic_found}

We next discuss some useful motivations and essential fundamentals underlying our approach for recovering $\bbeta_0$.
For any given $q \in (0,1]$, let $p^*_q \equiv \E_q(Y^*_q) = 1/2$ 
and define:
\begin{eqnarray}
&& \L_q(\bv) \; = \; \E_q[\{Y - p_q - \bv'(\bX-\bmu_q)\}^2], \;\;\;\; \bbetabar_q \; = \; \underset{\bv \in \mathbb{R}^p}{\mbox{arg min}} \;\L_q(\bv); \quad \mbox{and} \label{P2:y_sqloss}\\
&& \L_q^*(\bv) \; = \; \E_q[\{Y^*_q - p_q^* - \bv'(\bX-\bmu_q)\}^2], \;\; \balphabar_q \; = \; \underset{\bv \in \mathbb{R}^p}{\mbox{arg min}}\; \L_q^*(\bv). \label{P2:s_sqloss}
\end{eqnarray}
Since only the coefficients corresponding to $\bX$ are of interest, we center all  variables in the definitions of $\L_q(\cdot)$ and $\L_q^*(\cdot)$.  With $\bSigma_q \succ 0$, both $\bbetabar_q$ and $\balphabar_q$ are clearly well-defined and unique.
While we focus on the squared loss throughout for convenience in constructing  the $\ULASSO$ estimator and ease of theoretical derivations, other convex loss functions more suited for binary outcomes, such as the logistic loss, can also be considered, but the corresponding technical analyses can be much more involved. We refer to Section \ref{P2:disc_oth_losses} 
for further discussions. Besides, it is also worth noting that least squares regression for binary outcomes is closely related to the well known linear discriminant analysis (LDA) approach. In fact, for binary outcomes, the least squares parameter vector is proportional to the LDA direction; see Chapters 4.2, 4.3 and Exercise 4.2 of \citet{Hastie_2008} for further details. 

The main motivation behind our consideration of (\ref{P2:y_sqloss}) and (\ref{P2:s_sqloss}) lies in an interesting result (Theorem 2.1) of \citet{Duan_Li_1989} which shows that for any outcome ${Y}$ satisfying a SIM given $\bX$ with some parameter $\bgamma \in \mathbb{R}^p$, if the following two conditions hold:
(i) $\E(\bv'\bX \given \bgamma'\bX)$ is a linear function of $\bgamma'\bX \;\; \forall \; \bv \in \R^p$, and
(ii) for a loss function $\Lsc({Y};a+\bv'\bX)$ that is convex in the second argument, if  $(\overline{a},\overline{\bv}')' = \argmin_{a,\bv} \E\{\Lsc({Y};a+\bv'\bX)\}$  exists and is unique,
then $\overline{\bv} \propto \bgamma$, i.e. $\overline{\bv}$ recovers 
$\bgamma$ up to a scalar multiple.
A similar result was also derived earlier by \citet{Brillinger_1982} for the special case when $\bX$ is Gaussian (so that condition (i) is automatically satisfied) and $\Lsc(u,v) = (u - v)^2$ corresponds to the squared loss.
In recent years, several works on signal recovery for SIMs in high dimensional settings have exploited and/or independently rediscovered this remarkable result of \citet{Duan_Li_1989}, including \citet{Thramp_2015, Neykov_L1_2016, Plan_Vershynin_2013_a, Plan_2016, Plan_2017}, among others, for the special case of Gaussian designs and the squared loss, as well as \citet{Goldstein_2016, Genzel_2017} and \citet{Wei_2018} for more general designs and/or loss functions.

Under our setting, as noted earlier, since $Y \ind S \given \bX$, the SIM (\ref{P2:y_model}) continues to hold even under the restriction $S \in \Isc_q$ that underlines the construction of the data subsets $\Usc_{\nq}$ and $\Fscr_{\nq}$. This, together with the results of \citet{Duan_Li_1989}, suggest that if $\Fscr_{\nq}$ were actually observed, a minimization of the corresponding empirical squared loss for $Y$ based on $\Fscr_{\nq}$ could potentially lead to a consistent estimator of the $\bbeta_0$ direction.

Of course, the major issue is that we only observe $\Usc_{\nq}$, and therefore can only hope to minimize the empirical squared loss for $Y^*_q$ based on $\Usc_{\nq}$, the empirical counterpart of $ \L_q^*(\bv)$ in (\ref{P2:s_sqloss}). However, owing to the extreme surrogacy assumption, $\L_q(\bv) - \L_q^*(\bv) \approx 0$ and therefore, due to the smoothness and convexity of $\L_q(\cdot)$ and $\L_q^*(\cdot)$, their minimizers $\balphabar_q$ and $\bbetabar_q$ are expected to be close.
Lastly, another critical issue is the validity of the condition (i) above on `linear conditional expectations' for the underlying design distribution which, in our case, is that of $\bX \given S \in \Isc_q$ and \emph{not} that of $\bX$ itself. Even if it holds for the distribution of $\bX$, it is unlikely to hold for that of $\bX \given S \in \Isc_q$, especially for small enough $q$. We therefore assume a different kind of a condition, involving \emph{only the marginal distribution} of $\bX$, that is more reasonable and likely to hold in practice for a fairly wide class of distributions.
%
\begin{assumption}[Design Linearity Condition - Linear Conditional Expectation]\label{P2:dlc_assmpn}
\emph{
We assume that for any $\bv \in \mathbb{R}^p$, $\E(\bv'\bX \mid \balpha_0'\bX, \bbeta_0'\bX)$ is linear in $\balpha_0'\bX$ and $\bbeta_0'\bX$, that is
\begin{alignat}{2}
& \E(\bv'\bX\given\balpha_0'\bX,\bbeta_0'\bX) \; && = \; c_{\bv} + a_{\bv}(\balpha_0'\bX) + b_{\bv}(\bbeta_0'\bX), \quad \mbox{and} \label{P2:lin_condn_eqn_1}\\
& \E(\bbeta_0'\bX \given \balpha_0'\bX) \; && = \; \overline{c} + \overline{a}(\balpha_0'\bX), \label{P2:lin_condn_eqn_2}
\end{alignat}
for some constants  $(c_{\bv}, a_{\bv}, b_{\bv})$ depending on $\bv$, and some constants $(\overline{c}, \overline{a})$.}
\end{assumption}
\begin{remark}\label{P2:dlc_remark}
\emph{
Note that the conditions (\ref{P2:lin_condn_eqn_1}) and (\ref{P2:lin_condn_eqn_2}) are restrictions on the original design distribution, the unconditional distribution of $\bX$, and do \emph{not} involve $Y$, $Y_q^*$ or $1(S \in \Isc_q)$. Conditions of a similar flavor are commonly adopted in the sufficient dimension reduction literature, including SIMs as special cases; see \citet{K-C_Li_1991}, \citet{Cook_2009} and references therein for further discussions on such conditions and their applicability. Our condition (\ref{P2:lin_condn_eqn_1}) is slightly stronger than the typical design linearity conditions assumed in the SIM literature \citep{Duan_Li_1989,Duan_1991, Wang_2012} in the sense that it requires joint linearity in $\balpha_0'\bX$ and $\bbeta_0'\bX$, instead of only $\bbeta_0'\bX$. This is imposed because the distribution of $\{\bX \given S \in \Isc_q\}$ inherently depends on $\balpha_0'\bX$ through $S$ owing to (\ref{P2:s_model}). Nevertheless, both (\ref{P2:lin_condn_eqn_1}) and (\ref{P2:lin_condn_eqn_2}) are satisfied, for instance, by all elliptically symmetric distributions \citep{Goldstein_2016, Wei_2018} including the Gaussian distribution, among others.
Moreover, \citet{Hall_1993} have also argued that for a wide class of distributions satisfying mild restrictions, such design linearity conditions are `approximately true' with high probability for most directions $\bv \in \mathbb{R}^p$, as long as $p$ is large enough; see also the results of \citet{Diaconis_1984}. Lastly, a more careful inspection of the proof of Theorem \ref{P2:THM_1} below, which is where Assumption \ref{P2:dlc_assmpn} is actually required, will reveal that the condition (\ref{P2:lin_condn_eqn_1}) is needed to hold for \emph{only one} specific choice of $\bv$, given by $\bv = \bbetabar_q$ as in (\ref{P2:y_sqloss}), and not for all $\bv \in \R^p$. Nevertheless, we stick to the stronger, but more familiar, formulation in (\ref{P2:lin_condn_eqn_1}).
}
\end{remark}

The explicit relationships between $\{\bbetabar_q,\balphabar_q\}$ in (\ref{P2:y_sqloss})-(\ref{P2:s_sqloss}) and the original SIM parameters  $\{\bbeta_0,\balpha_0\}$ in (\ref{P2:y_model})-(\ref{P2:s_model}) are given by the following result.
\begin{theorem}\label{P2:THM_1}
Assuming the design linearity conditions (\ref{P2:lin_condn_eqn_1})-(\ref{P2:lin_condn_eqn_2}),
\begin{eqnarray}
\bbetabar_q &=& a_q\balpha_0 + b_q\bbeta_0, \quad \mbox{and}\label{P2:thm1_eqn1}\\
\balphabar_q &=& a_q^*\balpha_0 \label{P2:thm1_eqn2},
\end{eqnarray}
where $a_q = a_{\bbetabar_q}$, $b_q=b_{\bbetabar_q}$,  and $a_q^*= a_{\balphabar_q} + b_{\balphabar_q}\overline{a}$. 
\end{theorem}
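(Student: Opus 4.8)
The plan is to run the Duan--Li ``link-free'' argument, but carried out entirely under the conditional law $\P_q$, the two extra ingredients being that (i) conditioning on $\{S\in\Isc_q\}$ reduces each outcome to a function of a single index, and (ii) it does not disturb the conditional laws of $\bX$ that Assumption~\ref{P2:dlc_assmpn} controls. First I would write the normal equations for the centered, unpenalized least-squares problems in \eqref{P2:y_sqloss}--\eqref{P2:s_sqloss}: differentiating $\L_q$ and $\L_q^*$ and using $\bSigma_q\succ0$ shows that the unique minimizers satisfy $\bSigma_q\bbetabar_q=\mbox{Cov}_q(\bX,Y)$ and $\bSigma_q\balphabar_q=\mbox{Cov}_q(\bX,Y^*_q)$, where $\mbox{Cov}_q$ is covariance under $\P_q$ (centering by $p_q,p_q^*,\bmu_q$ kills the cross terms). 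Next I would reduce the outcomes: since $\epsilon\ind(S,\bX)$, conditioning on $\{S\in\Isc_q\}$ leaves the law of $\epsilon$ intact, so $\E_q(Y\mid\bX)=\E_\epsilon[f(\bbeta_0'\bX;\epsilon)]=:h(\bbeta_0'\bX)$ depends on $\bX$ only through $\bbeta_0'\bX$; and since $Y^*_q=1(S\ge\deltabar_q)$ is a function of $S=g(\balpha_0'\bX;\epsilon^*)$ with $\epsilon^*\ind\bX$, $\E_q(Y^*_q\mid\bX)=:h^*(\balpha_0'\bX)$ depends on $\bX$ only through $\balpha_0'\bX$; in particular $\E_q[h(\bbeta_0'\bX)]=p_q$ and $\E_q[h^*(\balpha_0'\bX)]=p_q^*$.

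I expect the crux to be checking that Assumption~\ref{P2:dlc_assmpn}, which is posed for the \emph{marginal} law of $\bX$, survives conditioning on $\{S\in\Isc_q\}$. This is exactly where the instrumental-variable structure noted in Remark~\ref{P2:assmpn_remark} pays off: because $S=g(\balpha_0'\bX;\epsilon^*)$ with $\epsilon^*\ind\bX$, the factor $\P(S\in\Isc_q\mid\bX)$ is a measurable function of $\balpha_0'\bX$ alone, hence so is the Radon--Nikodym derivative of $\P_q$ with respect to $\P$ restricted to $\bX$. Conditioning on $(\balpha_0'\bX,\bbeta_0'\bX)$ --- or on $\balpha_0'\bX$ --- makes this density factor constant, so the conditional law of $\bX$ given $(\balpha_0'\bX,\bbeta_0'\bX)$, and given $\balpha_0'\bX$, is the same under $\P_q$ as under $\P$. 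Consequently \eqref{P2:lin_condn_eqn_1} and \eqref{P2:lin_condn_eqn_2} hold verbatim under $\P_q$ with the \emph{same} constants $c_{\bv},a_{\bv},b_{\bv},\overline{c},\overline{a}$, and iterating them yields $\E_q(\bv'\bX\mid\balpha_0'\bX)=(\mathrm{const})+(a_{\bv}+b_{\bv}\overline{a})\,\balpha_0'\bX$.

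To conclude the first identity, I would set $a_q=a_{\bbetabar_q}$, $b_q=b_{\bbetabar_q}$, $\bw=\bbetabar_q-a_q\balpha_0-b_q\bbeta_0$ and $Z=\bw'\bX$. By \eqref{P2:lin_condn_eqn_1} at $\bv=\bbetabar_q$ (valid under $\P_q$ by the previous step), $\E_q(Z\mid\balpha_0'\bX,\bbeta_0'\bX)=c_{\bbetabar_q}$ is constant, so $Z$ is $\P_q$-uncorrelated with every function of $(\balpha_0'\bX,\bbeta_0'\bX)$; applying this to $\balpha_0'\bX$, to $\bbeta_0'\bX$, and (using $\E_q(Y\mid\bX)=h(\bbeta_0'\bX)$) to $h(\bbeta_0'\bX)$ gives $\bw'\bSigma_q\balpha_0=\bw'\bSigma_q\bbeta_0=0$ and $\mbox{Cov}_q(Z,Y)=c_{\bbetabar_q}\,\E_q[h(\bbeta_0'\bX)-p_q]=0$. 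Hence
\[
\bw'\bSigma_q\bw=\bw'\bSigma_q\bbetabar_q=\bw'\,\mbox{Cov}_q(\bX,Y)=\mbox{Cov}_q(Z,Y)=0,
\]
using the two vanishing covariances in the first equality and the normal equation in the second, and $\bSigma_q\succ0$ forces $\bw=\bzero$, i.e.\ \eqref{P2:thm1_eqn1}. The identity \eqref{P2:thm1_eqn2} is the same argument with a single index: with $a_q^*=a_{\balphabar_q}+b_{\balphabar_q}\overline{a}$, $\bw^*=\balphabar_q-a_q^*\balpha_0$ and $Z^*=(\bw^*)'\bX$, the relation $\E_q(\balphabar_q'\bX\mid\balpha_0'\bX)=(\mathrm{const})+a_q^*\,\balpha_0'\bX$ obtained above makes $\E_q(Z^*\mid\balpha_0'\bX)$ constant, so $(\bw^*)'\bSigma_q\balpha_0=0$ and $\mbox{Cov}_q(Z^*,Y^*_q)=\E_q[Z^*(h^*(\balpha_0'\bX)-p_q^*)]=0$, whence $(\bw^*)'\bSigma_q\bw^*=(\bw^*)'\bSigma_q\balphabar_q=\mbox{Cov}_q(Z^*,Y^*_q)=0$ and $\bw^*=\bzero$. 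Apart from the measure-change step in the second paragraph, everything is bookkeeping with iterated conditional expectations.
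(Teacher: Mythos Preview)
Your proof is correct and takes a genuinely different route from the paper's. The paper runs the classical Duan--Li Jensen-inequality trick: it conditions on the enlarged $\sigma$-field $\sigma(\balpha_0'\bX,\bbeta_0'\bX,\epsilon,\epsilon^*)$, which simultaneously (a) makes $S$ (and hence the event $\{S\in\Isc_q\}$) and $Y$ deterministic, so that $\E_q(\cdot\mid\cdots)=\E(\cdot\mid\cdots)$, and (b) preserves \eqref{P2:lin_condn_eqn_1} because $(\epsilon,\epsilon^*)\ind\bX$; then $\L_q(\bv)\ge\L_q(a_{\bv}\balpha_0+b_{\bv}\bbeta_0)$ for every $\bv$, and uniqueness of the minimizer forces $\bbetabar_q$ to lie in $\mathrm{span}\{\balpha_0,\bbeta_0\}$ with the stated coefficients. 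You instead work purely at the level of the normal equations $\bSigma_q\bbetabar_q=\mathrm{Cov}_q(\bX,Y)$, justify the transfer of Assumption~\ref{P2:dlc_assmpn} to $\P_q$ by the clean Radon--Nikodym observation that $d\P_q^{\bX}/d\P^{\bX}$ is $\sigma(\balpha_0'\bX)$-measurable, and then kill the residual $\bw$ via $\bw'\bSigma_q\bw=0$. The paper's argument is loss-agnostic (it extends to any convex loss with a unique minimizer, as the paper itself remarks), whereas your normal-equation route is specific to squared loss but is more transparent about \emph{why} a condition on the marginal design distribution suffices: the measure change is absorbed by the very index $\balpha_0'\bX$ you are already conditioning on.
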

Thus, a simple minimization of the empirical squared loss for $Y^*_q$ based on $\Usc_{\nq}$ would only recover the direction of $\balpha_0$, and \emph{not} that of $\bbeta_0$. This makes sense since $(Y^*_q \mid \bX, S \in \Isc_q)$ follows a SIM with parameter $\balpha_0$ after all. On the other hand, (\ref{P2:thm1_eqn1}) shows that even if $Y$ were observed, the estimator obtained from a simple minimization of the empirical squared loss for $Y$ based on $\Fscr_{\nq}$ would only recover the direction of $\bbetabar_q$, and not the $\bbeta_0$ direction itself. This is largely due to the restriction of $S \in \Isc_q$, which leads to the distribution of $\bX \given S \in \Isc_q$, the underlying design distribution, to depend on $\balpha_0$ and makes the conventional linearity condition, requiring $\E_q(\bv'\bx \given \bbeta_0'\bX)$ to be linear in $\bbeta_0'\bX$, unlikely to hold.

The proof of Theorem \ref{P2:THM_1} can be found in Appendix \ref{P2:pf_thm1}.
Theorem \ref{P2:THM_1} and the subsequent discussions above actually continue to hold \emph{even if} the squared loss in (\ref{P2:y_sqloss})-(\ref{P2:s_sqloss}) is replaced by any other convex loss provided the corresponding minimizers $\bbetabar_q$ and $\balphabar_q$ exist and are unique, although we focus only on the squared loss here which suffices for all our purposes.
%
Theorem \ref{P2:THM_1}, in its generality, therefore shows that some further assumptions on the \emph{structure} of $\bbeta_0$ and $\balpha_0$ are clearly needed in order to estimate $\bbeta_0$ based on $\Usc_{\nq}$ (or even $\Fscr_{\nq}$).

\def\Yhat{\widetilde{Y}}
\def\bXhat{\widetilde{\bX}}

\section{The Unsupervised LASSO Estimator}\label{P2:ULASSO}

We next demonstrate that if $\bbeta_0$ additionally satisfies some (structured) sparsity assumptions, then it is possible to recover $\bbeta_0$ from $\Usc_{\nq}$ based on our proposed $\ULASSO$ estimator obtained via an $L_1$-penalized least squares regression of $Y^*_q$ on $\bX$.
Note that our motivations for enforcing sparsity, mainly in the light of Theorem \ref{P2:THM_1}, are however quite \emph{different} from those typical in the high dimensional statistics literature, although even under our setting, it may still help if $\bX$ is high dimensional (compared to the effective sample size $\nq$). Lastly, it is also worth noting that sparsity is a scale invariant criteria and hence, fits well into our SIM based framework in (\ref{P2:y_model})-(\ref{P2:s_model}) with $\bbeta_0$ identifiable only upto scalar multiples.

\subsection{The Estimator}
Let $\bXbar_{\nq}$ and $\Ybar^*_{\nq}$ denote the sample means of $\bX$ and $Y_{q}^*$ in $\Usc_{\nq}$ respectively, and let $\Yhat_{q,i}^* = Y_{q,i}^* - \Ybar^*_{\nq}$, $\bXhat_{q,i} = \bX_i - \bXbar_{\nq}$, and $\widehat{\bSigma}_q = {\nq}^{-1} \sum_{i=1}^{\nq} \bXhat_{q,i} \bXhat_{q,i}'$. For any $\bbeta, \bv \in \R^p$,  define: {
\begin{align}
& \Lsc_{\nq}(\Usc_{\nq};\bbeta) = \frac{1}{\nq} \sum_{i=1}^{\nq} (\Yhat_{q,i}^* - \bbeta'\bXhat_{q,i})^2, \;\;
 \bT_{\nq}(\bbeta) = \frac{1}{\nq} \sum_{i=1}^{\nq} \bXhat_{q,i}(\Yhat^*_{q,i}- \bbeta'\bXhat_{q,i}), \label{P2:esql}
\end{align}}
and write $\T_{\nq} \equiv \bT_{\nq}(\bbetabar_q)$. The centering in (\ref{P2:esql}) allows us to remove the nuisance intercept parameter.  Assuming that $\bbeta_0$ is indeed sparse, we then propose to estimate the $\bbeta_0$ direction based on the Unsupervised LASSO
($\ULASSO$)  estimator, defined as follows:
\begin{equation}
\bbetahat_{\nq}(\lambda)  \; \equiv \;  \bbetahat_{\nq}(\lambda;\Usc_{\nq}) \; = \; \underset{\bbeta\in \mathbb{R}^p}{\mbox{arg\;min}} \left\{\Lsc_{\nq}(\Usc_{\nq};\bbeta) + \lambda \|\bbeta\|_1\right\}, \label{P2:ULASSO_est}
\end{equation} where $\lambda \geq 0$ denotes the tuning parameter controlling the extent of the $L_1$ penalization. Below we study finite sample properties of $\bbetahat_{\nq}(\lambda)$ in terms of deterministic deviation bounds, followed by probabilistic bounds regarding performance guarantees and convergence rates.


\subsection{Theoretical Properties}
We first provide finite sample deterministic deviation bounds for the proposed $\ULASSO$ estimator in Theorem \ref{P2:THM_2} below, under Assumptions \ref{P2:strngconv_assmpn} and \ref{P2:sparsity_assmpn} given in Appendix \ref{P2:assumption}. 
The proof of Theorem \ref{P2:THM_2} can be found in Appendix \ref{P2:pf_thm2}.

\begin{theorem}\label{P2:THM_2}
Under Assumption \ref{P2:strngconv_assmpn} and condition C1 of Assumption \ref{P2:sparsity_assmpn}, for any realization of $\Usc_{\nq}$ and any choice of $\lambda$ in (\ref{P2:ULASSO_est}) such that $\lambda \geq 4\| \T_{\nq} \|_{\infty}$, with $\T_{\nq}$ as defined in (\ref{P2:esql}), and
$\lambda$ satisfies condition C2 of Assumption \ref{P2:sparsity_assmpn}, the $\ULASSO$ estimator $\bbetahat_{\nq}(\lambda)$ satisfies:
\begin{equation}
\left\| \bbetahat_{\nq}(\lambda) - b_q \bbeta_0 \right\|_2 \; \leq \;  \frac{\lambda}{\kappa_q} \left[ \left\{ 9 s_{\bbeta_0} + d_1 (\balpha_0, \bbeta_0) \right\}^{\half} + d_2(\balpha_0, \bbeta_0) \right], \quad \label{P2:thm2_dev_bound}
\end{equation}
where $b_q$ is as in (\ref{P2:thm1_eqn1}), $s_{\bbeta_0} = \| \bbeta_0 \|_0$ , $\kappa_q$ is a `restricted strong convexity' constant defined explicitly in Assumption \ref{P2:strngconv_assmpn}, and $d_1(\balpha_0, \bbeta_0)$, $d_2(\balpha_0, \bbeta_0) > 0$ are constants 
given by: 
\begin{align*}
& d_1(\balpha_0,\bbeta_0) \; =   \; 4 \dbar(\balpha_0, \bbeta_0) \left\| \Pi^c_{\bbeta_0} (\balpha_0) \right\|_1 \; \mbox{and} \;\; d_2(\balpha_0,\bbeta_0)  \; =  \; \dbar(\balpha_0, \bbeta_0) \left\| \balpha_0 \right\|_2, \\ 
& \nonumber \mbox{where} \;\; \dbar(\balpha_0,\bbeta_0)  \; =  \; 4 \left\| \Pibac \right\|_1 + 3 s^{\half}_{\bbeta_0} \Cmax(\balpha_0, \bbeta_0)\Cmin(\balpha_0,\bbeta_0)^{-2},
\end{align*}
with $\Cmin(\balpha_0,\bbeta_0) = \min \{ |\balpha_{0 [j]}| : j \in \Asc^c(\bbeta_0) \cap \Asc(\balpha_0)\} > 0$ and $\Cmax(\balpha_0,\bbeta_0)$ $= \max \{ |\balpha_{0 [j]}|$ $: j  \in \Asc^c(\bbeta_0) \cap \Asc_0 (\balpha_0) \} > 0$. 
\end{theorem}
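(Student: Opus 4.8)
The strategy is the standard primal-analysis route for $\ell_1$-penalized least squares (the "basic inequality" / cone argument of Bickel--Ritov--Tsybakov and Negahban et al.), but carried out relative to the \emph{oracle target} $b_q\bbeta_0$ rather than the least-squares population minimizer $\bbetabar_q$. Note that by Theorem \ref{P2:THM_1}, $\bbetabar_q = a_q\balpha_0 + b_q\bbeta_0$, so $\bbetabar_q$ and $b_q\bbeta_0$ differ precisely by $a_q\balpha_0$; since $\balpha_0$ need not be supported on $\Asc(\bbeta_0)$, this discrepancy is the source of the extra terms $d_1,d_2$ in (\ref{P2:thm2_dev_bound}). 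First I would write the KKT/optimality condition for $\bbetahat_{\nq}(\lambda)$ in the form $\bT_{\nq}(\bbetahat_{\nq}(\lambda)) = \tfrac{\lambda}{2}\bz$ for some $\bz \in \partial\|\bbetahat_{\nq}(\lambda)\|_1$, and combine it with the decomposition $\bT_{\nq}(\bbeta) = \bT_{\nq}(\bbetabar_q) - \bSigmahat_q(\bbeta - \bbetabar_q) = \T_{\nq} - \bSigmahat_q(\bbeta-\bbetabar_q)$ (immediate from the definition in (\ref{P2:esql}), since $\bT_{\nq}$ is affine in $\bbeta$). Denoting the error $\bDelta = \bbetahat_{\nq}(\lambda) - b_q\bbeta_0$ and the bias $\bb = \bbetabar_q - b_q\bbeta_0 = a_q\balpha_0$, the optimality condition after taking an inner product with $\bDelta$ yields the basic inequality
\begin{equation*}
\bDelta'\bSigmahat_q\bDelta \; \leq \; \|\T_{\nq}\|_\infty\|\bDelta\|_1 + |\bb'\bSigmahat_q\bDelta| + \tfrac{\lambda}{2}\left(\|b_q\bbeta_0\|_1 - \|\bbetahat_{\nq}(\lambda)\|_1\right),
\end{equation*}
and the condition $\lambda \geq 4\|\T_{\nq}\|_\infty$ converts the first term into a $\tfrac{\lambda}{4}\|\bDelta\|_1$ contribution.

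Second, I would run the usual support-decomposition trick: split $\|\bDelta\|_1 = \|\Pi_{\bbeta_0}(\bDelta)\|_1 + \|\Pi_{\bbeta_0}^c(\bDelta)\|_1$, use $\|b_q\bbeta_0\|_1 - \|\bbetahat_{\nq}(\lambda)\|_1 \leq \|\Pi_{\bbeta_0}(\bDelta)\|_1 - \|\Pi_{\bbeta_0}^c(\bDelta)\|_1$ (triangle inequality on $\Asc(\bbeta_0)$ and exact equality off it, since $b_q\bbeta_0$ vanishes there), and handle the cross term $|\bb'\bSigmahat_q\bDelta| = |a_q|\cdot|\balpha_0'\bSigmahat_q\bDelta|$ — this is where the design-linearity-driven constants enter through $\dbar(\balpha_0,\bbeta_0)$, $\Cmin$, $\Cmax$ and the off-support mass $\|\Pi^c_{\bbeta_0}(\balpha_0)\|_1$. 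The combination should produce, on the one hand, a \emph{cone condition} of the form $\|\Pi_{\bbeta_0}^c(\bDelta)\|_1 \leq 3\|\Pi_{\bbeta_0}(\bDelta)\|_1 + (\text{explicit bias mass})$, which is exactly the hypothesis under which the restricted-strong-convexity constant $\kappa_q$ of Assumption \ref{P2:strngconv_assmpn} applies so that $\bDelta'\bSigmahat_q\bDelta \geq \kappa_q\|\bDelta\|_2^2$; and on the other hand, after invoking $\|\Pi_{\bbeta_0}(\bDelta)\|_1 \leq s_{\bbeta_0}^{1/2}\|\bDelta\|_2$ and condition C2 of Assumption \ref{P2:sparsity_assmpn} to absorb lower-order terms, an inequality of the shape $\kappa_q\|\bDelta\|_2^2 \lesssim \lambda\big(s_{\bbeta_0}^{1/2} + \text{(bias)}^{1/2}\big)\|\bDelta\|_2 + \lambda\cdot\text{(bias)}$. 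Solving this quadratic in $\|\bDelta\|_2$ and bookkeeping the constants into $9s_{\bbeta_0} + d_1(\balpha_0,\bbeta_0)$ under the square root and $d_2(\balpha_0,\bbeta_0)$ outside gives (\ref{P2:thm2_dev_bound}).

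The main obstacle, and the only genuinely nonstandard part, is controlling the bias cross term $|a_q|\,|\balpha_0'\bSigmahat_q\bDelta|$ and translating it into the precise constants claimed. The naive bound $|\balpha_0'\bSigmahat_q\bDelta| \leq \|\balpha_0\|_1\|\bSigmahat_q\bDelta\|_\infty$ is too crude; instead one must decompose $\balpha_0 = \Pi_{\bbeta_0}(\balpha_0) + \Pi^c_{\bbeta_0}(\balpha_0)$, handle the on-support part via $s_{\bbeta_0}^{1/2}\|\bDelta\|_2$ together with the eigenvalue/conditioning quantities $\Cmin,\Cmax$, and keep the off-support part $\|\Pi^c_{\bbeta_0}(\balpha_0)\|_1$ explicit (it feeds $d_1$). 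One also needs, somewhere, a bound on $|a_q|$ — presumably $|a_q| \lesssim \dbar(\balpha_0,\bbeta_0)$ or a comparable estimate furnished by Assumption \ref{P2:strngconv_assmpn} or derived en route from Theorem \ref{P2:THM_1} and the linearity constants — and the interplay of this with $\|\balpha_0\|_2$ is what produces $d_2 = \dbar(\balpha_0,\bbeta_0)\|\balpha_0\|_2$. Getting every constant ($9$, the $4$'s, the $3s_{\bbeta_0}^{1/2}$) to land exactly as stated is tedious but mechanical once the cone condition and the RSC inequality are in place; I would organize it so that all "bias" quantities are collected into a single $\dbar(\balpha_0,\bbeta_0)$ early on and tracked symbolically to the end.
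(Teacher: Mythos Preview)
Your overall architecture---basic inequality, cone condition, RSC, then solve a quadratic---is the standard LASSO playbook, but you have misidentified where the real work happens, and as a result there is a genuine gap.

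The paper does \emph{not} target $b_q\bbeta_0$ directly. It applies Negahban et al.'s Theorem 1 with target $\bbetabar_q$ (the population least-squares minimizer under $\P_q$), obtaining the ``approximately sparse'' bound
\[
\|\bbetahat_{\nq}(\lambda)-\bbetabar_q\|_2^2 \;\le\; 9s_{\bbeta_0}\frac{\lambda^2}{\kappa_q^2} + 4|a_q|\frac{\lambda}{\kappa_q}\|\Pi^c_{\bbeta_0}(\balpha_0)\|_1,
\]
where the second term comes from $\|\Pi^c_{\bbeta_0}(\bbetabar_q)\|_1 = |a_q|\,\|\Pi^c_{\bbeta_0}(\balpha_0)\|_1$. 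The crucial step---and the one you are missing---is how $|a_q|$ gets bounded. You write that a bound on $|a_q|$ is ``presumably'' furnished by Assumption \ref{P2:strngconv_assmpn} or Theorem \ref{P2:THM_1}, but neither of these gives any control on $a_q$: it is a population constant depending on the unknown joint law of $(Y,S,\bX)$. The mechanism is condition C2. Because the Negahban bound is \emph{deterministic} (valid for every realization), and C2 guarantees the existence of some realization $u_{\nq}$ for which $\bbetahat_{\nq[j]}(\lambda;u_{\nq})=0$ at a coordinate $j\in\Asc^c(\bbeta_0)\cap\Asc(\balpha_0)$, one has $\bbetabar_{q[j]}=a_q\balpha_{0[j]}$ and hence
\[
|a_q|^2|\balpha_{0[j]}|^2 \;=\; |\bbetahat_{\nq[j]}(\lambda;u_{\nq})-\bbetabar_{q[j]}|^2 \;\le\; \|\bbetahat_{\nq}(\lambda;u_{\nq})-\bbetabar_q\|_2^2 \;\le\; 9s_{\bbeta_0}\frac{\lambda^2}{\kappa_q^2} + 4|a_q|\frac{\lambda}{\kappa_q}\|\Pi^c_{\bbeta_0}(\balpha_0)\|_1.
\]
This is a quadratic inequality in $|a_q|$; solving it and bounding $|\balpha_{0[j]}|$ via $\Cmin,\Cmax$ yields $|a_q|\le(\lambda/\kappa_q)\,\dbar(\balpha_0,\bbeta_0)$. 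The final bound then follows from a triangle inequality $\|\bbetahat-b_q\bbeta_0\|_2\le\|\bbetahat-\bbetabar_q\|_2+|a_q|\|\balpha_0\|_2$ and back-substitution.

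Two further points. First, your direct-target route produces a cross term $|a_q|\,|\balpha_0'\bSigmahat_q\bDelta|$ that you cannot control without either (i) a bound on $|a_q|$ (which, as above, you do not have a plan for) or (ii) operator-norm control on $\bSigmahat_q$, which is nowhere assumed. Second, you describe $\Cmin,\Cmax$ as ``eigenvalue/conditioning quantities''; they are not---they are simply the smallest and largest absolute coordinate values of $\balpha_0$ on $\Asc^c(\bbeta_0)\cap\Asc(\balpha_0)$, and they enter precisely through the coordinate-extraction step above, not through any spectral argument.
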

\begin{remark}\label{P2:thm1_remark}\emph{
Assumption \ref{P2:strngconv_assmpn} is a standard restricted eigenvalue assumption and Assumption \ref{P2:sparsity_assmpn} imposes some mild restrictions on the sparsity patterns of $\bbeta_0$, $\balpha_0$ and $\bbetahat_{\nq}(\lambda, \Usc_{\nq})$.
Apart from the universal constants and the strong convexity constant $\kappa_q$, the bound primarily depends on $\lambda$ whose order would determine the convergence rate of $\bbetahat_{\nq}(\lambda)$. The random lower bound $4 \| \T_{\nq} \|_{\infty}$ characterizing the choice of $\lambda$ in Theorem \ref{P2:THM_2} therefore becomes the quantity of primary interest. If we can find a non-random sequence $a_{\nq} \to 0$ at a satisfactorily fast enough rate, and $a_{\nq}$ can be shown to upper bound $\| \T_{\nq} \|_{\infty}$ with high probability, then a choice of $\lambda = 4a_{\nq}$, as long as it satisfies the additional conditions required for Theorem \ref{P2:THM_2}, will guarantee the bound in (\ref{P2:thm2_dev_bound}) to hold with high probability at the rate of $O(a_{\nq}/\kappa_q)$.
}
\end{remark}

To characterize the probabilistic performance guarantees, we next study the behavior of the lower bound $4\|\T_{\nq}\|_{\infty}$ of $\lambda$, as assumed in Theorem \ref{P2:THM_2}, under the assumption that the distribution of $\bX \given S \in \Isc_q$ follows a subgaussian distribution so that it has sufficiently well behaved tails, as detailed in Assumption \ref{P2:subgaussian_defn} given in Appendix \ref{P2:assumption}. 
 Theorem \ref{P2:COR_1} below provides a (sharp) probabilistic bound for $\|\T_{\nq}\|_{\infty}$ establishing its convergence rates. A more general and detailed version of this result is available in Theorem \ref{P2:THM_3} (stated in Appendix \ref{P2:thm3.2_proof}), 
 where we obtain explicit finite sample tail bounds for $\|\T_{\nq}\|_{\infty}$. The proofs of Theorems \ref{P2:COR_1} and \ref{P2:THM_3} can be both found in Appendix \ref{P2:thm3.2_proof}.


\begin{theorem}\label{P2:COR_1} Let $\bc=(c_1, \hdots, c_6)' > 0$ be any set of universal positive
constants such that $\mbox{max}\;(c_1, c_2) > 1$ and $c_4, c_5 > 1$. Let $c_0 = (c_4 + c_5 c_6)$ and assume WLOG that $\pi_q < 1/2$. 
Then, under Assumption \ref{P2:subgaussian_defn}, we have: with probability at least
$$1 - \left(\frac{\pi_q}{1-\pi_q}\right)^{c_3} - \frac{2}{p^{(c_1-1)}\nq^{(c_2-1)}} - \frac{2}{p^{(c_4-1)}} - \frac{2}{p^{(c_5-1)}} - \frac{2}{p^{c_6}},$$
\begin{equation}\label{P2:thm3_probbound_2}
\left\|\T_{\nq}\right\|_{\infty} \; \leq \; a_{\nq} \; \equiv \; a_{\nq}(\bc),
\end{equation}
where, with the constants $(\sigma_q, \gamma_q) > 0$ defined explicitly in Theorem \ref{P2:THM_3}, $a_{\nq}(\bc)$ is given by:
$$ \sigma_q \sqrt{2 \log (p^{c_1} \nq^{c_2 })} \left\{\pi_q + \sqrt{\frac{(1 -2 \pi_q)c_3}{\nq}}\right\} + 2\sigma_q\gamma_q \left( \sqrt{8 c_4\frac{\log p}{\nq}} + \frac{\log p}{\nq} c_0 \right).$$
\end{theorem}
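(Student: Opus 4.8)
The plan is to treat $\T_{\nq} = \bT_{\nq}(\bbetabar_q)$ as the empirical (negative half-)gradient of the squared loss evaluated at the population minimizer $\bbetabar_q$, and to split it into an ``oracle'' piece that would arise if the true label $Y$ were observed, plus a ``surrogacy error'' piece that absorbs the replacement of $Y$ by $Y^*_q$. Writing $\Delta_i = Y^*_{q,i} - Y_i \in \{-1,0,1\}$ and letting $r_i = Y_i - p_q - \bbetabar_q'(\bX_i - \bmu_q)$ be the population residual of the $Y$-regression, a short algebraic rearrangement --- using $\sum_{i=1}^{\nq}\bXhat_{q,i} = \bzero$ to kill every empirical centering --- yields $\T_{\nq} = \mathbf{A}_{\nq} + \mathbf{B}_{\nq}$ with $\mathbf{A}_{\nq} = \nq^{-1}\sum_{i=1}^{\nq}\bXhat_{q,i}r_i$ and $\mathbf{B}_{\nq} = \nq^{-1}\sum_{i=1}^{\nq}\bXhat_{q,i}\Delta_i$. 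Since $\bbetabar_q$ minimizes $\L_q(\cdot)$ in (\ref{P2:y_sqloss}), its first-order condition gives $\E_q[(\bX-\bmu_q)r]=\bzero$, so $\mathbf{A}_{\nq}$ is, up to a lower-order sample-centering correction, a mean-zero average; and $\mathbf{B}_{\nq}$ is where $\pi_q$ enters, because by Assumption \ref{P2:surr_assmpn} the event $\{\Delta_i\neq 0\}$ has $\P_q$-probability $\pi_q$.

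For the surrogacy term I would use the crude but already tight bound $\|\mathbf{B}_{\nq}\|_\infty \leq (M/\nq)\max_{1\leq i\leq\nq,\,1\leq j\leq p}|\bX_{i,[j]} - \bXbar_{\nq,[j]}|$, where $M = \#\{i : \Delta_i \neq 0\}$. Because the indicators $1(\Delta_i \neq 0)$ are i.i.d.\ $\mathrm{Bernoulli}(\pi_q)$, $M$ is $\mathrm{Binomial}(\nq,\pi_q)$, and a sharp Chernoff tail bound for the binomial gives $M/\nq \leq \pi_q + \sqrt{(1-2\pi_q)c_3/\nq}$ off an event of probability at most $(\pi_q/(1-\pi_q))^{c_3}$ (this is where $\pi_q<1/2$ is used). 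A subgaussian maximal inequality under Assumption \ref{P2:subgaussian_defn}, with a union bound over the $p\nq$ pairs $(i,j)$ and the $\bXbar_{\nq}$-centering folded in, controls $\max_{i,j}|\bX_{i,[j]} - \bXbar_{\nq,[j]}|$ by $\sigma_q\sqrt{2\log(p^{c_1}\nq^{c_2})}$ off an event of probability at most $2p^{-(c_1-1)}\nq^{-(c_2-1)}$ (using $\max(c_1,c_2)>1$). Multiplying the two bounds produces exactly the first summand of $a_{\nq}(\bc)$.

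For the oracle term I would first write $\mathbf{A}_{\nq} = \nq^{-1}\sum_{i=1}^{\nq}(\bX_i-\bmu_q)r_i - (\bXbar_{\nq}-\bmu_q)\bar{r}_{\nq}$. Under Assumption \ref{P2:subgaussian_defn}, each coordinate of $\bX - \bmu_q$ is subgaussian with parameter $\sigma_q$ and, since $Y \in\{0,1\}$ and $\bbetabar_q'\bX$ is subgaussian, the residual $r$ is subgaussian (its parameter being absorbed into $\gamma_q$); hence each coordinate of $(\bX-\bmu_q)r$ is subexponential of order $\sigma_q\gamma_q$, and the normal equation $\E_q[(\bX-\bmu_q)r]=\bzero$ makes it centered. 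A Bernstein inequality plus a union bound over the $p$ coordinates bounds $\|\nq^{-1}\sum_i(\bX_i-\bmu_q)r_i\|_\infty$ by a term of the order $\sigma_q\gamma_q(\sqrt{c_4\log p/\nq} + c_4\log p/\nq)$ off an event of probability $\leq 2p^{-(c_4-1)}$; the cross term $(\bXbar_{\nq}-\bmu_q)\bar{r}_{\nq}$ is a product of two quantities that are individually $O_P(\nq^{-1/2})$ --- controlled by a subgaussian concentration bound for $\bXbar_{\nq}-\bmu_q$ at cost $2p^{-(c_5-1)}$ and one for $\bar{r}_{\nq}$ at cost $2p^{-c_6}$ --- hence of the lower order $\log p/\nq$, which is where the exponent $c_0 = c_4 + c_5 c_6$ comes from. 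Collecting terms gives $\|\mathbf{A}_{\nq}\|_\infty \leq 2\sigma_q\gamma_q(\sqrt{8c_4\log p/\nq} + (\log p/\nq)c_0)$, the second summand of $a_{\nq}(\bc)$, off an event of probability $\leq 2p^{-(c_4-1)} + 2p^{-(c_5-1)} + 2p^{-c_6}$.

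A final union bound over these five exceptional events then gives (\ref{P2:thm3_probbound_2}) with the stated probability, since $\|\T_{\nq}\|_\infty \leq \|\mathbf{A}_{\nq}\|_\infty + \|\mathbf{B}_{\nq}\|_\infty$; keeping all constants explicit throughout the Bernstein, Chernoff and maximal bounds (rather than absorbing lower-order terms) yields the finite-sample version stated as Theorem \ref{P2:THM_3}, of which Theorem \ref{P2:COR_1} is the streamlined reading. I expect the main obstacle to be the subexponential concentration of $\nq^{-1}\sum_i(\bX_i-\bmu_q)r_i$: one must bound the tail of the residual $r$ uniformly even though $\bbetabar_q$ is a $q$-dependent, possibly high-dimensional object, and then track how the resulting subexponential parameter interacts with the subgaussian constants so that everything collapses into the single pair $(\sigma_q,\gamma_q)$; a more delicate but minor point is extracting the precise $(\pi_q/(1-\pi_q))^{c_3}$ probability together with the additive $\sqrt{(1-2\pi_q)c_3/\nq}$ correction from the binomial upper tail.
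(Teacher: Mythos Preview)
Your proposal is correct and follows essentially the same route as the paper: your decomposition $\T_{\nq}=\mathbf{A}_{\nq}+\mathbf{B}_{\nq}$, with $\mathbf{A}_{\nq}$ further split into a population-centered sum and a sample-centering cross term, is exactly the paper's three-term decomposition $\T_{\nq}=\Tnqo+\Tnqto-\Tnqtt$, and your tools (subgaussian maximal bound for $\mathbf{B}_{\nq}$, Bernstein for the subexponential main sum, product of two subgaussian averages for the cross term) coincide with theirs. The one point worth naming precisely is that the exact probability $(\pi_q/(1-\pi_q))^{c_3}$ arises because the paper uses the sharp subgaussian constant $\piqtil^2=(1/2-\pi_q)/\log\{(1-\pi_q)/\pi_q\}$ for a centered Bernoulli (their Lemma \ref{P2:lemma_2}), which is the ``sharp Chernoff bound'' you allude to but is not the standard Hoeffding constant $1/4$.
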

Theorem \ref{P2:COR_1} implies that for some suitably chosen constants $\bc$, setting $\lambda = 4 a_{\nq}$ ensures that the condition $\lambda \geq 4 \| \T_{\nq} \|_{\infty}$, required for Theorem \ref{P2:THM_2}, holds with high probability. Consequently, with  $\lambda = 4 a_{\nq}$, as long as it satisfies the other conditions required for Theorem \ref{P2:THM_2}, the deviation bound (\ref{P2:thm2_dev_bound}) holds with high probability as well, thereby ensuring a  convergence rate of $O(a_{\nq}/\kappa_q)$ for $\bbetahat_{\nq}(\lambda)$ as an estimator of the $\bbeta_0$ direction.

\begin{remark}[Convergence Rates and Other Implications]\label{P2:thm2_remark}
\emph{
Theorem \ref{P2:COR_1} applies generally to \emph{any} $\pi_q$, not necessarily behaving as in (\ref{P2:misclass_error}). Of course, the bound is most useful if $\pi_q$ is polynomial in $q$. 
Turning to the convergence rate of $a_{\nq}$ itself, we note that the (dominating) polynomial part of the rate is determined primarily by $\pi_q$ and $\nq^{-1/2}$, which behave antagonistically with respect to each other as $q$ varies, so that the rate exhibits an interesting phenomenon similar to a \emph{variance-bias tradeoff}. The misclassification error $\pi_q$, expected to increase as $q$ increases, can be viewed as a `bias' term, while $\nq^{-1/2}$, which decreases as $q$ increases, corresponds to the usual variance (rather, standard deviation) term. In particular, with $\pi_q = O(q^{\nu})$ for some given $\nu > 0$, as in (\ref{P2:misclass_error}), and $q = O(N^{-\eta})$ for some unknown $\eta \in (0,1)$, the combined rate: $(\pi_q + \nq^{-1/2})$ $\equiv O\{N^{-\nu\eta} + N^{-(1-\eta)/2}\}$ can be minimized with respect to 
$\eta$, leading to an optimal choice given by: $\eta_{opt} = 1/(2\nu +1)$, and a corresponding optimal order of $q$ given by: $q_{opt} =  O\{N^{-1/(2\nu +1)}\}$. For $q = q_{opt}$, $\pi_q$ and $\nq^{-1/2}$ have the same order, so that the optimal order of the (polynomial part of the) convergence rate of $a_{\nq}$ is given by: $(a_{\nq})_{opt} = O\{N^{-\nu/(2\nu + 1)}\}$.
}
\end{remark}

\subsection{Practical Choice of the Tuning Parameter}
The theoretical choice of $\lambda=4 a_{\nq}$, which is of order $O[\{\log (\nq p)\}^{1/2}$ $(\pi_q + \nq^{-1/2})]$, is not quite feasible for implementing $\bbetahat_{\nq}(\lambda)$ in practice, since $a_{\nq}$ involves $\pi_q$, as well as the constants $\sigma_q$ and $\gamma_q$, which are all typically unknown. To this end, we note that owing to the additional $\pi_q$ term, as well as the $\log(\nq p)$ term, in $a_{\nq}$, the order of the appropriate choice of $\lambda$ under our setting is expected to be slightly \emph{higher} than the standard order of $O [\{ (\log  p)/\nq\}^{1/2} ]$ for typical $L_1$-penalized estimation. This is because sparser solutions are favored in the current setting. Motivated by this intuition, we propose to choose $\lambda$ in practice through minimizing a criteria similar to the Bayes Information Criteria (BIC), $ \mbox{BIC}(\lambda)$, defined as follows:
 \begin{equation}
\mbox{BIC}(\lambda) \; \equiv \; \mbox{BIC}\{\lambda; \bbetahat_{\nq}(\lambda); \Usc_{\nq}\} \; = \; \Lsc_{\nq}\{\Usc_{\nq}; \bbetahat_{\nq}(\lambda)\} + \frac{\log(\nq)}{\nq} \| \bbetahat_{\nq}(\lambda)\|_0 . \label{P2:bic_criteria}
\end{equation}
Compared to other standard tuning parameter selection criteria, such as Akaike Information Criteria (AIC) and cross-validation (CV), BIC tends to select sparser solutions which serves well for our purpose. While a detailed theoretical analysis is beyond the scope of this paper, we find that, based on our numerical studies, the above criteria works quite well in practice.


\section{Numerical Results: Simulation Studies}\label{P2:sim}

We conducted extensive simulation studies to examine the performance of the U$_{\mbox{\tiny LASSO}}$ estimator with $N = 100,000$, and compare it to that of a supervised logistic LASSO (S$_{\mbox{\tiny LASSO}}$) estimator obtained by fitting $L_1$-penalized logistic regressions \citep[Chapt. 4.4]{Hastie_2008} to
$n =$ $300$ and $500$ sized labeled data. While other supervised estimators were also considered, in the interest of space, we only report the results for S$_{\mbox{\tiny LASSO}}$ which is expected to have the most competitive performance as it exploits sparsity as well as knowledge of the true link function.  We consider $p = 20, 50$, and $q= 0.02, 0.04$. We generate $\bX \sim \Nsc_p (\bzero, \bSigma)$, where $\bSigma \equiv \bSigma_{\rho} = (\rho^{|i-j|})_{i=1,...p}^{j=1,...,p}$ with $\rho = 0$ or $0.2$. Given $\bX$, we generate $S$ and $Y$ as:
\begin{align*}
& S \; = \; \balpha_0'\bX + \epsilon^* \quad \mbox{with} \;\; \epsilon^* \sim \Nsc_1(0,1) \;\; \mbox{and} \;\; \epsilon \ind \bX, \quad \mbox{and} \\
& Y \; = \; 1(\bbeta_0'\bX + \epsilon > 0)  \quad \mbox{with} \;\; \epsilon \sim \mbox{Logistic}(0,1) \;\; \mbox{and} \;\; \epsilon \ind (S,\bX),
\end{align*}
where $\bbeta_0 = (\bone'_{c_p}, 0.5*\bone'_{c_p}, \bzero'_{p-2c_p})'$, $c_p = \lfloor p^{\half} \rfloor$, $\balpha_0 = \bbeta_0 + \bxi/(\log  N)$ with $\{ \bxi_{[j]}\}_{j=1}^p$ being $p$ \emph{fixed} realizations from either (I) $\Nsc_1(3, 1)$, or (II) $\mbox{Uniform}(2, 5)$. Such choices of $\balpha_0$ ensure that $\balpha_0$ is `close' to $\bbeta_0$, and yet its deviations from $\bbeta_0$ are of order $O\{1/\log(N)\}$, so that they are not `too close'. See
Appendix \ref{P2:splcase} (and Remark \ref{P2:splcase_rem2} in particular) 
for further insights regarding the rationale behind such choices, as well as a detailed theoretical analysis, with discussions, for a commonly adopted class of models for $(Y,S,\bX')'$, including those used here. 
The results are summarized based on 500 simulated datasets for each configuration. 

For any estimator $\bbetatil$, we consider its normalized (in both length and sign) version $\bbetatil^{\dag}$ such that $\|\bbetatil^{\dag}\|_2 = 1$ and $\bbeta_0'\bSigma \bbetatil^{\dag} \geq 0$. We also use the $\balpha_0$ direction, $\balpha_0^{\dag} = \balpha_0/\|\balpha_0\|_2$, as a benchmark estimator\footnote{
$\balpha_0^{\dag}$ essentially represents a `baseline' estimator of $\bbeta_0^{\dag}$ and corresponds to the case where one chooses to use \emph{all} observations of $S$ from the full original data $\Usc_N^*$ (ignoring the fact that $S$ can lead to an effective surrogate of $Y$ only when it assumes extreme values) for estimating $\bbeta_0^{\dag}$ and for classifying $Y$.}
of the $\bbeta_0$ direction, $\bbeta_0^{\dag}=\bbeta_0/\|\bbeta_0\|_2$.  For the U$_{\mbox{\tiny LASSO}}$ estimator, the tuning parameter was selected using the BIC criteria defined in (\ref{P2:bic_criteria}). For the supervised estimator, the tuning parameter was selected using the appropriate BIC criteria based on the logistic loss. All penalized estimators were implemented using the \texttt{R} package  \texttt{`glmnet'}.

\begin{table}[!ht]
\centering
\centerline{(a) Relative Efficiency of $\ULASSO$}
\begin{tabular}{crc| rr|r| rr|r}\hline\hline
& & &\multicolumn{3}{c|}{$p=20$}    &\multicolumn{3}{c}{$p=50$}\\ \hline
\multicolumn{3}{c|}{Settings} &\multicolumn{2}{c|}{S$_{\mbox{\tiny LASSO}}$ } &\raisebox{-2ex}[0cm][0cm]{$\balpha_0^{\dag}$}   &\multicolumn{2}{c|}{S$_{\mbox{\tiny LASSO}}$ } &\raisebox{-2ex}[0cm][0cm]{$\balpha_0^{\dag}$} \\
& $\rho$ & $q$ &{$300$}&{$500$}   &&{$ 300$}&{$500$}  &{} \\ \hline
  & $0$ & $.02$    	& 4.16 & {2.22} & 5.60     & 6.89 & {3.16} & 6.88 \\
\raisebox{-2ex}[0cm][0cm]{(I)} & $0$ & $.04$   	& 8.05 & {4.03} & 11.25    & 6.07 & {2.89} & 6.96 \\
& $.2$ & $.02$ 			 & 3.36 & {1.78} & 4.41     & 11.63 & {5.99} & 17.00 \\
& $.2$ & $.04$ & 2.49 & {1.46} & 3.67     & 5.14  & {2.57} & 6.38  \\ \hline
& $0$ & $.02$   & 6.97 & {3.48} & 10.46    & 6.10 & {2.84}  & 8.58 \\
\raisebox{-2ex}[0cm][0cm]{(II)} & $0$ & $.04$   & 7.23 & {3.91} & 8.54     & 5.95 & {2.67}  & 8.39 \\
& $.2$ & $.02$ & 4.44 & {2.48} & 7.67     & 5.14 & {2.57}  & 7.03 \\
& $.2$ & $.04$ & 4.20 & {2.25} & 6.15     & 7.05 & {3.42}  & 10.79\\ \hline
\end{tabular}\vspace{.1in}

\centerline{(b) AUC}
\begin{tabular}{crc| c|cc|c|c| c|cc|c|c }\hline\hline
& & &\multicolumn{5}{c|}{$p=20$}    &\multicolumn{5}{c}{$p=50$}\\ \hline
\multicolumn{3}{c|}{Settings}  & \raisebox{-2ex}[0cm][0cm]{U$_{\mbox{\tiny LASSO}}$} &\multicolumn{2}{c|}{S$_{\mbox{\tiny LASSO}}$ } &\raisebox{-2ex}[0cm][0cm]{$\balpha_0$} &\raisebox{-2ex}[0cm][0cm]{$\bbeta_0$}
    & \raisebox{-2ex}[0cm][0cm]{U$_{\mbox{\tiny LASSO}}$} &\multicolumn{2}{c|}{S$_{\mbox{\tiny LASSO}}$ } &\raisebox{-2ex}[0cm][0cm]{$\balpha_0$} &\raisebox{-2ex}[0cm][0cm]{$\bbeta_0$}\\
&$\rho$ & $q$ &&{$300$}&{$500$}  &{} &{}    &{} &{$300$}&{$500$}  &{} &{} \\ \hline\hline
&  $0$ & $.02$   & {.88} & .86 & {.87} & .86 & {.88}     & {.91} & .88 & {.90} & .88 &{.92} \\
\raisebox{-2ex}[0cm][0cm]{(I)} & $0$ & $ .04$   & {.88} & .86 & {.87} & .86 & {.88}    & {.91} & .88 & {.90} & .88 & {.92} \\
& $.2$ & $.02$ & {.90} & .89 & {.90} & .88 & {.90}     & {.94} & .91 & {.93} & .89 & {.94} \\
& $.2$ & $.04$ & {.90} & .89 & {.90} & .88 & {.90}     & {.93} & .91 & {.93} & .90 & {.94}  \\ \hline
& $0$ & $.02$   & {.88} & .86 & {.87} & .85 & {.88}    & {.91} & .88 & {.91}  & .87 & {.92} \\
\raisebox{-2ex}[0cm][0cm]{(II)} & $0$ & $.04$   & {.88} & .86 & {.87} & .86 & {.88}    & {.91} & .88 & {.90}  & .87 & {.92} \\
& $.2$ & $.02$ & {.90} & .89 & {.90} & .87 & {.90}     & {.93} & .91 & {.93}  & .89 & {.94} \\
& $.2$ & $.04$ & {.90} & .89 & {.90} & .88 & {.90}     & {.93} & .91 & {.93}  & .88 & {.94} \\ \hline
\end{tabular}
\caption{ (a) Efficiency, with respect to empirical MSE, of the U$_{\mbox{\tiny LASSO}}$ relative to the S$_{\mbox{\tiny LASSO}}$ estimator, obtained by fitting a logistic LASSO to $n=300$ and $500$ sized labeled data, as well as $\balpha_0^{\dag} = \balpha_0/\|\balpha_0\|_2$, under 
various settings; (b) out-of-sample AUC achieved by the U$_{\mbox{\tiny LASSO}}$, the S$_{\mbox{\tiny LASSO}}$ with $n = 300$ and $500$ labels, as well as $\balpha_0$ and $\bbeta_0$, under various settings.}
\label{tab-REauc}
\end{table}%
We report in Table \ref{tab-REauc}(a) the relative efficiency (RE) of the U$_{\mbox{\tiny LASSO}}$ estimator compared to other estimators in approximating $\bbeta_0^{\dag}$, with respect to the empirical mean squared error (MSE).
For any estimator $\bbetatil^{\dag}$ of $\bbeta_0^{\dag}$, the empirical MSE is given by $\| \bbetatil^{\dag} -\bbeta_0^{\dag}\|_2^2$ averaged over the 500 replications, and the RE between two such estimators is given by the inverse ratio of their respective empirical MSEs.
In Table \ref{tab-REauc}(b), we report the out-of-sample
classification performance of $\bbetatil'\bX$ based on the area under the receiver operating characteristic curve (AUC), a scale-invariant measure, for different choices of $\bbetatil$. (All out-of-sample measures were computed based on independent validation datasets of size $N = 100,000$). As a performance benchmark, we also report the oracle AUC
associated with $\bbeta_0$ (i.e. the AUC achieved by the oracle classifier $\bbeta_0'\bX$ that uses knowledge of the true $\bbeta_0$). Lastly, to compare the variable selection performance of the U$_{\mbox{\tiny LASSO}}$ and the $L_1$-penalized supervised estimators, we report in Table \ref{tab-TPRFPR} their corresponding average true positive rate (TPR) and false positive rate (FPR) with respect to $\Asc(\bbeta_0)$.

\begin{table}[!ht]
\centering
\centerline{(a) TPR in Variable Selection}
\begin{tabular}{crl|c|cc|c|cc}\hline\hline
&&&\multicolumn{3}{c|}{$p=20$}    &\multicolumn{3}{c}{$p=50$}\\ \hline
\multicolumn{3}{c|}{Settings} &\raisebox{-2ex}[0cm][0cm]{U$_{\mbox{\tiny LASSO}}$} &\multicolumn{2}{c|}{S$_{\mbox{\tiny LASSO}}$}
&\raisebox{-2ex}[0cm][0cm]{U$_{\mbox{\tiny LASSO}}$} &\multicolumn{2}{c}{S$_{\mbox{\tiny LASSO}}$}  \\
& $\rho$ & $q$ & &{$ 300$}&{$500$}     &{} &{$300$}&{$500$}  \\ \hline
& $0$ & $.02$   & {1.0} & .98 & {1.0}     & {1.0} & .89 & {.98} \\
\raisebox{-2ex}[0cm][0cm]{(I)}& $0$ & $.04$   & {1.0} & .97 & {1.0}     & {1.0} & .90 & {.98} \\
&$.2$ & $.02$ & {1.0} & .98 & {1.0}     & {1.0} & .93 & {.99} \\
&$.2$ & $.04$ & {1.0} & .98 & {1.0}     & {1.0} & .93 & {.99}  \\ \hline
&$0$ & $.02$   & {1.0} & .97 & {1.0}    & {1.0} & .90 & {.98} \\
\raisebox{-2ex}[0cm][0cm]{(II)}&$0$ & $.04$   & {1.0} & .98 & {1.0}    & {1.0} & .89 & {.98} \\
&$.2$ & $.02$ & {1.0} & .98 & {1.0}     & {1.0} & .93 & {.99} \\
&$.2$ & $.04$ & {1.0} & .97 & {1.0}     & {1.0} & .93 & {.99} \\ \hline
\end{tabular}\vspace{.1in}
\centerline{(b) FPR in Variable Selection}
\begin{tabular}{crl|c|cc|c|cc}\hline\hline
&&&\multicolumn{3}{c|}{$p=20$}    &\multicolumn{3}{c}{$p=50$}\\ \hline
\multicolumn{3}{c|}{Settings} &\raisebox{-2ex}[0cm][0cm]{U$_{\mbox{\tiny LASSO}}$} &\multicolumn{2}{c|}{S$_{\mbox{\tiny LASSO}}$}
&\raisebox{-2ex}[0cm][0cm]{U$_{\mbox{\tiny LASSO}}$} &\multicolumn{2}{c}{S$_{\mbox{\tiny LASSO}}$}  \\
& $\rho$ & $q$ & &{$ 300$}&{$500$}     &{} &{$300$}&{$500$}  \\ \hline
&$0$ & $.02$   & {.00} & .23 & {.25}     & {.00} & .12 & {.15} \\
\raisebox{-2ex}[0cm][0cm]{(I)}&$0$ & $.04$   & {.02} & .23 & {.23}     & {.01} & .12 & {.15} \\
&$.2$ & $.02$ & {.00} & .20 & {.20}     & {.00} & .10 & {.12} \\
&$.2$ & $.04$ & {.09} & .19 & {.19}     & {.02} & .11 & {.13}  \\ \hline
&$0$ & $.02$   & {.00} & .22 & {.24}    & {.00} & .12 & {.15} \\
\raisebox{-2ex}[0cm][0cm]{(II)}&$0$ & $.04$   & {.00} & .23 & {.23}    & {.00} & .11 & {.16} \\
&$.2$ & $.02$ & {.00} & .20 & {.21}     & {.00} & .11 & {.13} \\
&$.2$ & $.04$ & {.00} & .20 & {.20}     & {.00} & .11 & {.13} \\ \hline
\end{tabular}
\centering\caption{Average (a) TPR and (b) FPR in variable selection achieved by the U$_{\mbox{\tiny LASSO}}$, and the S$_{\mbox{\tiny LASSO}}$ estimator, obtained by fitting a logistic LASSO to $n=300$ and $500$ sized labeled data, under the various settings.}
\label{tab-TPRFPR}
\end{table}%

Overall, the U$_{\mbox{\tiny LASSO}}$ estimator performs well with respect to all the criteria we have considered, and is fairly robust to the choice of $q$ as well as the underlying correlation structure of $\bX$.
Interestingly, over all the settings considered, the U$_{\mbox{\tiny LASSO}}$ outperforms the supervised S$_{\mbox{\tiny LASSO}}$ estimators with $n=300$ or $500$, in estimating the $\bbeta_0$ direction, achieving lower MSE. Further, its prediction performance, as measured by the AUC, is satisfactorily close to the oracle AUC achieved by the true $\bbeta_0$ direction. The AUC achieved by the U$_{\mbox{\tiny LASSO}}$ is also generally higher or comparable to those achieved by the supervised estimators. The support recovery performance of the U$_{\mbox{\tiny LASSO}}$, in terms of both the TPR and the FPR, is also found to be near-perfect, especially for the TPR, over all the cases, and is again uniformly superior to those achieved by the corresponding supervised estimators. 

Lastly, the \emph{performance of $\balpha_0$}, with respect to all the criteria considered, is clearly seen to be significantly \emph{worse}, over all the cases, than those of the U$_{\mbox{\tiny LASSO}}$ as well as most of the supervised estimators, thereby indicating that while it is `close' to $\bbeta_0$, it is not close enough (or sparse enough) to be considered a reasonable estimator of the $\bbeta_0$ direction. This also indicates the benefits (and necessity) of considering our penalized estimation procedure focussing only on the extreme subsets of $S$ exploiting its surrogacy therein, and highlights the unsuitability of an approach where one chooses to estimate $\bbeta_0$ using all the observations of $S$ (in the full original data $\Usc_N^*$) ignoring that the surrogacy of $S$ holds only in its extreme tails. The latter approach essentially leads only to a (near-perfect) estimator of $\balpha_0$ which, however, may not be a reasonable estimator of $\bbeta_0$ at all, as shown in our simulation results.


\section{Real Data Analysis: Application to EMR Data} \label{P2:data_ex}

We applied our proposed method to an EMR study of rheumatoid arthritis (RA) conducted at the Partners HealthCare Systems \citep{Liao_2010, Liao_2013}. The full study cohort consists of $44, 014$ patients, and the goal is to develop an EMR phenotyping algorithm to classify a binary outcome of interest $Y$, defined as confirmed diagnosis of RA, based on $\bX$ consisting of 37 covariates, including both codified and narrative mentions of RA as well as several related disease conditions including Lupus, Juvenile RA, Psoriatic Arthritis, Polymyalgia Rheumatica etc., among others. The codified mentions are defined as the number of ICD9 codes and the narrative mentions count the number of times these clinical terms are mentioned in the physicians' narrative notes assessed via natural language processing (NLP). The covariates also include diagnostic testing results for various standard RA biomarkers such as anti-cyclic citrullinated peptide (anti-CCP), anti-tumor necrosis factor (anti-TNF), rheumatoid factor etc., as well as a variety of RA medications and some known RA related clinical conditions. Due to the high correlation between the codified and NLP mentions of RA, we collapsed the two variables into a single variable of RA mentions as their sums. All the count variables were further log-transformed as: $x \to \log(1+x)$, to increase stability of the model fitting. All covariates were further standardized to have unit variance with respect to the full data.

For a subset of $n=500$ patients, we have their outcome $Y$ ascertained via manual chart review by experts. This set is only used to train the supervised estimators and evaluate the performance of the $\ULASSO$ estimator. We choose the surrogate $S$ as the total count of ICD9 diagnostic codes for RA taken at least a week apart, denoted by RA$_{\mbox{\tiny ICD9,w}}$.  It is natural to expect that when RA$_{\mbox{\tiny ICD9,w}}$ assumes too high or too low values, the patient is very likely to have RA or no RA respectively.  We let $q = 0.02$ and $0.05$, resulting in  $(\delta_q, \bar{\delta}_q) =$ (0, 45) and (0, 70), and $n_q = 4375$ and $5040$, respectively and the corresponding $\pi_q$ is 
estimated as $0.025$ and $0.053$ respectively, based on the available labeled data.

In addition to the U$_{\mbox{\tiny LASSO}}$ estimators for $q = 0.02$ and $0.05$, we further constructed a `combined' U$_{\mbox{\tiny LASSO}}$ estimator, obtained by averaging the two normalized U$_{\mbox{\tiny LASSO}}$ estimators. We also obtained the supervised LASSO and adaptive LASSO (ALASSO) penalized logistic regression estimators based on the labeled data. The tuning parameters for all the U$_{\mbox{\tiny LASSO}}$ estimators were selected using the BIC criteria in (\ref{P2:bic_criteria}), while those for the supervised logistic LASSO and ALASSO estimators were selected using the appropriate BIC criteria based on the logistic loss. We also estimated $\balpha_0$ by regressing 
the surrogate RA$_{\mbox{\tiny ICD9,w}}$ on $\bX$ based on the entire available data, via Poisson 
%
\begin{table}[H]
\centering
\begin{tabular}{r | rrr| rr| r}\hline\hline
&\multicolumn{3}{c|}{U$_{\mbox{\tiny LASSO}}$} &\multicolumn{2}{c|}{Supervised Estimators} &\raisebox{-2.7ex}[0cm][0cm]{$\widehat{\balpha}$}\\
\cline{1-7}
\raisebox{2.7ex}[0cm][0cm]{Predictors} &{$.02$}&{$.05$}&{Ave.}   &{LASSO}&{ALASSO (SE)}    & \\
  Age                			& $0$     & $0$     & $0$     & $0$      & $0 (.12)$    & $.06$ \\
  Gender             		& $0$     & $0$     & $0$     & $0$      & $0 (.03)$    & $-.01$ \\
  RA Mentions & $.81$ & $.89$ & $.85$ & $.85$  	& $.85 (.14)$    & $.69$ \\
  {Psoriatic Arthritis$_{\mbox{\tiny ICD9}}$}       		& $0$     & $0$     & $0$     & $-.02$  & $0 (.05)$   & $-.01$ \\
  Psoriatic Arthritis$_{\mbox{\tiny ICD9,w}}$              & $0$     & $0$     & $0$     & $0$      & $0 (.04)$  & $.05$ \\
  {Juvenile RA$_{\mbox{\tiny ICD9}}$}       		& $0$     & $0$     & $0$     & $-.08$ & $0 (.06)$    & $.21$ \\
  Lupus$_{\mbox{\tiny ICD9}}$              		& $0$     & $0$     & $0$     & $0$      & $0 (.07)$   & $.16$ \\
  Lupus$_{\mbox{\tiny ICD9,w}}$              & $0$     & $0$     & $0$     & $0$      & $0 (.09)$   & $-.09$ \\
  Juvenile RA$_{\mbox{\tiny ICD9,w}}$              & $0$     & $0$     & $0$     & $0$      & $0 (.09)$  & $-.26$ \\
  Spondylo Arthritis$_{\mbox{\tiny ICD9}}$ 		& $0$     & $0$     & $0$     & $0$      & $0 (.03)$    & $.02$ \\
  Polymyositis$_{\mbox{\tiny ICD9}}$		& $0$     & $0$     & $0$     & $0$      & $0 (.01)$    & $.01$ \\
  Dermatomyositis$_{\mbox{\tiny ICD9}}$    & $0$     & $0$     & $0$     & $0$      & $0 (.01)$    & $-.01$ \\
  Polymyalgia Rheumatica $_{\mbox{\tiny ICD9}}$                		& $0$     & $0$     & $0$     & $0$      & $0 (.03)$    & $.01$ \\
  Psoriatic Arthritis$_{\mbox{\tiny NLP}}$            		& $0$     & $0$     & $0$     & $0$      & $0 (.09)$ & $-.04$ \\
  {Juvenile RA$_{\mbox{\tiny NLP}}$}            	& $0$     & $0$     & $0$     & $-.07$ & $-.12 (.10)$  & $.02$ \\
  Lupus$_{\mbox{\tiny NLP}}$            		& $0$     & $0$     & $0$     & $0$      & $0 (.08)$  & $-.09$ \\
  {Methotrexate}       	& $0$     & $.12$ & $.06$ & $.21$    & $.22 (.14)$    & $.12$ \\
  {Anti-TNF}           	& $0$     & .01 & $.01$ & $.14$  & $.08 (.10)$    & $.02$ \\
  {Enbrel}                		& $.13$ & $.07$ & $.10$ & $0$   & $0 (.04)$    & $.07$ \\
  Humira               		& $0$     & $0$     & $0$     & $0$      & $0 (.03)$    & $-.00$ \\
  {Infliximab}                	& $.04$ & $0$     & $.02$ & $0$    & $0 (.05)$    & $.01$ \\
  Abatacept             			& $0$     & $0$     & $0$     & $0$      & $0 (.03)$    & $.01$ \\
  Rituximab                			& $0$     & $0$     & $0$     & $0$      & $0 (.02)$    & $-.04$ \\
  Anakinra               			& $0$     & $0$     & $0$     & $0$      & $0 (.03)$    & $.01$ \\
  Sulfasalazine               			& $0$     & $0$     & $0$     & $0$      & $0 (.05)$    & $.06$ \\
  Azathrioprine                			& $0$     & $0$     & $0$     & $0$      & $0 (.04)$    & $-.03$ \\
  Hydroxycholorquine               			& $0$     & $0$     & $0$     & $0$      & $0 (.02)$    & $.06$ \\
  {Leflunomide}              		& $.25$ & $.10$ & $.18$ & $.02$  	& $0 (.06)$    & $.05$ \\
  Penicillamine                			& $0$     & $0$     & $0$     & $0$      & $0 (.06)$    & $-.00$ \\
  {Gold Salts}                		& $.10$ & $.03$ & $.07$ & $0$      & $0 (.05)$    & $-.01$ \\
  Cyclosporine                			& $0$     & $0$     & $0$     & $0$      & $0 (.03)$    & $-.03$ \\
  Other Medications        		& $0$     & $0$     & $0$     & $0$      & $0 (.06)$    & $.02$ \\
  {Anti-CCP}           	& $0$     & $0$     & $0$     & $.04$  & $0 (.09)$    & $.01$ \\
  {Rheumatoid Factor}                 		& $0$     & $0$     & $0$     & $.08$  & $.06 (.10)$    & $.09$ \\
  {Erosion}            	& $.47$ & $.40$ & $.44$  & $.28$  & $.29 (.14)$    & $.06$ \\
  {Seropositive}       	& $.18$ & $.05$ & $.12$ & $.33$  & $.35 (.10)$    & $.03$ \\
  {Facts}              		& $.08$ & $.15$ & $.11$ & $0$    & $0 (.02)$    & $.56$ \\
  \hline
  {AUC}           & {.95} & {.94} & {.95} & {.95} & {.95}  & {.92} \\ \hline
\end{tabular}
\centering
\caption{Coordinate-wise comparison of the U$_{\mbox{\tiny LASSO}}$ estimator at $q = 0.02, 0.05$ along with an  average of these two estimators (Ave.), to the supervised LASSO and adaptive LASSO (ALASSO) penalized logistic regression estimators, obtained from $n=500$ sized validation data, as well as to the estimated $\balpha_0$ direction, $\widehat{\balpha}$, for the data example. Shown also are the bootstrap based SE estimates for the ALASSO estimator for reference, as well as the AUC estimates for all the estimators.}
\label{P2:tab_data_ex} 
\end{table}%
\noindent regression. All estimators were further normalized to have unit $L_2$ norm. Since bootstrap is expected to work for the ALASSO but not the LASSO regularized estimators, we also computed bootstrap based standard error (SE) estimates for the ALASSO estimator. Further, in order to examine the classification performance of all the estimators, we also obtained estimates of their corresponding their 
AUC measures using the labeled data. For the supervised estimators, $5-$fold CV was used to estimate the AUC to correct for over-fitting bias.

The results, shown in Table \ref{P2:tab_data_ex}, demonstrate the utility of the proposed U$_{\mbox{\tiny LASSO}}$ estimator.
First, the support and magnitudes of the U$_{\mbox{\tiny LASSO}}$ estimators are not too sensitive to the choice of $q$. The U$_{\mbox{\tiny LASSO}}$ for $q = 0.05$, and hence the `combined' U$_{\mbox{\tiny LASSO}}$, does select one or two more clinically relevant variables, including anti-TNF and methotrexate, thereby indicating the potential utility, at least in this case, of considering multiple choices of $q$ for constructing U$_{\mbox{\tiny LASSO}}$, followed by combining the estimators appropriately.
Moreover, the U$_{\mbox{\tiny LASSO}}$ estimators are all reasonably close to the supervised estimators from the $500$ sized labeled data. The differences between the U$_{\mbox{\tiny LASSO}}$  and the supervised estimators are mostly small in magnitude as compared to the estimated SEs. The U$_{\mbox{\tiny LASSO}}$ and the supervised estimators also achieved nearly identical classification performance in terms of the AUC. Finally, the performance of the $\balpha_0$ estimator, both in terms of estimation, as well as prediction based on the AUC measure, is substantially \emph{worse} than those of all the U$_{\mbox{\tiny LASSO}}$ as well as the supervised estimators, thereby indicating its unsuitability as an estimator of the $\bbeta_0$ direction in this case.

These results, together with those from the simulation studies, suggest that the U$_{\mbox{\tiny LASSO}}$ estimator \emph{can achieve accuracy close to or sometimes better} than supervised algorithms trained using labeled data of sizes as large as $n=500$, which notably is close to the \emph{largest} number of labels available for most EMR phenotyping projects \citep{liao2015development}.





\section{Discussion}\label{P2:disc}

In this paper, we considered an unsupervised signal recovery problem in single index models for binary outcomes with assistance from a `surrogate' variable that exhibits surrogacy only its extreme tail regions. We proposed a simple $\ULASSO$ estimator with provable performance guarantees (under suitable assumptions) for sparse signal recovery based on an $L_1$-penalized estimation procedure in an extreme subset, defined by the surrogate variable, of the data. The problem setting we consider in this paper is fairly unique and recent, and has particular relevance in a variety of modern applications, including biomedical studies based on large databases like EMR which served as one of our primary motivations. The initial results and discussions in Section \ref{P2:basic_found} (as well as our illustrations and analyses in Appendix \ref{P2:splcase} for a simple and common subclass of models) 
also highlight the fundamental challenges underlying our problem setting and the necessity of further structural assumptions like sparsity.

We provide precise finite sample performance bounds for our estimator establishing its convergence rates, among other interesting implications, as well as illustrations through simulations
and applications to real data all of which seem to yield fairly satisfactory results.
The performance of our U$_{\mbox{\tiny LASSO}}$ estimator is quite robust to the choice of $q$ provided that $q$ itself and the corresponding $\pi_q$ are both reasonably small. This also indicates that the estimator can perhaps be further combined appropriately over multiple choices of $q$, as shown in our data example, leading to a more stable and efficient estimator. Results from both simulation studies and real applications on EMR data
demonstrate that the unsupervised algorithm trained via U$_{\mbox{\tiny LASSO}}$ achieves accuracy comparable to or higher than supervised algorithms trained on labeled datasets of \emph{most practical sizes}.

Apart from the recovery of  $\bbeta_0$, a key consequence of our sparsity based approach is its ability to perform variable selection. This can be quite useful in subsequent analyses based on an actual training data with $Y$ observed, wherein only the selected variables may be used and this can significantly improve the efficiency/accuracy of the final classification rule.

\par\medskip
\noindent \emph{Discussions on the Model Assumptions.}\label{P2:disc_sim}
Another aspect of our approach that is perhaps worth some more discussions is the SIM framework we adopt for $Y$ (and $S$) given $\bX$, thereby making the parameter $\bbeta_0$ recoverable only upto scalar multiples. While it naturally allows for more flexible models compared to standard parametric models with known link functions, it is also somewhat `necessary' under our setting, where if $S \in \Isc_q$, $Y$ is very likely to equal $Y^*_q$ which is deterministically $0/1$ in the tails of $S$. 
Hence, $Y$ itself must quickly approach a similar noiseless (and `link-free') form as $q \downarrow 0$, thus making $\bbeta_0$ (essentially) identifiable only upto scalar multiples under $\P_q(\cdot)$ even if $Y$'s link
function is known (see Remark \ref{P2:splcase_rem4} in Appendix \ref{P2:splcase} 
for further details and clarifications on this issue under a more familiar set-up).

As far as the SIM assumption \eqref{P2:s_model} for $S \given \bX$ is concerned, while it is again more flexible than standard parametric models, even more general models may also be considered given that the size of the original data $\Usc_N^*$ is very large (so that the relationship between $S$ and $\bX$ can be learnt more accurately via even more flexible non/semi-parametric models). Our main purpose in introducing the SIM for $S \given \bX$ was to underline the fundamental difficulty of the problem at hand \emph{even under} the relatively simple and interpretable model based set-up in \eqref{P2:y_model}-\eqref{P2:s_model}, as highlighted in Theorem \ref{P2:THM_1}. This also helps provide key understandings of the problem including the fundamental limitations, the connections between the $\bbeta_0$ and the $\balpha_0$ directions, and the necessary conditions for possible recovery of $\bbeta_0$ and the success of our approach. These insights are made even more explicit under a specific subclass of models for $(Y,S,\bX')'$ in Appendix \ref{P2:splcase} where many useful remarks and discussions are provided.

\par\medskip
\noindent\emph{Choice of the Squared Loss and Use of Other Loss Functions.}\label{P2:disc_oth_losses}
 As mentioned in Section \ref{P2:basic_found}, while our approach is based on the squared loss (which, for binary $Y$, is closely related to linear discriminant analysis), other convex loss functions more suited for binary outcomes, such as the logistic loss, can also be considered. However, the corresponding technical analyses can be considerably more involved, since the (unavoidable) design restriction $S \in \Isc_q$ with $q$ small, underlying our setting, can correspond to highly flat regions in the curves of these other more traditional choices of loss functions (see Figure \ref{P2:surrogacy_plot}(a) in Appendix \ref{P2:splcase} 
for an illustration), and their minimization can lead to potential non-identifiability issues. 
The squared loss on the other hand does not have any such issues as long as $\bSigma_q \succ 0$. We do therefore believe that the squared loss is a somewhat
safer and more convenient choice under our setting. Further, a variety of recent works in high dimensional statistics and compressed sensing have also generally relied (implicitly or explicitly) on use of the squared loss for signal recovery in SIMs, including \citet{Thramp_2015, Neykov_L1_2016, Plan_2016, Plan_2017, Wei_2018}, among several others. Most of these works also consider the case of binary outcomes and/or one-bit compressed sensing as a special case of their general framework; see also \citet{Plan_Vershynin_2013_a} which focusses specifically on binary outcomes and considers sparse signal recovery in SIMs based on an approach similar (although not equivalent) to a squared loss based optimization.

\par\medskip
\noindent\emph{Possible Extensions of Our Approach.}\label{P2:disc_extns}
While we have focussed on standard $L_1$ penalization throughout for simplicity, other sparsity friendly penalties like the ALASSO penalty can also be considered. Moreover, we have focussed here on a setting with a single surrogate. The proposed procedure can also be extended to settings where we have multiple such surrogates available, each satisfying the desired assumptions, in which case the estimators of the $\bbeta_0$ direction obtained from each of them (and possibly over several choices of $q$) can be further combined effectively to give a more stable and efficient estimator. Lastly, we have throughout focussed on the estimation of $\bbeta_0$ and not quite pursued any inference based on our $\ULASSO$ estimator. Considering that it is essentially a simple least squares LASSO estimator, inference based on the $\ULASSO$ estimator can indeed be performed using a de-biased LASSO type approach \citep{VdG_2014, Javanmard_2014} appropriately adapted to our setting. However, given the fundamental challenges underlying even the estimation problem, we have refrained from venturing into this topic in this paper and leave it for future research.

\par\medskip
\noindent\emph{Other Related Areas of Work.} Finally, it is also worth pointing out that our work is loosely related to some recent works in one-bit compressed sensing with `adversarial' bit flips or corruptions \citep{Plan_Vershynin_2013_a, Genzel_2017},
as well as more classical works in statistics on measurement error and misclassification \citep{Carroll_2006, Buonaccorsi_2010} and in machine learning on classification with imperfect labels \citep{Natarajan_2013,Frenay_2014}. However, in all these other lines of literature, the problem setting as well as the approach and the associated assumptions adopted are quite different, and their connections to our work are remote at best.
One fundamental difference in the basic framework for these problems compared to ours is that a proper surrogate outcome (typically, a systematically noisy or randomly flipped version of the true outcome) is \emph{actually} observed for \emph{all} individuals, which is quite unlike the setting we consider here. Under our setting, we don't `observe' $Y$ (noisy or not) at all and instead, we use the surrogacy of $S$ to `synthesize' our outcomes in the tails of $S$, where the surrogate outcome can be somewhat `trusted'. The extreme subset $\Usc_{\nq}$ in (\ref{P2:restricted_surr_data}), for some small enough $q$, provides us with the only (and that too approximate) access to the corresponding true $Y$. This is only a very small fraction of the entire observed data $\Usc_N^*$ and furthermore, the surrogate outcome therein cannot be simply treated as a randomly flipped version of the original outcome.

\appendix
\numberwithin{equation}{section}




\section{Assumptions for Theorems \ref{P2:THM_2} and \ref{P2:COR_1}}\label{P2:assumption}

In this section, we list a few key technical assumptions, along with discussions on their applicability, required for our main results regarding the finite sample performance guarantees for our proposed $\ULASSO$ estimator. 
We first discuss Assumptions \ref{P2:strngconv_assmpn} and \ref{P2:sparsity_assmpn} required for Theorem \ref{P2:THM_2}, followed by Assumption \ref{P2:subgaussian_defn} required for Theorem \ref{P2:COR_1}.

\begin{assumption}[Restricted Strong Convexity]\label{P2:strngconv_assmpn}
\emph{We assume that at $\bbeta = \bbetabar_q$, the loss function $\Lsc_{\nq}(\Usc_{\nq};\bbeta)$ in (\ref{P2:esql}) satisfies a restricted strong convexity property as follows: there exists a (non-random) constant $\kappa_q > 0$, possibly depending on $q$, such that
\begin{equation}
\nabla_2\{\Lsc_{\nq}(\Usc_{\nq};\bbetabar_q; \bv)\}  \; \equiv \; \bv'\widehat{\bSigma}_q \bv \; \geq \; \kappa_q \|\bv\|_2^2 \quad \forall \; \bv \in \C(\bbeta_0;\bbetabar_q), \; \mbox{where} \label{P2:strngconv_eqn}
\end{equation}
\begin{equation*}
\nonumber \C(\bbeta_0;\bbetabar_q)  \; = \; \{ \bv \in \R^p: \| \Pi_{\bbeta_0}^c(\bv)\|_1 \leq 3 \| \Pi_{\bbeta_0}(\bv)\|_1 + 4 \| \Pi_{\bbeta_0}^c(\bbetabar_q) \|_1\} \; \subseteq \R^p. \label{P2:Cset_def}
\end{equation*}
}
\end{assumption}
Assumption \ref{P2:strngconv_assmpn}, largely adopted from \citet{Negahban_2012}, is one of the several restricted eigenvalue type assumptions that are standard in the high dimensional statistics literature. While we assume (\ref{P2:strngconv_eqn}) to hold deterministically for any realization of $\Usc_{\nq}$, it only needs to hold almost surely 
$[\P_q]$ for some $\kappa_q$. With appropriate modifications, it can also be generalized further, wherein it only needs to hold with high probability. 
In general, if $\bSigma_q \succ 0$ and $\bX \given S \in \Isc_q$ is sufficiently well behaved (for example, subgaussian), then (\ref{P2:strngconv_eqn}) can be shown to hold (see \citet{Negahban_2012}, \citet{Rud_Zhou_2013} and references therein for relevant results)
with high probability  
for some $\kappa_q \gtrsim \lambda_{\min}(\bSigma_q)$. 

We next state our second set of assumptions for Theorem \ref{P2:THM_2}, which relates to structured sparsity conditions on $\bbeta_0$, $\balpha_0$ and some arbitrary realization of $\bbetahat_{\nq}(\lambda)$.

\begin{assumption}[Restricted Sparsity Conditions]\label{P2:sparsity_assmpn}
\emph{We assume the following conditions:} 
\emph{
\begin{enumerate}
\item[C1.] \label{test:C1} $\bbeta_0$ is strictly \emph{sparser} than $\balpha_0$ in the sense that $\Asc^c(\bbeta_0) \cap \Asc(\balpha_0)$ is non-empty and hence, $\exists \; j \in \{1,\hdots,p\}$ such that $\bbeta_{0 [j]} = 0$ and $\balpha_{0 [j]} \neq 0$.
\item[C2.] The tuning parameter $\lambda$ is $(\bbeta_0, \balpha_0, q)-$admissible, defined as follows: $\exists$ \emph{some} realization $u_{\nq}$ (\emph{not} necessarily the observed one) of the data $\Usc_{\nq}$ such that the corresponding estimator $\bbetahat_{\nq}(\lambda; u_{\nq})$, based on $u_{\nq}$ and the given choice of $\lambda$, satisfies the property: $\bbetahat_{\nq [j]}(\lambda; u_{\nq}) = 0$ for some $j \in \Asc^c(\bbeta_0) \cap \Asc(\balpha_0)$ (a non-empty set under condition C1).
\end{enumerate}
}
\end{assumption}
Assumption \ref{P2:sparsity_assmpn} imposes some mild restrictions on the sparsity patterns of $\bbeta_0$, $\balpha_0$, and one arbitrary realization of $\bbetahat_{\nq}(\lambda; \Usc_{\nq})$ for a given choice of $\lambda$. Condition C1 ensures that $\bbeta_0$ is among one of the favorable sparser directions. Specifically, it implies that $\bbeta_0$ is sparser than $\balpha_0$ in at least one coordinate, and therefore, formally characterizes the very essence  behind our consideration of a penalized regression approach for recovering $\bbeta_0$, which is based on the intuition that the penalized solution would be favoring sparser solutions and try to push it away from the un-regularized solution that recovers the $\balpha_0$ direction. Condition C2 is a very mild, yet critical, assumption regarding the sparsity structure of the solution $\bbetahat_{\nq}(\lambda; \Usc_{\nq})$ for at least \emph{one} realization of $\Usc_{\nq}$ over the entire sample space underlying the generation of $\Usc_{\nq}$. Given a $\lambda$, it requires the estimator, at least for one arbitrary sample point, to be sparse in one of the coordinates of $\Asc^c(\bbeta_0) \cap \Asc(\balpha_0)$. Of course, while this can be ensured by making $\lambda$ large, the main utility of this condition lies in ensuring that even for choices of $\lambda$ of a reasonably small enough order, at least one of the coordinates of $\balpha_0$ in 
$\Asc^c(\bbeta_0) \cap \Asc(\balpha_0)$ should not be too large so as to be always selected by $\bbetahat_{\nq}(\lambda; \Usc_{\nq})$. 

\subsection{Assumptions for Theorem \ref{P2:COR_1}}\label{P2:subsec_assumption}

We next state the conditions required for Theorem \ref{P2:COR_1} (as well as its generalized version Theorem \ref{P2:THM_3} given later in Appendix \ref{P2:thm3.2_proof}), wherein we require the distribution of $\bX \given S \in \Isc_q$ to be subgaussian so that it has sufficiently well behaved tails, defined as follows.

\begin{assumption}[Subgaussian Distributions Conditional on $S \in \Isc_q$]\label{P2:subgaussian_defn}
\emph{We assume that $\bX \mid S \in \Isc_q$ follows a $S \in \Isc_q$-restricted subgaussian distribution, defined as follows. Let $Z \in \R$ and $\bZ \in \R^p$ be any scalar and vector valued measurable functions of $(Y,S,\bX')'$ respectively. Let $\widetilde{Z}_q = Z - \E_q(Z)$ and $\widetilde{\bZ}_q = \bZ - \E_q(\bZ)$ denote their corresponding centered versions given $\{S \in \Isc_q\}$, for any $q \in (0,1]$. Then, $Z$ is said to follow a \emph{$\{S \in \Isc_q\}$-restricted subgaussian distribution} with parameter $\sigma_q^2$ for some constant $\sigma_q > 0$, denoted as:
\begin{equation}
Z \sim \SGq(\sigma_q^2), \quad \mbox{if} \;\; \E_q\{\exp (t \widetilde{Z}_q)\} \; \leq \; \exp(\sigma_q^2 t^2 /2) \quad \forall \; t \in \R. \label{P2:SGq:def}
\end{equation}
Further, a $p$-dimensional $\bZ$ is said to follow a $\{S \in \Isc_q\}$-restricted subgaussian distribution, with parameter $\sigma_q^2$,  denoted as $\bZ \sim \SGq(\sigma_q^2),$ if for each $\mathbf{t} \in \R^p$, $\mathbf{t}'\bZ$ follows a $\{S \in \Isc_q\}$-restricted subgaussian distribution with parameter at most $\sigma_q^2 \| \mathbf{t} \|^2_2$ for some constant $\sigma_q > 0$.}
\end{assumption}
The conditions in Assumption \ref{P2:subgaussian_defn} are quite mild, and should be expected to hold for a fairly large family of distributions for $(S,\bX')'$, especially those where the unconditional distribution of $(S,\bX')'$ is itself subgaussian. In particular, it holds trivially for any $q$ if $\bX$ is assumed to be bounded. Further, when $(S,\bX')'$ follows a multivariate normal distribution,
it can be shown (see result (iv) in Theorem \ref{P2:thm_4} given in Appendix \ref{P2:splcase}) 
that for most small enough $q$ of our interest (in fact, for any $q \leq 1/2$), $\bX \sim \SGq (\sigma_q^2)$ indeed with $\sigma_q^2 \leq c_1 \deltabar_q^2 \leq c_2 \log(q^{-1})$ for some constants $c_1, c_2 > 0$. Note that the parameter $\sigma_q$ in (\ref{P2:SGq:def}) is, in general, allowed to depend on $q$ and therefore, possibly diverge (slowly enough) as $q$ decreases.

Finally, it is also worth noting that a closer inspection of the proofs of Theorems \ref{P2:COR_1} and \ref{P2:THM_3}, where Assumption \ref{P2:subgaussian_defn} is most needed, will reveal that the conditions of Assumption \ref{P2:subgaussian_defn} can be somewhat weakened without altering the results. Instead of assuming a joint subgaussianity of the full vector $\bX \given S \in \Isc_q$, it may suffice to require only marginal (coordinate-wise) subgaussianity of $\bX \given S \in \Isc_q$ uniformly in the coordinates. Specifically, the results continue to hold if one only assumes that $\bX_{[j]} \sim \SGq(\sigma_q^2)$ uniformly in $j \in \{1,\hdots, p\}$ for some constant $\sigma_q \geq 0$ (independent of $j$). This is clearly weaker than the joint subgaussianity required in Assumption  \ref{P2:subgaussian_defn} and includes, for instance, the case where $\bX$ has uniformly bounded coordinates. Nevertheless, we shall stick to the slightly stronger (and more standard) formulation of joint subgaussianity, as in Assumption \ref{P2:subgaussian_defn}, for the sake of simplicity.

\section{Supporting Lemmas for the Main Proofs}\label{P2:prelim_lemmas}
We state here a few preliminary lemmas that would be useful throughout in the proofs of several of the main theorems, including in particular, Theorem \ref{P2:COR_1} and Theorem \ref{P2:THM_3}, as well as Theorem \ref{P2:thm_4} introduced later in Appendix \ref{P2:splcase}.

\begin{lemma}[Properties of Subgaussian Variables]\label{P2:lemma_1}
Let $Z$ be any random variable such that $\E(Z) = 0$ and $Z$ follows a subgaussian distribution with parameter $\sigma^2$ for some $\sigma \geq 0$, to be denoted as $Z \sim \S\G (\sigma^2)$, so that $\E\{\exp(aZ)\} \leq \exp(\sigma^2 a^2 /2) \;\; \forall \; a \in \R$. Then,
\begin{enumerate}[(i)]
\item For any $\epsilon > 0$, $\P(Z > \epsilon) \hspace{0.015in} \leq \hspace{0.015in} \exp \left\{-\epsilon^2/(2\sigma^2) \right\}$ and $\; \P(| Z | > \epsilon) \hspace{0.015in} \leq \hspace{0.015in} 2  \exp \left\{- \epsilon^2/(2\sigma^2) \right\}$.
\item $aZ \sim \S\G (a^2 \sigma^2)$ $\forall \; a \in \R$. For any $Z_1 \sim \S\G (\sigma_1^2)$ and $Z_2 \sim \S\G(\sigma_2^2)$, with $Z_1$ and $Z_2$ not necessarily independent, $(Z_1 + Z_2) \sim \S\G \{ (\sigma_1 + \sigma_2)^2\}$. If $Z_1$ and $Z_2$ are further independent, then $(Z_1 + Z_2) \sim \S\G (\sigma_1^2 + \sigma_2^2)$, with an improved parameter.
\item For each integer $m \geq 2$, $\E(|Z|^m) \leq  2 (\sqrt{2}\sigma)^m \Gamma(m/2 + 1)$, where $\Gamma(\cdot)$ denotes the gamma function: $\Gamma(t) = \int_0^\infty x^{t-1} \exp(-x) \; dx \;\; \forall \; t \geq 0$.
\item For any collection $\{ Z_j\}_{j=1}^m$ of $m$ random variables, each subgaussian with parameter $\sigma^2$, $\P\left(\underset{1 \leq j \leq m}{\max}\; |Z_j|  > \epsilon \right) \hspace{0.015in} \leq \hspace{0.015in} 2  \exp\left\{-\epsilon^2/(2\sigma^2) + \log  m \right\}$ for any $\epsilon > 0$.
\item A random vector $\bZ \in \R^d$ for any $d$, with $\E(\bZ) = \bzero$, is said to follow a subgaussian distribution with parameter $\sigma^2$ for some $\sigma \geq 0$, denoted as $\bZ \sim \S\G(\sigma^2)$, if $\forall \; \bt \in \R^d$, $\bt'\bZ \sim \S\G\{\sigma^2(t)\}$ for some $\sigma(t) \geq 0$ such that $\sigma^2(t) \leq  \sigma^2 \| \bt\|_2^2 $.
Let $\{ \bZ_j\}_{j=1}^m$ be any collection of $m$ random vectors (not necessarily independent) in $\R^d$ with $\bZ_j \sim \S\G(\sigma^2)$ $\forall \; 1 \leq j \leq m$. Then, $\P\left(\underset{1 \leq j \leq m}{\max}\; \|\bZ_j\|_{\infty}  > \epsilon \right) \hspace{0.015in} \leq \hspace{0.015in} 2 \exp\left\{-\epsilon^2 /(2\sigma^2) + \log  (md) \right\}$ $\; \forall  \; \epsilon > 0$.
\end{enumerate}
\end{lemma}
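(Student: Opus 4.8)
The plan is to derive all five parts from the single defining inequality $\E\{\exp(aZ)\} \le \exp(\sigma^2 a^2/2)$ using standard Chernoff-type arguments. For (i), apply Markov's inequality to $\exp(aZ)$ with $a > 0$: $\P(Z > \epsilon) \le \exp(-a\epsilon)\,\E\{\exp(aZ)\} \le \exp(-a\epsilon + \sigma^2 a^2/2)$, then minimize the exponent over $a$, the optimum being $a = \epsilon/\sigma^2$, which gives $\exp\{-\epsilon^2/(2\sigma^2)\}$. Since the bound on the moment generating function is symmetric in sign, $-Z \sim \S\G(\sigma^2)$ as well, so a union bound over $\{Z > \epsilon\}$ and $\{-Z > \epsilon\}$ yields the two-sided statement with the extra factor $2$.

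For (ii), the rescaling $aZ \sim \S\G(a^2\sigma^2)$ follows by replacing $a$ with $at$ in the definition. For the sum of two (possibly dependent) subgaussians, the crucial step is H\"older's inequality: for conjugate exponents $p,q>1$,
\[
\E\{\exp(a(Z_1+Z_2))\} \;\le\; \big(\E\{\exp(apZ_1)\}\big)^{1/p}\big(\E\{\exp(aqZ_2)\}\big)^{1/q} \;\le\; \exp\!\big\{\tfrac{a^2}{2}(p\sigma_1^2 + q\sigma_2^2)\big\}.
\]
Minimizing $p\sigma_1^2 + q\sigma_2^2$ subject to $1/p + 1/q = 1$ (by calculus after eliminating $q = p/(p-1)$, or via Lagrange multipliers) gives $p = (\sigma_1+\sigma_2)/\sigma_1$, $q = (\sigma_1+\sigma_2)/\sigma_2$, with minimum value $(\sigma_1+\sigma_2)^2$, hence $(Z_1+Z_2)\sim\S\G\{(\sigma_1+\sigma_2)^2\}$. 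When $Z_1 \ind Z_2$, the moment generating function factorizes and the parameters add directly, giving the sharper $\sigma_1^2 + \sigma_2^2$.

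For (iii), use $\E(|Z|^m) = \int_0^\infty m t^{m-1}\P(|Z| > t)\,dt$, insert the tail bound from (i), and substitute $u = t^2/(2\sigma^2)$ to reduce the integral to $m(2\sigma^2)^{m/2}\Gamma(m/2)$; the identities $m\Gamma(m/2) = 2\Gamma(m/2+1)$ and $(2\sigma^2)^{m/2} = (\sqrt{2}\sigma)^m$ then give the stated bound. Part (iv) is an immediate union bound over $\{|Z_j| > \epsilon\}$, $1 \le j \le m$, combined with the two-sided tail bound in (i). For (v), note that for each $j$ and each coordinate $k$ the projection onto the $k$-th standard basis vector $\mathbf{e}_k$ satisfies $\mathbf{e}_k'\bZ_j \sim \S\G(\sigma^2\|\mathbf{e}_k\|_2^2) = \S\G(\sigma^2)$ by the definition of a subgaussian vector, so $\|\bZ_j\|_\infty$ is the maximum of $d$ scalar $\S\G(\sigma^2)$ variables; applying (iv) to the pooled collection of $md$ such variables gives the result. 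The only genuinely delicate point is the constrained optimization in (ii); everything else is a routine consequence of the Chernoff bound and elementary integral manipulations.
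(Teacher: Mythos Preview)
Your proof is correct in all five parts; the Chernoff bound for (i), the H\"older optimization for the dependent case in (ii), the layer-cake integral and Gamma substitution for (iii), and the union bounds for (iv)--(v) are all carried out cleanly. Note that the paper does not actually prove this lemma---it states it as a collection of well-known facts and refers the reader to \citet{Vershynin_2010} and similar references---so there is no ``paper's own proof'' to compare against; your arguments are exactly the standard ones those references would supply.
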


\begin{lemma}[Subgaussian Properties of Binary Random Variables]\label{P2:lemma_2}
Let $Z \in \{0,1\}$ be a binary random variable with $\E(Z) \equiv \P(Z = 1) = p \in [0,1]$. Let $\widetilde{Z} = (Z-p)$ denote the corresponding centered version of $Z$. Then, $\widetilde{Z} \sim \S\G(\widetilde{p}^{\;2})$, where $\widetilde{p} \geq 0$ is given by: $\widetilde{p} = 0$ if $p \in \{0,1\}$, $\widetilde{p} = 1/2$ if $p = 1/2$, and $\widetilde{p} = [(p-1/2)/\log \{p/(1-p)\}]^{1/2}$ if $p \notin \{0,1,1/2\}$.
\end{lemma}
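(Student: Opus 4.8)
\noindent\emph{Proof plan.}
The plan is to verify directly the moment generating function inequality that defines $\S\G(\widetilde p^{\,2})$: writing $\psi(a) = \log\E\{\exp(a\widetilde Z)\} = -ap + \log(1-p+pe^{a})$, it suffices to prove
\[
\psi(a)\;\le\;\tfrac12\,\widetilde p^{\,2}\,a^2\qquad\forall\,a\in\R .
\]
The boundary cases $p\in\{0,1\}$ are trivial since then $\widetilde Z\equiv 0$, and $p=\tfrac12$ reduces to the elementary bound $\cosh(a/2)\le e^{a^2/8}$. Since $-\widetilde Z$ built from parameter $p$ has the law of $\widetilde Z$ built from parameter $1-p$, while $\widetilde p$ is unchanged under $p\mapsto 1-p$, I would assume without loss of generality that $p\in(0,\tfrac12)$.

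Next I would set $F(a)=\tfrac12\widetilde p^{\,2}a^2-\psi(a)$, so that $F(0)=F'(0)=0$, and record $\psi'(a)=q(a)-p$ and $\psi''(a)=q(a)(1-q(a))$, where $q(a)=pe^{a}/(1-p+pe^{a})\in(0,1)$ is increasing with $q(0)=p$; hence $F''(a)=\widetilde p^{\,2}-q(a)(1-q(a))$. Because $u\mapsto u(1-u)$ is unimodal on $(0,1)$ with maximum $\tfrac14>\widetilde p^{\,2}$, the equation $q(a)(1-q(a))=\widetilde p^{\,2}$ has exactly two roots $t_1<t_2$; and the elementary inequality $\sinh x>x$ applied to $x=\log\{(1-p)/p\}>0$ yields $p(1-p)<\widetilde p^{\,2}$, so $0<t_1$. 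Consequently $F$ is convex on $(-\infty,t_1]$, concave on $[t_1,t_2]$, convex on $[t_2,\infty)$, with $F(a)\to+\infty$ as $|a|\to\infty$. Convexity on $(-\infty,t_1]$ together with $F(0)=F'(0)=0$ already gives $F\ge 0$ there, so the whole question becomes $F\ge 0$ on $[t_1,\infty)$, where the minimum is attained either at $t_1$ (where $F>0$) or at the unique minimizer of the final convex branch $[t_2,\infty)$.

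To control that critical point I would exploit the following reformulation. Put $\ell(u)=\log\{u/(1-u)\}$, so that $q$ inverts to $a=\ell(q(a))-\ell(p)$ and $\ell^{-1}$ is the logistic function. The stationarity condition $F'(a)=0$ reads $q(a)-p=\widetilde p^{\,2}\{\ell(q(a))-\ell(p)\}$, and since $\widetilde p^{\,2}=(p-\tfrac12)/\ell(p)$ this says precisely that the chord of $\ell^{-1}$ joining $\ell(p)$ to $\ell(q(a))$ has the same slope as the chord joining $\ell(p)$ to $0$; equivalently, the three points $(\ell(p),p)$, $(0,\tfrac12)$, $(\ell(q(a)),q(a))$ are collinear. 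But $v\mapsto \ell^{-1}(v)-\tfrac12-\widetilde p^{\,2}v$ is an odd function whose derivative $(\ell^{-1})'(v)-\widetilde p^{\,2}$ is even, positive on a symmetric interval and negative outside it, so it has at most three real zeros; hence its two nonzero zeros form a $\pm$ pair, forcing $\ell(q(a))=-\ell(p)$, i.e.\ $q(a)=1-p$ and $a=a_0:=2\log\{(1-p)/p\}$. A direct substitution then gives $\psi(a_0)=-(1-2p)\ell(p)=\tfrac12\widetilde p^{\,2}a_0^2$, so $F(a_0)=0$; and since $q(a_0)(1-q(a_0))=p(1-p)<\widetilde p^{\,2}$ with $q(a_0)>\tfrac12$, the point $a_0$ lies in $[t_2,\infty)$ and is therefore the (unique) minimizer of $F$ on that branch. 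This yields $F\ge 0$ on all of $\R$, which is the assertion.

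The main obstacle is exactly the non-convexity of $F$: because $\psi''$ attains the value $\tfrac14>\widetilde p^{\,2}$, a Hoeffding-type argument that merely bounds $\psi''$ by a constant only delivers the crude variance proxy $\sigma^2=\tfrac14$. Getting the sharp constant $\widetilde p^{\,2}$ forces one to identify the precise abscissa $a_0$ at which the m.g.f.\ bound is attained with equality, and the collinearity / ``odd function with at most three zeros'' device above is what I expect to be the crux; the remaining ingredients (the shape of $F$, the inequality $\sinh x>x$, the closed-form evaluation of $\psi(a_0)$) are routine calculus.
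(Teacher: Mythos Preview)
Your proof is correct. The paper itself does not supply a proof of this lemma; it simply cites \citet{Buldygin_2013} (see the paragraph following Lemma~\ref{P2:lemma_5} in Appendix~\ref{P2:prelim_lemmas}). Your argument is in fact close in spirit to that reference: both hinge on recognizing that the cumulant generating function $\psi$ and the parabola $\tfrac12\widetilde p^{\,2}a^2$ are tangent at exactly two points, $a=0$ and $a_0=2\log\{(1-p)/p\}$, and that the convex--concave--convex shape of $F=\tfrac12\widetilde p^{\,2}a^2-\psi$ then forces $F\ge 0$ globally. Your device of locating $a_0$ via the odd symmetry of $v\mapsto\ell^{-1}(v)-\tfrac12-\widetilde p^{\,2}v$ is a clean way to pin down the tangency point, and the closed-form check $F(a_0)=0$ is the crux.

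One small point of exposition: your ``at most three zeros'' argument actually identifies \emph{all} three critical points of $F$, namely $a=0$, $a=-\ell(p)$, and $a_0=-2\ell(p)$. It would help to state explicitly that the middle one lies in the concave interval $(t_1,t_2)$ (since $q(-\ell(p))=\tfrac12$, giving $F''<0$ there) and is therefore a local maximum, so that $a_0$ is unambiguously the unique minimizer on the final convex branch. You do verify $a_0\in(t_2,\infty)$ at the end, which suffices, but naming all three critical points up front would make the logical flow clearer.
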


\begin{lemma}[Bernstein's Inequality]\label{P2:lemma_3}
Let $\{\Z_1, \hdots, Z_n \}$ denote any collection of $n$ independent (not necessarily iid) random variables $\in \R$, such that $\E(Z_i) = 0 \; \forall \; 1 \leq i \leq n$. Suppose $\exists$ constants $\sigma \geq 0$ and $K \geq 0$, such that $n^{-1} \sum_{i=1}^n \E(|Z_i|^m) \leq (m!/2) K^{m-2} \sigma^2$, for each positive integer $m \geq 2$. Then, the following concentration bound holds:
 $$ \P \left( \frac{1}{n} \left| \sum_{i=1}^n Z_i \right| \geq \sqrt{2}\sigma \epsilon + K \epsilon^2\right) \; \leq \; 2 \exp(-n\epsilon^2) \quad \mbox{for any} \;\; \epsilon > 0. $$
\end{lemma}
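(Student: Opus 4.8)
The plan is to prove this by the classical exponential-moment (Chernoff) method, with the one twist that the hypothesis controls only the \emph{averaged} absolute moments $n^{-1}\sum_i\E(|Z_i|^m)$, not the individual ones. First I would dispose of degeneracies: if $\sigma = 0$, taking $m=2$ (with the convention $K^{0}=1$) forces $\sum_i\E(Z_i^2)=0$, hence $\sum_iZ_i=0$ almost surely and the inequality is trivial; similarly, if $K=0$, the conditions for $m\geq 3$ force each $Z_i\equiv 0$ and only a variance term survives, handled identically by reading $1/K$ as $+\infty$. So assume $\sigma,K>0$.

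Next, for $0<\lambda<1/K$, I would expand the moment generating function of each summand: since $\E(Z_i)=0$ and $\E(Z_i^m)\leq\E(|Z_i|^m)$,
\[
\E\{\exp(\lambda Z_i)\} \; \leq \; 1 + \sum_{m\geq 2}\frac{\lambda^m}{m!}\,\E(|Z_i|^m).
\]
Averaging over $i$, invoking the moment hypothesis, and summing the resulting geometric series yields
\[
\frac{1}{n}\sum_{i=1}^{n}\E\{\exp(\lambda Z_i)\} \; \leq \; 1 + \frac{\sigma^2}{2}\sum_{m\geq 2}\lambda^m K^{m-2} \; = \; 1 + \frac{\lambda^2\sigma^2}{2(1-\lambda K)}.
\]
The crucial step linking this averaged bound to the \emph{product} $\prod_i\E\{\exp(\lambda Z_i)\}$ that appears in the Chernoff bound is the AM--GM inequality (equivalently, concavity of $\log$): $\prod_i a_i\leq\bigl(n^{-1}\sum_i a_i\bigr)^{n}$ for nonnegative $a_i$. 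Combined with $1+x\leq e^{x}$ and independence of the $Z_i$, this gives
\[
\E\Bigl\{\exp\Bigl(\lambda\sum_{i=1}^{n}Z_i\Bigr)\Bigr\} \; = \; \prod_{i=1}^{n}\E\{\exp(\lambda Z_i)\} \; \leq \; \exp\Bigl(\frac{n\lambda^2\sigma^2}{2(1-\lambda K)}\Bigr).
\]

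Applying Markov's inequality to $\exp(\lambda\sum_iZ_i)$ then gives $\P(\sum_iZ_i\geq t)\leq\exp\{-\lambda t+n\lambda^2\sigma^2/[2(1-\lambda K)]\}$ for all $t>0$ and all admissible $\lambda$. It remains to pick $\lambda$ tuned to the threshold $t=n(\sqrt{2}\,\sigma\epsilon+K\epsilon^2)$: I would take $\lambda=(\sqrt{2}\,\epsilon/\sigma)/(1+\sqrt{2}\,K\epsilon/\sigma)$, which lies in $(0,1/K)$. Writing $w=\sqrt{2}\,K\epsilon/\sigma$, one checks $\lambda K=w/(1+w)$ so $1-\lambda K=1/(1+w)$, and a direct substitution collapses the exponent to exactly $-n\epsilon^2$; hence $\P(n^{-1}\sum_iZ_i\geq\sqrt{2}\,\sigma\epsilon+K\epsilon^2)\leq\exp(-n\epsilon^2)$. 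Since $\{-Z_i\}_{i=1}^{n}$ satisfies the same hypotheses, the lower tail obeys the identical bound, and a union bound over the two events supplies the factor $2$, completing the proof.

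The argument is essentially routine; the only genuine subtlety is the averaged (rather than termwise) moment hypothesis, which is exactly what rules out bounding each $\E\{\exp(\lambda Z_i)\}$ by an exponential individually and forces the AM--GM detour. After that, the sole computation requiring care is the choice of $\lambda$, which is reverse-engineered precisely so that the Chernoff exponent lands on $-n\epsilon^2$ with no slack.
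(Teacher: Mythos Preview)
Your argument is correct and self-contained. Note, however, that the paper does not supply its own proof of this lemma: it simply records the statement and attributes this particular formulation to \citet{Van_de_Geer_2013}. So there is no ``paper's proof'' to compare against; you have effectively filled in a result the authors quote from the literature.

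One brief remark on your write-up: the AM--GM step is indeed the right way to pass from the \emph{averaged} moment hypothesis to a bound on the product $\prod_i \E\{\exp(\lambda Z_i)\}$, and it is worth keeping explicit since this is the only place where the averaged (rather than termwise) assumption bites. Your choice of $\lambda$ and the algebra collapsing the exponent to $-n\epsilon^2$ check out exactly. The degenerate cases are handled cleanly. Nothing to fix.
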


\begin{lemma}[Useful Bounds for the Standard Normal Density and CDF]\label{P2:lemma_4}
Let $\phi(\cdot)$ and $\Phi(\cdot)$ respectively denote the density and the CDF of the standard $\Nsc_1(0,1)$ distribution. Further, let $\Phibar(t) = \{1 -\Phi(t)\} \equiv \Phi(-t) \; \forall \; t\in \R$. Then, the following bounds hold: 
$\forall \; t > 0$,
$$ \frac{t}{1+ t^2} \phi(t) \; \leq \; \Phibar(t) \; \leq \; \frac{\phi(t)}{t}, \;\;\; \mbox{and} \;\;\; \Phibar(t) \; \leq \; \exp ( - t^2 /2 ). $$
\end{lemma}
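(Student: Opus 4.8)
The plan is to prove the three inequalities one at a time, each by elementary calculus; only the lower bound requires a small auxiliary-function argument, which is the one place where a derivative computation must be carried out with care. For the upper Mills-ratio bound $\Phibar(t)\leq\phi(t)/t$, I would write $\Phibar(t)=\int_t^\infty\phi(x)\,dx$ and use that $\phi(x)\leq (x/t)\,\phi(x)$ for $x\geq t>0$; since $\frac{d}{dx}\{-\phi(x)\}=x\phi(x)$, integrating this inequality over $(t,\infty)$ gives $\Phibar(t)\leq t^{-1}\int_t^\infty x\phi(x)\,dx=t^{-1}[-\phi(x)]_t^\infty=\phi(t)/t$.

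For the lower bound, set $g(t)=\Phibar(t)-\frac{t}{1+t^2}\phi(t)$ on $(0,\infty)$. Both terms tend to $0$ as $t\to\infty$, so $\lim_{t\to\infty}g(t)=0$. Using $\phi'(t)=-t\phi(t)$ and $\frac{d}{dt}\frac{t}{1+t^2}=\frac{1-t^2}{(1+t^2)^2}$, a short computation gives $g'(t)=-\phi(t)\cdot\frac{(1+t^2)^2+1-2t^2-t^4}{(1+t^2)^2}$, and since $(1+t^2)^2=1+2t^2+t^4$ the numerator collapses to the constant $2$, so $g'(t)=-2\phi(t)/(1+t^2)^2<0$. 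Hence $g$ is strictly decreasing with limit $0$ at $+\infty$, so $g(t)>0$ for all $t>0$, which rearranges to $\frac{t}{1+t^2}\phi(t)\leq\Phibar(t)$.

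For the exponential tail bound, I would use a Chernoff argument: for $Z\sim\Nsc_1(0,1)$ and any $\lambda>0$, Markov's inequality together with $\E\{e^{\lambda Z}\}=e^{\lambda^2/2}$ gives $\Phibar(t)=\P(Z>t)\leq e^{-\lambda t}\E\{e^{\lambda Z}\}=e^{-\lambda t+\lambda^2/2}$, and choosing $\lambda=t$ yields $\Phibar(t)\leq e^{-t^2/2}$. (Alternatively the substitution $x=t+u$ in the defining integral, with $e^{-tu}\leq 1$ for $u\geq 0$, gives the slightly sharper $\Phibar(t)\leq\tfrac12 e^{-t^2/2}$, but the stated form suffices for its uses in the paper.) There is no genuine conceptual obstacle anywhere; the only step demanding attention is the cancellation that reduces $g'$ to a constant multiple of $-\phi(t)/(1+t^2)^2$ in the lower-bound argument.
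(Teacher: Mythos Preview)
Your proposal is correct in all three parts: the upper Mills-ratio bound via $\phi(x)\le (x/t)\phi(x)$ and integration, the lower bound via the monotone auxiliary function $g(t)=\Phibar(t)-\frac{t}{1+t^2}\phi(t)$ (your derivative computation and the cancellation to $g'(t)=-2\phi(t)/(1+t^2)^2$ are right), and the Chernoff argument for the exponential tail.

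As for comparison with the paper: the paper does not actually supply a proof of this lemma. It simply states that the bounds are ``fairly well known'' and refers the reader to \citet{Dumbgen_2010}, \citet{Chiani_2003} and references therein. Your elementary self-contained arguments are exactly the standard ones found in those sources (the first pair of inequalities is the classical Gordon/Mills-ratio bound, and the exponential bound is the usual Chernoff inequality for a standard normal), so there is no methodological difference to speak of---you have just written out what the paper leaves to citation.
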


\begin{lemma}[Properties of the Truncated Normal Distribution] \label{P2:lemma_5}
Let $Z \sim \Nsc_1(0,\sigma^2)$ for some $\sigma > 0$, and let $\phi(\cdot)$ and $\Phi(\cdot)$ respectively denote the density and the CDF of the standard $\Nsc_1(0,1)$ distribution. For any $a, b$ such that $ - \infty \leq a < b \leq \infty$, consider the truncated random variable: $Z_{a,b} \equiv (Z \given a \leq Z \leq b)$. Let $\abar = (a/\sigma)$ and $\bbar = (b/\sigma)$. Then,
\begin{enumerate}[(i)]
\item $Z_{a,b}$ satisfies the following distributional properties:
\begin{alignat*}{2}
\nonumber & \E(Z \given a \leq Z \leq b) \; && = \; \sigma \frac{\phi(\abar) - \phi(\bbar)}{\Phi(\bbar) - \Phi(\abar)}, \\
\nonumber & \E(Z^2 \given a \leq Z \leq b) \; && = \; \sigma^2\left\{ 1 + \frac{\abar \phi(\abar) - \bbar \phi(\bbar)}{\Phi(\bbar) - \Phi(\abar)} \right\}, \;\; \mbox{and} \;\; \forall \; t \in \R, \\ 
\nonumber 
& \E\left\{ \exp(tZ) \given a \leq Z \leq b \right\} \; && = \; \exp\left(\sigma^2 t^2 /2\right)\left\{\frac{\Phi(\bbar - \sigma t) - \Phi(\abar - \sigma t)}{\Phi(\bbar) - \Phi(\abar)}\right\}. 
\end{alignat*}
\item For any $q \in (0,1]$, let $z_q$ and $\zbar_q$ respectively denote the $(q/2)^{th}$ and $(1 - q/2)^{th}$ quantiles of the standard $\Nsc_1(0,1)$ distribution, so that $ - z_q = \zbar_q \geq 0$. Define: 
    $f_q(t) = \frac{1}{2 \Phi(z_q)} \{ \Phi(z_q + t) + \Phi(z_q -t)\}$ $\forall \; t \in \R$. Then, the function $f_q(\cdot)$ satisfies: for any $t \in \R$,
\begin{equation*}
f_q(t) \; \leq \; \exp \left(t^2 \zbar_q^2\right) \;\; \forall \; q \in (0,1/2] \;\;\;  \mbox{and} \;\;\; f_q(t) \; \leq \; 2 \;\;  \forall \; q \in (1/2,1].
\end{equation*}
\end{enumerate}
\end{lemma}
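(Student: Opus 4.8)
Both parts hinge on the identity $\phi'(u)=-u\phi(u)$ together with completing the square, and part~(i) is a routine computation. After the substitution $u=x/\sigma$, the density of $Z_{a,b}$ is $\phi(u)/\{\Phi(\bbar)-\Phi(\abar)\}$ on $[\abar,\bbar]$; the mean formula then comes from $\int_{\abar}^{\bbar}u\phi(u)\,du=[-\phi(u)]_{\abar}^{\bbar}$, the second-moment formula from $\int_{\abar}^{\bbar}u^2\phi(u)\,du=[-u\phi(u)]_{\abar}^{\bbar}+\int_{\abar}^{\bbar}\phi(u)\,du$ (integration by parts with $u^2\phi(u)=-u\,\phi'(u)$), and the moment-generating function from $tx-x^2/(2\sigma^2)=\sigma^2t^2/2-(x-\sigma^2t)^2/(2\sigma^2)$, which rewrites $\int_a^b e^{tx}\sigma^{-1}\phi(x/\sigma)\,dx$ as $e^{\sigma^2t^2/2}\{\Phi(\bbar-\sigma t)-\Phi(\abar-\sigma t)\}$.

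For part~(ii) I would start from the closed form $f_q(t)=\{\Phibar(\zbar_q-t)+\Phibar(\zbar_q+t)\}/\{2\Phibar(\zbar_q)\}$, which follows by substituting $z_q=-\zbar_q$ and $\Phi(z_q)=\Phibar(\zbar_q)=q/2$ into the definition and using $\Phi(-x)=\Phibar(x)$ (equivalently, $e^{t^2/2}f_q(t)$ is the m.g.f.\ of $Z$ conditioned on $\{|Z|\ge\zbar_q\}$, which one obtains from part~(i) via $(Z\mid|Z|\ge\zbar_q)\stackrel{d}{=}\varepsilon\,(Z\mid Z\ge\zbar_q)$ with $\varepsilon$ an independent symmetric sign). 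In this form $f_q$ is even (so it suffices to take $t\ge0$), $f_q(0)=1$, and $f_q$ is increasing on $(0,\infty)$ with $f_q(t)\uparrow1/\{2\Phibar(\zbar_q)\}=1/q$. The case $q\in(1/2,1]$ is then immediate: $f_q(t)\le1/q<2$.

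For $q\in(0,1/2]$ write $b=\zbar_q\ge\Phi^{-1}(3/4)>0$; the goal $\log f_q(t)\le b^2t^2$ follows, since $\log f_q(0)=0$, once $(\log f_q)'(t)\le2b^2t$ is shown for $t>0$. Here $(\log f_q)'(t)=\{\phi(b-t)-\phi(b+t)\}/\{\Phibar(b-t)+\Phibar(b+t)\}$; its numerator equals $\phi(b-t)(1-e^{-2bt})=2\phi(b)e^{-t^2/2}\sinh(bt)$, and its denominator is increasing in $t$, hence $\ge2\Phibar(b)$. With the elementary bound $\sinh x\le xe^{x^2/2}$ this yields, when $b\le1$, the one-line estimate $(\log f_q)'(t)\le\{\phi(b)/\Phibar(b)\}e^{-t^2/2}\sinh(bt)\le\{\phi(b)/\Phibar(b)\}\,bt\le2b^2t$, the last step being the scalar inequality $\phi(b)\le2b\,\Phibar(b)$. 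I would establish this inequality for all $b\ge\Phi^{-1}(3/4)$ by showing that $b\mapsto2b\,\Phibar(b)/\phi(b)$ has derivative $2\{(1+b^2)\Phibar(b)/\phi(b)-b\}$, which is positive by the Mills lower bound $\Phibar(b)\ge b\phi(b)/(1+b^2)$ of Lemma~\ref{P2:lemma_4}; the ratio is thus increasing and exceeds its value $\Phi^{-1}(3/4)/\{2\phi(\Phi^{-1}(3/4))\}>1$ at $b=\Phi^{-1}(3/4)$ (a short numerical check). For $b>1$ one patches: the $\sinh$ estimate for $t$ up to a break-point $t^{\ast}\asymp\min(1,1/b)$ (where $e^{(b^2-1)(t^{\ast})^2/2}$ is still controlled by $2b/\{\phi(b)/\Phibar(b)\}$), the cruder bound $(\log f_q)'(t)\le\phi(b-t)/\Phibar(b-t)$ on the intermediate range, and, for $t\ge b$, $(\log f_q)'(t)\le\phi(b-t)/\Phibar(b-t)\le\sqrt{2/\pi}<2\le2b^3\le2b^2t$; alternatively, for large $t$ one may just invoke $f_q(t)\le1/q\le e^{b^2t^2}$, which holds once $b^2t^2\ge\log(1/q)$, with $\log(1/q)$ bounded via $\Phibar(b)\le e^{-b^2/2}$ (again Lemma~\ref{P2:lemma_4}).

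The only real obstacle is the $b>1$ bookkeeping — choosing the break-points so the three $t$-ranges overlap uniformly in $b$. The conceptually delicate point sits in the regime $q\uparrow1/2$ (i.e.\ $b$ near $\Phi^{-1}(3/4)\approx0.674$): as $t\downarrow0$ one has $f_q(t)\approx1+\tfrac12\,b\,\{\phi(b)/\Phibar(b)\}\,t^2$, which must be dominated by $e^{b^2t^2}\approx1+b^2t^2$, and this is exactly the inequality $\phi(b)\le2b\,\Phibar(b)$ — not implied by the cruder Mills bound $\phi(b)/\Phibar(b)\le b+1/b$, so the monotonicity argument above is essential.
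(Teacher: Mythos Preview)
Your treatment of part~(i) is correct and matches the standard derivations; the paper does not give its own proof here but simply cites references (Tallis, Johnson--Kotz, Horrace, Burkardt), so there is nothing to compare.

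For part~(ii) your approach differs substantially from the paper's. The paper does not offer an analytical argument at all: it states that the bound ``can be obtained, for instance, through direct numerical verification'' and skips the details. You instead attempt a genuine analytical proof via $(\log f_q)'(t)\le 2\zbar_q^2\,t$. Your argument for $q\in(1/2,1]$ is clean and correct, and the case $q\in(0,1/2]$ with $b=\zbar_q\le 1$ is also correct and elegant --- the identification of the sharp scalar inequality $\phi(b)\le 2b\,\Phibar(b)$ at $b=\Phi^{-1}(3/4)$, and the monotonicity proof via the Mills lower bound, is exactly the right observation and not recoverable from the cruder bound $\phi(b)/\Phibar(b)\le b+1/b$. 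What your route buys, relative to the paper's appeal to numerics, is an explanation of \emph{why} the constant in the exponent is $\zbar_q^2$ and why the result is tight as $q\uparrow 1/2$.

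That said, the $b>1$ regime is left as a sketch, and the patching you outline (a $\sinh$ estimate up to a break-point, then $\phi(b-t)/\Phibar(b-t)$ on an intermediate range, then the global cap $f_q(t)\le 1/q$) has not been checked to cover $(0,\infty)$ uniformly in $b$. In particular, on the intermediate range $t\in(t^\ast,b)$ the bound $(\log f_q)'(t)\le \phi(b-t)/\Phibar(b-t)$ can exceed $2b^2 t$ when $t$ is close to $b$ and $b$ is only moderately large (the hazard rate at $0$ is $\sqrt{2/\pi}\approx 0.8$, while $2b^2 t$ near $t=b$ is $\approx 2b^3$, so this is fine for large $b$ but needs care for $b$ just above $1$). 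This is bookkeeping rather than a conceptual obstacle --- you have correctly located the delicate regime at small $b$, not large $b$ --- but as written the $b>1$ case is not a complete proof. Since the paper itself resorts to numerical verification, either completing the patching or simply reducing the claim on a bounded $b$-interval (say $b\in[\Phi^{-1}(3/4),B]$ for any fixed $B$) to a finite check would match the paper's level of rigor.
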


Lemma \ref{P2:lemma_1} is a collection of several well-known properties of subgaussian distributions, and proofs and/or discussions of these results (or equivalent versions) can be found in several relevant references, including \citet{Vershynin_2010} for instance. Lemma \ref{P2:lemma_2} explicitly characterizes the subgaussian properties of (centered) binary random variables, and its proof can be found in \citet{Buldygin_2013}. Lemma \ref{P2:lemma_3} is one of many versions of the well-known Bernstein's inequality, and this particular version has been adopted from \citet{Van_de_Geer_2013}. Lemma \ref{P2:lemma_4} provides some useful and fairly well known bounds involving the standard normal CDF and density, and their mentions and/or discussions can be found in \citet{Dumbgen_2010}, \citet{Chiani_2003} and the references cited therein. Lastly, Lemma \ref{P2:lemma_5} provides some useful distributional properties of truncated normal distributions. For the results in (i), proofs and/or mentions of them (or much more general versions) can be found in a combination of references including \citet{Tallis_1961}, \citet{Johnson_1994}, \citet{Horrace_2004, Horrace_2005} and \citet{Burkardt_2014}. Result (ii) in Lemma \ref{P2:lemma_5} is a fairly straightforward conclusion, and can be obtained, for instance, through direct numerical verification. We skip the details here for the sake of brevity, and leave them to the interested reader to verify.

\section{Proof of Theorem \ref{P2:THM_1} }\label{P2:pf_thm1}

We first note that owing to the model assumptions in (\ref{P2:y_model})-(\ref{P2:s_model}), and the linearity condition (\ref{P2:lin_condn_eqn_1}) in Assumption \ref{P2:dlc_assmpn}, we have
\begin{align}
& \E(\bv'\bX \given \balpha_0'\bX, \bbeta_0'\bX, \epsilon, \epsilon^*) 
 \; = \; c_{\bv} + a_{\bv}(\balpha_0'\bX) + b_{\bv}(\bbeta_0'\bX)  \quad \forall \; \bv, \quad \mbox{and} \label{P2:pf1_eqn1} \\
\bv'\bmu_q  & \equiv \; \E_q (\bv'\bX) 
 \; = \; \E_q \{ \E(\bv'\bX \given \balpha_0'\bX, \bbeta_0'\bX, \epsilon, \epsilon^*) \} \nonumber \\
& = \; \E_q\{c_{\bv} + a_{\bv}(\balpha_0'\bX) + b_{\bv}(\bbeta_0'\bX)\} \; = \; c_{\bv} + a_{\bv}(\balpha_0'\bmu_q) + b_{\bv}(\bbeta_0'\bmu_q) .  \label{P2:pf1_eqn2}
\end{align}
Note that for all the steps in obtaining (\ref{P2:pf1_eqn2}), it is implicitly understood, as would be the case henceforth, that the values assumed by the conditioning variables $\{\bbeta_0'\bX, \balpha_0'\bX, \epsilon, \epsilon^*\}$ are such that the underlying restriction $\{S \in \Isc_q\}$ is indeed feasible so that $\E_q(\cdot\given \bbeta_0'\bX, \balpha_0'\bX, \epsilon, \epsilon^*)$ is well-defined, and can be further replaced by $\E(\cdot\given \bbeta_0'\bX, \balpha_0'\bX, \epsilon, \epsilon^*)$, owing to (\ref{P2:s_model}), as used while obtaining (\ref{P2:pf1_eqn2}). Next, note that $\L_q(\bv)=\E_q[\{Y - p_q - \bv'(\bX-\bmu_q)\}^2]$ satisfies
\begin{eqnarray}
\nonumber \L_q(\bv) 
\nonumber &=& \E_q(\E[\{Y - p_q - \bv'(\bX-\bmu_q)\}^2 \given \bbeta_0'\bX, \balpha_0'\bX, \epsilon, \epsilon^*])\\
\nonumber &\geq& \E_q \left\{ \left(\E[\{Y - p_q - \bv'(\bX-\bmu_q)\} \given \bbeta_0'\bX, \balpha_0'\bX, \epsilon, \epsilon^*]\right)^2\right\} \\
\nonumber &=& \E_q \left[  \left\{ Y - p_q - a_{\bv}(\balpha_0'\bX) - b_{\bv}(\bbeta_0'\bX) + a_{\bv}(\balpha_0'\bmu_q) + b_{\bv}(\bbeta_0'\bmu_q)\right\}^2 \right] \\
&=& \E_q \left[ \left\{ Y - p_q - (a_{\bv}\balpha_0 + b_{\bv} \bbeta_0)'(\bX - \bmu_q)\right\}^2 \right]  \equiv  \L_q( a_{\bv}\balpha_0 + b_{\bv} \bbeta_0). \label{P2:pf1_eqn3}
\end{eqnarray}
The first equality in obtaining (\ref{P2:pf1_eqn3}) follows from arguments similar to those mentioned earlier while obtaining (\ref{P2:pf1_eqn2}). The subsequent inequality follows from (conditional) Jensen's inequality, while in the penultimate step we have used (\ref{P2:pf1_eqn1})-(\ref{P2:pf1_eqn2}), as well as the fact that owing to (\ref{P2:y_model}), $Y$ is completely determined (hence constant) by the conditioning variables $\{\bbeta_0'\bX, \balpha_0'\bX, \epsilon, \epsilon^*\}$. Thus, (\ref{P2:pf1_eqn3}) now shows that the value of $\L_q(\cdot)$ at every $\bv \in \R^p$ is bounded below by its value at a corresponding point of the form $(a_{\bv}\balpha_0 + b_{\bv}\bbeta_0) \in \R^p$. In particular, this also applies to $\bv = \bbetabar_q$, which however is the unique minimizer of $\L_q(\bv)$ over $\bv \in \R^p$. Hence, $\bbetabar_q$ must be of the form $(a_{\bv}\balpha_0 + b_{\bv}\bbeta_0)$ with $\bv = \bbetabar_q$. This establishes (\ref{P2:thm1_eqn1}). \qed

To show (\ref{P2:thm1_eqn2}), we first note that owing to (\ref{P2:s_model}) and (\ref{P2:lin_condn_eqn_1})-(\ref{P2:lin_condn_eqn_2}),  
\begin{align}
& \E(\bv'\bX \given \balpha_0'\bX, \epsilon^*) 
 \; = \; (c_{\bv} + b_{\bv} \cbar) + (a_{\bv} +  b_{\bv}\abar)(\balpha_0'\bX) \quad \forall \; \bv,  \quad \mbox{and} \label{P2:pf1_eqn4}\\
&  \bv'\bmu_q  
 \; = \; \E_q \{ \E(\bv'\bX \given \balpha_0'\bX, \epsilon^*) \}
 \; = \; (c_{\bv} + b_{\bv} \cbar ) + (a_{\bv} + b_{\bv} \abar )(\balpha_0'\bmu_q),
\label{P2:pf1_eqn5}
\end{align}
where, for the first equality in obtaining (\ref{P2:pf1_eqn5}), we  used the fact that  $S$ is completely determined by  $\{\balpha_0'\bX, \epsilon^*\}$, so that the term 
$\E_q(\cdot \given \balpha_0'\bX, \epsilon^*)$ inside can be replaced by
$\E(\cdot \given \balpha_0'\bX, \epsilon^*)$,
owing to (\ref{P2:s_model}). Next, note that  $\L_q^*(\bv) \equiv \E_q[\{Y^*_q - p_q^* - \bv'(\bX-\bmu_q)\}^2]$ satisfies
\begin{eqnarray}
\nonumber \L_q^*(\bv) 
\nonumber &=& \E_q(\E[\{Y^*_q - p_q^* - \bv'(\bX-\bmu_q)\}^2 \given \balpha_0'\bX, \epsilon^*])\\
\nonumber &\geq& \E_q \left\{ \left(\E[\{Y^*_q - p_q^* - \bv'(\bX-\bmu_q)\} \given \balpha_0'\bX, \epsilon^*]\right)^2\right\} \\
\nonumber &=& \E_q \left[ \left\{ \E(Y^*_q \given  \balpha_0'\bX, \epsilon^*) - p_q^* - \E(\bv'\bX \given \balpha_0'\bX, \epsilon^*) + \bv'\bmu_q \right\}^2 \right] \\
&=& \E_q \left[\left\{ Y^*_q - p_q^* - (a_{\bv} + b_{\bv} \abar)\balpha_0'(\bX - \bmu_q)\right\}^2 \right]   \equiv  \L_q^*\{(a_{\bv} + b_{\bv} \abar) \balpha_0\}. \label{P2:pf1_eqn6}
\end{eqnarray}
The first equality in obtaining (\ref{P2:pf1_eqn6}) follows from arguments similar to those mentioned earlier while obtaining (\ref{P2:pf1_eqn5}). The subsequent inequality follows from (conditional) Jensen's inequality, while in the penultimate step we have used (\ref{P2:pf1_eqn4})-(\ref{P2:pf1_eqn5}), as well as the fact that owing to (\ref{P2:s_model}) and the very definition of $Y^*_q$, $Y^*_q$ is completely determined (hence constant) by the conditioning variables $\{\balpha_0'\bX, \epsilon^*\}$. Thus, (\ref{P2:pf1_eqn6}) now shows that the value of $\L_q^*(\cdot)$ at every $\bv \in \R^p$ is bounded below by its value at a corresponding point of the form $(a_{\bv}+ b_{\bv}\abar)\balpha_0 \in \R^p$. In particular, this also applies to $\bv = \balphabar_q$, which however is the unique minimizer of $\L_q^*(\bv)$ over $\bv \in \R^p$. Hence, $\balphabar_q$ must be of the form $(a_{\bv} + b_{\bv}\abar)\balpha_0$ with $\bv = \balphabar_q$. We have therefore established (\ref{P2:thm1_eqn2}), as required. The proof of Theorem \ref{P2:THM_1} is complete. \qed


\section{Proof of Theorem \ref{P2:THM_2}}\label{P2:pf_thm2}

The proof of this result relies substantially on a useful result from \citet{Negahban_2012}. We will therefore try to adopt some of their basic notations and terminology at the beginning of this proof in order to facilitate the use of that result.

For any $\bu \in \R^p$, let $\Rsc(\bu) = \| \bu \|_1$ and let $\Rsc^*(\bu) \equiv 
\mbox{sup}_{\bv \in \R^p \backslash \{\bzero\}}
\{\bu'\bv/ \Rsc(\bv)\}$ be the `dual norm' for $\Rsc(\cdot)$. Further, for any subspace $\Msc \subseteq \R^p$, let $\Psi(\Msc) \equiv \mbox{sup}_{\bu \in \Msc \backslash \{\bzero \}}
\{\Rsc(\bu)/ \| \bu\|_2\}$ denote its 
`subspace compatibility constant' 
with respect to 
$\Rsc(\cdot)$. Then, with $\Jsc, \Msc_{\Jsc}$ and $\Msc^{\perp}_{\Jsc}$ as defined
in Section \ref{P2:notn_assmpn}, 
it is not difficult to show that: (i) $\Rsc(\cdot)$ is \emph{decomposable} with respect to the orthogonal subspace pair $(\Msc_{\Jsc},\Msc^{\perp}_{\Jsc})$ for any $\Jsc \subseteq  \{1,\hdots,p\}$, in the sense that $\Rsc(\bu + \bv) = \Rsc(\bu) + \Rsc(\bv)$ $\forall \; \bu \in \Msc_{\Jsc}, \bv \in \Msc^{\perp}_{\Jsc}$; (ii) $\Rsc^*(\bu) = \| \bu \|_{\infty} \; \forall \; \bu \in \R^p$; and (iii) with $\Jsc = \Asc(\bv)$ for any $\bv \in \R^p$, $\Psi^2(\Msc_{\Jsc}) = s_{\bv}$. (We refer 
to \citet{Negahban_2012} for further discussions and/or proofs of these facts). Lastly, let $P_{\Jsc}(\bv)$ and $P_{\Jsc}^{\perp} (\bv)$ respectively denote the orthogonal projections of any $\bv \in \R^p$ onto $\Msc_{\Jsc}$ and $\Msc_{\Jsc}^{\perp}$, for any $\Jsc$ as above.

Then, using the decomposability of $\Rsc(\cdot)$ over $(\Msc_{\Jsc},\Msc^{\perp}_{\Jsc})$ with $\Jsc$ chosen to be $\Asc(\bbeta_0)$ in particular, and the restricted strong convexity of $\Lsc_{\nq}(\Usc_{\nq};\bbeta)$ at $\bbeta = \bbetabar_q$, under Assumption \ref{P2:strngconv_assmpn}, we have by Theorem 1 of \citet{Negahban_2012}, that for any given $\Usc_{\nq}$ and any choice of a corresponding $\lambda$ such that $\lambda \geq 4 \|\T_{\nq}\|_{\infty}$,
\begin{eqnarray}
\left\| \bbetahat_{\nq}(\lambda;\Usc_{\nq}) - \bbetabar_q \right\|_2^2 &\leq & 9 s_{\bbeta_0} \frac{\lambda^2}{\kappa_q^2}  + 4 \left| a_q \right| \frac{\lambda}{\kappa_q} \left\| \Pi_{\bbeta_0}^c(\balpha_0)\right\|_1, \label{P2:pf2_eqn1}
\end{eqnarray}
where, while applying the result from \citet{Negahban_2012}, we have chosen the parameter $\btheta^*$, in their notation, as $\btheta^* = \bbetabar_q$, and we have also used: $2 \Rsc^*[ \boldsymbol{\nabla}\{\Lsc_{\nq}(\Znq^*;\bbetabar_q)\}] \equiv 4 \|\T_{\nq} \|_{\infty}$ as well as $P_{\Asc(\bbeta_0)}^{\perp} (\bbetabar_q) = \Pi_{\Asc^c(\bbeta_0)}(\bbetabar_q) \equiv \Pi^c_{\bbeta_0}(\bbetabar_q) = \Pi^c_{\bbeta_0}(a_q \balpha_0)$, so that $$\Rsc\{P_{\Asc(\bbeta_0)}^{\perp} (\bbetabar_q)\} \; = \; |a_q| \| \Pi_{\bbeta_0}^c(\balpha_0)\|_1.$$
Under Assumption \ref{P2:sparsity_assmpn} conditions C1-C2, with $\lambda$ chosen as in (\ref{P2:pf2_eqn1}) being further assumed to be $(\bbeta_0,\balpha_0,q)$-admissible, $\exists$ some realization $u_{\nq}$ of $\Usc_{\nq}$ such that the corresponding estimator $\bbetahat_{\nq}(\lambda; u_{\nq})$ based on $u_{\nq}$ and the given choice of $\lambda$, satisfies the property: $\bbetahat_{\nq [j]}(\lambda; u_{\nq}) = 0$ for some $j \in \Asc^c(\bbeta_0) \cap \Asc(\balpha_0)$. Noting that the bound in (\ref{P2:pf2_eqn1}) is deterministic and applies to any realization of $\Usc_{\nq}$, including $u_{\nq}$ in particular, we then have:
\begin{eqnarray*}
\nonumber \left| a_q \right|^2 | \balpha_{0 [j]} |^2 &\equiv& \left\{ \bbetahat_{\nq [j]}(\lambda;u_{\nq}) - \bbetabar_{q [j]} \right\}^2  \;\; \leq \;\; \left\| \bbetahat_{\nq}(\lambda;u_{\nq}) - \bbetabar_q \right\|_2^2 \\
\nonumber &\leq& 9 s_{\bbeta_0} \frac{\lambda^2}{\kappa_q^2}  + 4 \left| a_q \right| \frac{\lambda}{\kappa_q} \left\| \Pi_{\bbeta_0}^c(\balpha_0)\right\|_1 \quad\quad \mbox{[using (\ref{P2:pf2_eqn1})]}, 
\end{eqnarray*}
and therefore,
\begin{eqnarray}
\nonumber \left| a_q \right|  &\leq& \frac{\lambda}{\kappa_q \left|\balpha_{0[j]}\right|^2} \left[ 2 \left\|\Pi_{\bbeta_0}^c(\balpha_0)\right\|_1 +
 \left\{4\left\|\Pi_{\bbeta_0}^c(\balpha_0)\right\|_1^2 + 9 s_{\bbeta_0} \left|\balpha_{0[j]}\right|^2 \right\}^{\half}\right] \\
 \nonumber &\leq& \frac{\lambda}{\kappa_q \hspace{0.015in} \Cmin^2(\balpha_0,\bbeta_0)} \left\{ 4 \left\|\Pi_{\bbeta_0}^c(\balpha_0)\right\|_1  + 3 s_{\bbeta_0}^{\half} \Cmax(\balpha_0,\bbeta_0) \right\}  \\
&\equiv&  \frac{\lambda}{\kappa_q}\dbar(\balpha_0,\bbeta_0), \label{P2:pf2_eqn2}
\end{eqnarray}
where 
the preliminary bound on $|a_q|$ in the third last step follows from noting that the previous step leads to a quadratic inequality in $|a_q|$ and therefore, some straightforward algebra involving standard theory of quadratic inequalities leads to this bound, as well as the final bound in (\ref{P2:pf2_eqn2}). It now follows that
\begin{eqnarray}
\nonumber && \left\| \bbetahat_{\nq}(\lambda; \Usc_{\nq}) - b_q \beta_0 \right\|_2  
\;\; \leq \; \left\| \bbetahat_{\nq}(\lambda; \Usc_{\nq}) - \bbetabar_q \right\|_2 + \left|a_q \right| \left\| \balpha_0 \right\|_2 \\
\nonumber &&  \quad\;\; \leq \; \left\{ 9 s_{\bbeta_0} \frac{\lambda^2}{\kappa_q^2}  + 4 \left| a_q \right| \frac{\lambda}{\kappa_q} \left\| \Pi_{\bbeta_0}^c(\balpha_0)\right\|_1 \right\}^{\half} + \left|a_q \right| \left\| \balpha_0 \right\|_2 \\
\nonumber && \quad\;\; \leq \; \frac{\lambda}{\kappa_q}\left\{ 9 s_{\bbeta_0} + 4 \left\| \Pi_{\bbeta_0}^c(\balpha_0)\right\|_1 \dbar(\balpha_0,\bbeta_0)\right\}^{\half} +
\frac{\lambda}{\kappa_q} \dbar(\balpha_0,\bbeta_0) \left\| \balpha_0 \right\|_2 \\
&& \quad\;\; \equiv \; \frac{\lambda}{\kappa_q} \left[ \left\{ 9s_{\bbeta_0} + d_1(\balpha_0,\bbeta_0) \right\}^{\half} + d_2(\balpha_0,\bbeta_0)\right]. \label{P2:pf2_eqn3}
\end{eqnarray}
This completes the proof of Theorem \ref{P2:THM_2}. \qed

\subsection{Remarks on the Proof Technique and the Results Used}\label{P2:subsec_NegahbanRemark}
It needs to be mentioned that the result (Theorem 1) from \citet{Negahban_2012} used in our proof of Theorem \ref{P2:THM_2} is quite a powerful one. It provides highly flexible and general (deterministic) bounds for penalized $M$-estimators based on loss functions satisfying some restricted strong convexity condition (like the one we consider in Assumption \ref{P2:strngconv_assmpn}) and regularizers based on norms that are `decomposable' over orthogonal subspace pairs, as shown to hold for the $L_1$ norm for subspace pairs of the form: $(\Msc_{\Jsc},\Msc^{\perp}_{\Jsc})$ for any $\Jsc \subseteq \{ 1, \hdots, p\}$. The bounds hold for any such subspace pair which, in our case, we choose to be $\{\Msc_{\Asc(\bbeta_0)},\Msc^{\perp}_{\Asc(\bbeta_0)}\}$. More importantly, the result provides deviation bounds of the estimator with respect to \emph{any} point that can be reasonably viewed as a possible `target', and \emph{not} necessarily the exact parameter that minimizes the expected loss. Only the lower bound for $\lambda$ needs to be appropriately defined for each such `target'. Of course, the deviation bound depends directly on $\lambda$, and is only useful if the (random) lower bound, defined by this `target', for $\lambda$ can be bounded above with high probability (w.h.p.) by a sequence converging fast enough to $0$. In our case, owing to Theorem \ref{P2:THM_1}, $\balphabar_q$, the minimizer of $\L_q^*(\cdot)$, is really the `official' target parameter. However, even with $\bbetabar_q$ as the `target', the corresponding deviation bound for $\bbetahat_{\nq}(\lambda)$ may still satisfy reasonable rates, since the lower bound $4 \|\T_{\nq}\|_{\infty}$ for $\lambda$, defined through $\bbetabar_q$, may \emph{still} have a fast enough convergence rate w.h.p. owing to the definition of $\bbetabar_q$ in \eqref{P2:y_sqloss} and the extreme tail surrogacy of $S$ as in Assumption \ref{P2:surr_assmpn}. In fact, this is precisely what we establish in Theorem \ref{P2:THM_3} later. Nevertheless, the main purpose of this discussion was to provide some helpful insights regarding the nuances underlying our result and its proof, and also elaborate to some extent on the general usefulness of the tools used here.




\section{Proofs of Theorem \ref{P2:COR_1} and Theorem \ref{P2:THM_3}}\label{P2:thm3.2_proof}

To prove Theorem \ref{P2:COR_1}, we first note that the following decomposition holds for $\T_{\nq}$.
\begin{eqnarray}
\T_{\nq} &\equiv& \bT_{\nq}(\bbetabar_q) \;\; = \; \Tnqo + \Tnqto - \Tnqtt, \quad \mbox{where} \label{P2:tnq_def} \\
\Tnqo &=& \frac{1}{\nq} \sum_{i=1}^{\nq} (Y^*_{q,i} - Y_i)(\bX_i - \bXbar_{\nq}), \label{P2:tnq1_def}\\
 \Tnqto &=& \frac{1}{\nq} \sum_{i=1}^{\nq} \{ Y_i - p_q - \bbetabar_q'(\bX_i - \bmu_q)\} (\bX_i - \bmu_q), \quad \mbox{and} \label{P2:tnq21_def}\\
\Tnqtt &=& \frac{1}{\nq} \sum_{i=1}^{\nq} \{Y_i - p_q - \bbetabar_q'(\bX_i - \bmu_q)\} (\bXbar_{\nq} - \bmu_q) . \label{P2:tnq22_def}
\end{eqnarray}
In Theorem \ref{P2:THM_3} below, we control $\| \T_{\nq} \|_{\infty}$ by controlling the three terms
$\|\Tnqo\|_{\infty}$, $\|\Tnqto\|_{\infty}$ and $\|\Tnqtt\|_{\infty}$ individually, based on general and explicit tail bounds for each. We first state and prove Theorem \ref{P2:THM_3} and thereafter, complete the proof of Theorem \ref{P2:COR_1} as a consequence.

\begin{theorem}\label{P2:THM_3}
Suppose $\bX \sim \SGq (\sigma_q^2)$, as defined in Assumption \ref{P2:subgaussian_defn}, for some constant $\sigma_q > 0$ allowed to depend on $q$. For any $a \in [0,1]$, define $\widetilde{a} \geq 0$ as: $\widetilde{a} = 0$ if $a \in \{0,1\}$, $\widetilde{a} = 1/2$ if $a = 1/2$, and $\widetilde{a} = [(a-1/2)/\log\{a/(1-a)\}]^{1/2}$ if $a \notin \{0,1,1/2\}$. Let $\pqtil$ and $\piqtil$ denote $\widetilde{a}$ for $a = p_q$ and $a = \pi_q$ respectively. Further, let $\gamma_q^2 = (\pqtil + \sigma_q \| \bbetabar_q \|_2)^2$. 
Then, for any given $\epsilon_1, \epsilon_2, \epsilon_3, \epsilon_4, \epsilon_5 >0$,
\begin{align}
&
(i) \;\; \P_q\left\{\left\|\T_{\nq}^{(1)}\right\|_{\infty} > \epsilon_1(\pi_q + \epsilon_2)\right\} \; \leq \; 2 \exp\left\{-\frac{\epsilon_1^2}{2\sigma_q^2} + \log (\nq p)\right\} + \exp \left( - \frac{\nq \epsilon_2^2}{2 \piqtil^2} \right), \nonumber \\ 
& (ii) \; \P_q\left\{\left\|\T_{\nq, 1}^{(2)}\right\|_{\infty} > 2 \sigma_q \gamma_q \left(2\sqrt{2}\epsilon_3 +  \epsilon_3^2\right) \right\} \; \leq \; 2 \exp \left(-\nq \epsilon_3^2 + \log p\right), \;\; \mbox{and} \nonumber \\
& (iii) \; \P_q\left( \left\|\T_{\nq, 2}^{(2)}\right\|_{\infty} > \epsilon_4 \epsilon_5 \right) \; \leq  \; 2 \exp \left(-\frac{\nq\epsilon_4^2}{2 \sigma_q^2} + \log p \right) +
2 \exp \left(-\frac{\nq\epsilon_5^2}{2 \gamma_q^2}\right). \label{P2:thm3_probbound_1}
\end{align}
\end{theorem}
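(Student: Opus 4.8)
The plan is to bound the three pieces $\Tnqo$, $\Tnqto$, $\Tnqtt$ of the decomposition (\ref{P2:tnq_def}) separately, exactly as (i)--(iii) are phrased. Throughout I use that under $\P_q$ the observations $\bD_i=(Y_i,S_i,\bX_i')'$ in $\Fscr_{\nq}$ are i.i.d., so every summand below is an i.i.d.\ function of a single $\bD_i$, together with three subgaussianity facts. First, for each coordinate $j$, $\bX_{i[j]}-\bmu_{q[j]}\sim\SGq(\sigma_q^2)$ by taking $\bt=\be_j$ in Assumption \ref{P2:subgaussian_defn}. Second, the ``residual'' $\xi_i:=Y_i-p_q-\bbetabar_q'(\bX_i-\bmu_q)$ has $\E_q(\xi_i)=0$ (immediate from $p_q=\E_q(Y)$, $\bmu_q=\E_q(\bX)$) and $\xi_i\sim\SGq(\gamma_q^2)$ with $\gamma_q=\pqtil+\sigma_q\|\bbetabar_q\|_2$: this combines Lemma \ref{P2:lemma_2} on the centered binary variable $Y_i-p_q$ (parameter $\pqtil^2$) with the first fact applied to $\bbetabar_q'(\bX_i-\bmu_q)$ (parameter $\sigma_q^2\|\bbetabar_q\|_2^2$), via the not-necessarily-independent sum rule in Lemma \ref{P2:lemma_1}(ii). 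Third, $W_i:=|Y_{q,i}^*-Y_i|=1(Y_i\neq Y_{q,i}^*)$ is Bernoulli$(\pi_q)$ under $\P_q$, so $W_i-\pi_q\sim\SGq(\piqtil^2)$ by Lemma \ref{P2:lemma_2}.

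For bound (i), write $D_i=Y_{q,i}^*-Y_i\in\{-1,0,1\}$, so $|D_i|=W_i$ and, for each $j$, $|(\Tnqo)_{[j]}|\le \nq^{-1}\sum_iW_i|\bX_{i[j]}-\bXbar_{\nq[j]}|\le (\nq^{-1}\sum_iW_i)\,\max_{i,j}|\bX_{i[j]}-\bXbar_{\nq[j]}|$. For the average factor, iterating the independent-sum improvement in Lemma \ref{P2:lemma_1}(ii) gives $\nq^{-1}\sum_i(W_i-\pi_q)\sim\SGq(\piqtil^2/\nq)$, so the one-sided bound in Lemma \ref{P2:lemma_1}(i) yields $\P_q\{\nq^{-1}\sum_iW_i>\pi_q+\epsilon_2\}\le\exp(-\nq\epsilon_2^2/(2\piqtil^2))$. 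For the maximum, note $\bX_{i[j]}-\bXbar_{\nq[j]}=(1-\nq^{-1})(\bX_{i[j]}-\bmu_{q[j]})-\nq^{-1}\sum_{k\neq i}(\bX_{k[j]}-\bmu_{q[j]})$ is a mean-zero sum of \emph{independent} $\SGq(\sigma_q^2)$ terms, hence itself $\SGq(\sigma_q^2(1-\nq^{-1}))\subseteq\SGq(\sigma_q^2)$; applying the maximal inequality Lemma \ref{P2:lemma_1}(iv) to these $\nq p$ variables gives $\P_q\{\max_{i,j}|\bX_{i[j]}-\bXbar_{\nq[j]}|>\epsilon_1\}\le 2\exp(-\epsilon_1^2/(2\sigma_q^2)+\log(\nq p))$. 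A union bound over the two bad events gives (i), since off them $\|\Tnqo\|_\infty\le\epsilon_1(\pi_q+\epsilon_2)$.

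For bound (ii), $(\Tnqto)_{[j]}=\nq^{-1}\sum_i\xi_i(\bX_{i[j]}-\bmu_{q[j]})$ is a mean-zero average of i.i.d.\ products of subgaussians, hence sub-exponential, so I apply Bernstein's inequality (Lemma \ref{P2:lemma_3}) coordinatewise and union over $j$. By Cauchy--Schwarz and the moment bound in Lemma \ref{P2:lemma_1}(iii) (with $\Gamma(m+1)=m!$), for each integer $m\ge 2$, $\E_q|\xi_i(\bX_{i[j]}-\bmu_{q[j]})|^m\le(\E_q\xi_i^{2m})^{1/2}(\E_q(\bX_{i[j]}-\bmu_{q[j]})^{2m})^{1/2}\le 2\,m!\,(2\sigma_q\gamma_q)^m=(m!/2)(2\sigma_q\gamma_q)^{m-2}(16\sigma_q^2\gamma_q^2)$, so the Bernstein hypothesis holds with $K=2\sigma_q\gamma_q$ and variance proxy $16\sigma_q^2\gamma_q^2$; Lemma \ref{P2:lemma_3} then gives $\P_q\{|(\Tnqto)_{[j]}|\ge 4\sqrt2\sigma_q\gamma_q\epsilon_3+2\sigma_q\gamma_q\epsilon_3^2\}\le 2\exp(-\nq\epsilon_3^2)$, and since $4\sqrt2\sigma_q\gamma_q\epsilon_3+2\sigma_q\gamma_q\epsilon_3^2=2\sigma_q\gamma_q(2\sqrt2\epsilon_3+\epsilon_3^2)$, a union bound over $j\in\{1,\dots,p\}$ gives (ii). For bound (iii), since $\bXbar_{\nq}-\bmu_q$ is $i$-free, $\Tnqtt=\bar\xi_{\nq}(\bXbar_{\nq}-\bmu_q)$ with $\bar\xi_{\nq}=\nq^{-1}\sum_i\xi_i\sim\SGq(\gamma_q^2/\nq)$ by Lemma \ref{P2:lemma_1}(ii), whence $\P_q\{|\bar\xi_{\nq}|>\epsilon_5\}\le 2\exp(-\nq\epsilon_5^2/(2\gamma_q^2))$; and each $\bXbar_{\nq[j]}-\bmu_{q[j]}\sim\SGq(\sigma_q^2/\nq)$, so Lemma \ref{P2:lemma_1}(iv) gives $\P_q\{\|\bXbar_{\nq}-\bmu_q\|_\infty>\epsilon_4\}\le 2\exp(-\nq\epsilon_4^2/(2\sigma_q^2)+\log p)$. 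Off both bad events $\|\Tnqtt\|_\infty\le\epsilon_4\epsilon_5$, so a union bound gives (iii).

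The main obstacle is bound (ii): getting the constants to line up requires first extracting the subgaussian parameter $\gamma_q$ of $\xi_i$ (itself produced only by carefully adding the binary and linear parts via the dependent-sum rule), then pushing it through Cauchy--Schwarz and Lemma \ref{P2:lemma_1}(iii), and finally verifying that the resulting factorial moment bound matches \emph{exactly} the $K=2\sigma_q\gamma_q$, variance-proxy $=16\sigma_q^2\gamma_q^2$ form demanded by the version of Bernstein's inequality in Lemma \ref{P2:lemma_3}. Bounds (i) and (iii) are then routine decomposition-plus-maximal-inequality-plus-union-bound arguments, the only mildly delicate point being the observation that $\bX_{i[j]}-\bXbar_{\nq[j]}$ remains $\SGq(\sigma_q^2)$.
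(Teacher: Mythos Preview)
Your proof is correct and follows essentially the same approach as the paper's own proof: the same three-part decomposition, the same subgaussian parameters $\sigma_q^2(1-\nq^{-1})$, $\piqtil^2$, $\gamma_q^2$ obtained via Lemmas \ref{P2:lemma_1}(ii) and \ref{P2:lemma_2}, the same Cauchy--Schwarz plus Lemma \ref{P2:lemma_1}(iii) moment bound feeding into Bernstein's inequality (Lemma \ref{P2:lemma_3}) with $K=2\sigma_q\gamma_q$ and $\sigma=4\sigma_q\gamma_q$ for part (ii), and the same maximal-inequality-plus-union-bound arguments for parts (i) and (iii). The only minor point you leave implicit is that $\E_q\{\xi_i(\bX_{i[j]}-\bmu_{q[j]})\}=0$ follows from the first-order optimality condition defining $\bbetabar_q$, which the paper states explicitly.
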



\begin{subsection}{Proof of Theorem \ref{P2:THM_3}}\label{P2:pf_thm3}

Since $\bX \sim \S\G_q(\sigma_q^2)$, $(\bX_i - \bXbar_{\nq}) \sim \S\G_q(\overline{\sigma}_q^2)$, owing to Lemma \ref{P2:lemma_1} (ii) and (v), where $\overline{\sigma}_q^2 = \sigma_q^2(1 - \nq^{-1})$. Then, $\Z^*_q\equiv |Y - Y^*_q |$ is a binary variable with $\P_q(\Z^*_q = 1) = \pi_q$. Hence, using Lemmas \ref{P2:lemma_2} and  \ref{P2:lemma_1} (ii), $(\Z^*_q - \pi_q) \sim \S\G_q(\piqtil^2)$ and $\nq^{-1} \sum_{i=1}^{\nq} (\Z^*_{q,i} - \pi_q) \sim \S\G_q(\piqtil^2/\nq)$. Using Lemma \ref{P2:lemma_1} (v) and (i), we therefore have: $\forall \; \epsilon_1, \epsilon_2 > 0$,
\begin{eqnarray}
&& \;\;\; \P_q \left(\underset{1 \leq i \leq \nq}{\max}\; \left\|\bX_i -\bXbar_{\nq} \right\|_{\infty}  > \epsilon_1 \right) \; \leq \; 2 \exp\left\{-\frac{\epsilon_1^2}{2\sigma_q^2} + \log  (\nq p)\right\}, \;\; \mbox{and} \label{P2:pf3_eqn1}\\
&& \;\;\; \P_q \left\{ \frac{1}{\nq}\sum_{i=1}^{\nq} \Z^*_{q,i} > (\pi_q + \epsilon_2)\right\} \; \leq \; \exp \left(- \frac{\nq\epsilon_2^2}{ 2 \piqtil^2} \right). \label{P2:pf3_eqn2}
\end{eqnarray}
Using (\ref{P2:pf3_eqn1})-(\ref{P2:pf3_eqn2}), 
we then have: $\forall \; \epsilon_1, \epsilon_2 >0$,
\begin{eqnarray}
&& \nonumber \P_q \left\{ \left\| \Tnqo \right\|_{\infty} > \epsilon_1(\pi_q + \epsilon_2) \right\} \\
&& \nonumber  \quad\;\; \equiv \; \P_q \left\{ \left\| \frac{1}{\nq} \sum_{i=1}^{\nq} (\bX_i - \bXbar_{\nq}) (Y^*_{q,i} - Y_i)\right\|_{\infty}  > \epsilon_1(\pi_q + \epsilon_2) \right\} \\
\nonumber && \quad\;\; \leq \; \P_q \left\{ \left( \underset{1 \leq i \leq \nq}{\max}\; \left\|\bX_i -\bXbar_{\nq} \right\|_{\infty} \right)
\left( \frac{1}{\nq}\sum_{i=1}^{\nq} \Z^*_{q,i}\right) > \epsilon_1(\pi_q + \epsilon_2) \right\} \\
\nonumber && \quad\;\; \leq \; \P_q \left(\underset{1 \leq i \leq \nq}{\max}\; \left\|\bX_i -\bXbar_{\nq} \right\|_{\infty}  > \epsilon_1 \right)  +  \P_q \left\{ \frac{1}{\nq}\sum_{i=1}^{\nq} \Z^*_{q,i} > (\pi_q + \epsilon_2)\right\} \\
&& \quad\;\; \leq \; 2 \exp\left\{-\frac{\epsilon_1^2}{2\sigma_q^2} + \log (\nq p) \right\} + \; \exp\left(- \frac{\nq \epsilon_2^2}{2 \piqtil^2} \right). \label{P2:pf3_eqn3}
\end{eqnarray}
(\ref{P2:pf3_eqn3}) therefore establishes the first of the three bounds in (\ref{P2:thm3_probbound_1}). \qed

Next, let us define: $\bXtil_q = (\bX - \bmu_q)$, $\Ytil_q = (Y - p_q)$ and $\Ztil_q = (\Ytil_q - \bbetabar_q'\bXtil_q)$. Then, $\bXtil_q \sim \S\G_q(\sigma_q^2)$ by assumption, $\Ytil_q \sim \S\G_q(\pqtil^2)$ owing to Lemma \ref{P2:lemma_2}, and $\Ztil_q \sim \S\G_q(\gamma_q^2)$ owing to Lemma \ref{P2:lemma_1} (ii) and (v). Hence, applying Lemma \ref{P2:lemma_1} (iii) to $\Ztil_q$ and $\bXtil_q$, we then have: for each $j \in \{1, \hdots, p\}$ and any integer $m \geq 2$, 
\begin{eqnarray*}
\nonumber \E_q \left(| \bXtil_{q [j]} \Ztil_q |^m \right) &\leq & \left\{ \E_q \left(| \bXtil_{q [j]} | ^{2m} \right) \right\}^{\half} \left\{ \E_q \left( |\Ztil_q |^{2m}\right) \right\}^{\half} \\
\nonumber & \leq& \left\{2^{m+1}\sigma_q ^{2m} \Gamma(m + 1)\right\}^{\half} \left\{ 2^{m+1} \gamma_q^{2m} \Gamma(m + 1)\right\}^{\half} \\
\nonumber & = & 2 \{\Gamma(m+1)\} (2 \sigma_q \gamma_q)^m  \; = \; \frac{m !}{2} (2 \sigma_q \gamma_q)^{m-2} (4 \sigma_q \gamma_q)^2,
\end{eqnarray*}
where the first inequality follows from Holder's inequality, and the rest are due to Lemma \ref{P2:lemma_1} (iii) applied to $\Ztil_q$ and $\bXtil_{q [j]}$. Next, note that  $\E_q(\bXtil_q \Ztil_q) = \bzero$ by definition of $\bbetabar_q$ and further, the above bound ensures that  $\bXtil_{q [j]} \Ztil_q$ satisfies the moment conditions required in Lemma \ref{P2:lemma_3} with $\sigma \equiv 4 \sigma_q \gamma_q$ and $K \equiv 2 \sigma_q \gamma_q$. Hence applying Lemma \ref{P2:lemma_3}, we have: $\forall \; \epsilon_3 > 0$, 
\begin{eqnarray}
\nonumber && \P_q\left\{\left\|\T_{\nq, 1}^{(2)}\right\|_{\infty} > 2 \sigma_q \gamma_q (2\sqrt{2}\epsilon_3 +  \epsilon_3^2) \right\} \\
\nonumber && \quad\;\; \equiv \;\; \P_q\left\{\left\| \frac{1}{\nq} \sum_{i=1}^{\nq} \bXtil_{q,i} \Ztil_{q,i}\right\|_{\infty} > 2 \sigma_q \gamma_q (2\sqrt{2}\epsilon_3 +  \epsilon_3^2) \right\}\\
\nonumber && \quad\;\; \leq \;\; \sum_{j=1}^p \P_q \left\{ \frac{1}{\nq} \left| \sum_{i=1}^{\nq} \bXtil_{q,i [j]} \Ztil_{q,i [j]} \right| > 2 \sigma_q \gamma_q (2\sqrt{2}\epsilon_3 +  \epsilon_3^2) \right\} \\
&& \quad\;\; \leq  \;\; 2p \hspace{0.015in}\exp\left( -\nq \epsilon_3^2\right) \; \equiv \; 2 \exp\left(-\nq \epsilon_3^2 + \log p\right), \label{P2:pf3_eqn4}
\end{eqnarray}
where the first inequality follows from an application of the union bound, and the next one follows from Lemma \ref{P2:lemma_3}. Thus, (\ref{P2:pf3_eqn4}) establishes the second bound in (\ref{P2:thm3_probbound_1}). \qed

To obtain the third bound in (\ref{P2:thm3_probbound_1}), we now note that:  $(\bXbar_{\nq} - \bmu_q) \sim \S\G_q(\sigma_q^2/\nq)$ and $\nq^{-1} \sum_{i=1}^{\nq}\Ztil_{q,i} \sim \S\G_q(\gamma_q^2/\nq)$, both of which follow from using Lemma \ref{P2:lemma_1} (ii). Hence, using Lemma \ref{P2:lemma_1} (v) and (i) respectively, and noting that $\E_q(\Ztil_q) = 0$, we have:
\begin{alignat}{2}
& \P_q \left( \left\|\bXbar_{\nq} - \bmu_q\right\|_{\infty}  > \epsilon_4 \right) \; && \leq  \; 2 \exp\left\{-\frac{\nq \epsilon_4^2}{2\sigma_q^2} + \log  p \right\}, \quad \mbox{and} \label{P2:pf3_eqn5}\\
& \P_q \left( \frac{1}{\nq} \left| \sum_{i=1}^{\nq} \Ztil_{q,i} \right| > \epsilon_5 \right) \; && \leq \; 2\exp \left( - \frac{\nq\epsilon_5^2}{2 \gamma_q^2} \right). \label{P2:pf3_eqn6}
\end{alignat}
From (\ref{P2:pf3_eqn5})-(\ref{P2:pf3_eqn6}), and  the definition of $\Tnqtt$ in (\ref{P2:tnq22_def}), we then have: $\forall \; \epsilon_4, \epsilon_5 > 0$,
\begin{align}
\P_q \left( \left\| \Tnqtt \right\|_{\infty} > \epsilon_4 \epsilon_5 \right) 
& \leq \hspace{0.015in} \P_q \left\{  \left\|\bXbar_{\nq} -\bmu_q \right\|_{\infty} \left( \frac{1}{\nq} \left| \sum_{i=1}^{\nq} \Ztil_{q, i} \right| \right)
> \epsilon_4 \epsilon_5 \right\} \nonumber \\
& \leq \hspace{0.002in} \P_q \left( \left\|\bXbar_{\nq} -\bmu_q \right\|_{\infty}  > \epsilon_4 \right)  +  \P_q \left(\frac{1}{\nq} \left| \sum_{i=1}^{\nq} \Ztil_{q,i} \right| > \epsilon_5 \right) \nonumber \\
& \leq \hspace{0.015in} 2 \exp\left(-\frac{ \nq \epsilon_4^2}{2\sigma_q^2} + \log p\right) + 2 \exp\left( - \frac{\nq \epsilon_5^2}{2 \gamma_q^2} \right). \label{P2:pf3_eqn7}
\end{align}
Thus, (\ref{P2:pf3_eqn7}) establishes the third bound in (\ref{P2:thm3_probbound_1}), and the proof of Theorem \ref{P2:THM_3} is now complete. We next complete the proof of Theorem \ref{P2:COR_1}, as a consequence of Theorem \ref{P2:THM_3}. \qed
\end{subsection}

\begin{proof}[Proof of Theorem \ref{P2:COR_1}]
The claim (\ref{P2:thm3_probbound_2}) in Theorem \ref{P2:COR_1} follows simply from noting the representations (\ref{P2:tnq_def})-(\ref{P2:tnq22_def}) of $\T_{\nq}$ in terms of $\Tnqo$, $\Tnqto$ and $\Tnqtt$, and straightforward usage of (\ref{P2:pf3_eqn3}), (\ref{P2:pf3_eqn4}) and (\ref{P2:pf3_eqn7}) through appropriate choices of $\{\epsilon_1, \hdots, \epsilon_5\}$ in terms of universal constants $(c_1,\hdots c_6) \equiv \bc$ as follows: $\epsilon_1 = \{ 2 \; \log (\nq^{c_1} p^{c_2})\}^{\half} \sigma_q $, for any $c_1, c_2 > 0$ such that $\mbox{max}\;(c_1, c_2) > 1$; $\epsilon_2 = \{c_3(1 - 2 \pi_q)/\nq\}^{\half}$ for any $c_3 >0$, and also noting the definition of $\piqtil$ when $\pi_q < 1/2$ (as assumed for this part); $\epsilon_3 = \{c_4 (\log  p)/\nq\}^{\half}$ for any $c_4 > 1$; and lastly, $\epsilon_4 = \{2 c_5 (\mbox{log} \; p)/\nq\}^{\half} \sigma_q$ and $\epsilon_5 = \{2 c_6 (\mbox{log} \; p)/\nq\}^{\half} \gamma_q$, for any $c_5 > 1$ and any $c_6 > 0$ respectively. This completes the proof of Theorem \ref{P2:COR_1}.
\end{proof}

\section{Illustrations for a Subclass of Models}\label{P2:splcase}

In this section, we illustrate how our main results can be applied for a special, but nevertheless interesting and frequently adopted, class of models. Apart from theoretical results and their proofs, the illustration also includes useful discussions and interpretations of these results under such settings.

We begin with a detailed result providing some key characterizations. In Theorem \ref{P2:thm_4} below (proved in Appendix \ref{P2:pf_thm4}), we characterize the distributional properties of $\{(S,\bX) \given$ $S \in \Isc_q\}$, and the behaviour of $\pi_q$,
for a specific family of distributions of $(Y,S,\bX')'$, where 
\begin{equation}
\bX \sim \Nsc_p(\bzero,\bSigma), \quad S = \balpha_0'\bX + \epsilon^* \;\; \mbox{and} \quad Y = 1(\bbeta_0'\bX + \epsilon > 0), 
\label{P2:splcase_setup}
\end{equation}
with
$\bSigma \succ 0$, $\epsilon^* \ind \bX$, $\epsilon \ind (S,\bX, \epsilon^*)$, $\epsilon^* \sim \Nsc_1(0,\sigma^2)$ for some $\sigma \geq 0$, and $\epsilon \sim \mbox{Logistic}(0,1)$. 
Thus, $S \given \bX$ and $Y \given \bX$ follow the standard linear and logistic regression models respectively, with $\sigma_S^2 \equiv \mbox{Var}(S) = \balpha_0' \bSigma \balpha_0 + \sigma^2$ and $\P(Y=1\given \bX, S) = \P(Y=1 \given \bX) = \psi(\bbeta_0'\bX)$, where $\psi(\cdot) = \exp(\cdot)/\{1 + \exp(\cdot)\}$ denotes the expit function. Further, note that \eqref{P2:splcase_setup} also implies
$$\bX \given S \sim  \Nsc_p(\bgamma_0, \Gamma) \;\; \mbox{with} \;\; \bgamma_0 = (\bSigma \balpha_0)/\sigma_S^2 \in \R^p, \;\; \Gamma_{p \times p} = (\bSigma - \sigma_S^2 \bgamma_0 \bgamma_0').$$
Let $\eta_0 \equiv \eta_0 (\bbeta_0) = (\bbeta_0' \bSigma \bbeta_0)^{1/2} \geq 0$ and $\rho_0 \equiv \rho_0(\balpha_0,\bbeta_0)  = \mbox{Corr}(\balpha_0'\bX,\bbeta_0'\bX)$. Assume (without any loss of generality) that $\rho \geq 0$ and let $\rhotil_0 \equiv \rhotil(\balpha_0,\bbeta_0) = \rho_0 (\balpha_0'\bSigma\balpha_0/\sigma_S^2)^{1/2} \geq 0$. Further, let $\lambda_{\max}(\bSigma) > 0$ denote the maximum eigenvalue of $\bSigma$.
%
Lastly, let $\Phi(\cdot)$ and $\phi(\cdot)$ denote the cumulative distribution function (CDF) and the density function of $\Nsc_1(0,1)$ respectively, and for any $q \in (0,1]$, let $z_q$ and $\zbar_q$ respectively denote the $(q/2)^{th}$ and $(1-q/2)^{th}$ quantiles of $\Nsc_1(0,1)$. Hence, $\delta_q = \sigma_Sz_q$ and $\deltabar_q = \sigma_S\zbar_q$.

\begin{theorem}\label{P2:thm_4}
Assume the set-up in (\ref{P2:splcase_setup}) above for $(Y,S,\bX')'$. Then,
\begin{enumerate}[(i)]
\item \label{P2:thm_4_r1}  $ \E_q(S) = 0$ and $\E_q(\bX) = \mathbf{0}$. Further, we have:
\begin{alignat*}{3}
& \nonumber \E(S \given S \geq \deltabar_q) \; && = \; - \hspace{0.015in} \E(S \given S \leq \delta_q) \; && = \; \sigma_S \frac{\phi(z_q)}{\Phi(z_q)}.  \\
& \nonumber \E(\bX \given S \geq \deltabar_q) \; && = \; - \hspace{0.015in} \E(\bX \given S \leq \delta_q) \; && = \; \bgamma_0 \frac{ \sigma_S\phi(z_q)}{\Phi(z_q)}.
\end{alignat*}
\item \label{P2:thm_4_r2} The $2^{nd}$ moments and the conditional variances, denoted  by $\Var_q(\cdot)$, of $S \given S \in \Isc_q$ and $\bX \given S \in \Isc_q$,  satisfy:
$\E(S^2 \given S \geq \deltabar_q) = \E(S^2 \given S \leq \delta_q) \equiv \Var_{q}(S)$ and $\E(\bX \bX' \given S \geq \deltabar_q) =  \E(\bX \bX' \given S \leq \delta_q) \equiv  \Var_q(\bX)$. Further 
\emph{
\begin{align*}
& \Var_{q}(S)  = \; \sigma_S^2 \left\{ 1 + \zbar_q \frac{\phi(z_q)}{\Phi(z_q)} \right\} \;\; \mbox{and}  \;\; \Var_q(\bX)    = \; \bSigma + \bgamma_0\bgamma_0'  \frac{ \sigma_S^2 \zbar_q \phi(z_q)}{\Phi(z_q)}.
\end{align*}
}
\item \label{P2:thm_4_r3} Let $\mbox{MGF}_{S,q}(t) \equiv \E_q\left\{\exp(tS)\right\}$ and $\mbox{MGF}_{\bX,q}(\bt) \equiv \E_q\left\{\exp\left(\bt'\bX\right)\right\}$ denote the moment generating functions (MGFs) of $S \given S \in \Isc_q$ and $\bX \given S \in \Isc_q$ respectively. Then, $\forall \; t \in \R$ and $\bt \in \R^p$,
\emph{
\begin{alignat*}{3}
& \mbox{MGF}_{S,q}(t) 
\; && = \; \frac{\exp\left(\sigma_S^2 t^2/2\right)}{2 \Phi(z_q)} \{ \Phi(z_q + \sigma_S t) + \Phi(z_q - \sigma_S t)\}, \;\; \mbox{and} \\ 
& \mbox{MGF}_{\bX,q}(\bt) 
\; && = \; \frac{\exp\left(\bt'\bSigma\bt/2\right)}{2 \Phi(z_q)} \{ \Phi(z_q + \sigma_S \bt'\bgamma_0) + \Phi(z_q - \sigma_S \bt'\bgamma_0)\}. 
\end{alignat*}
}
\item \label{P2:thm_4_r4} 
The MGFs for $S \given S \in \Isc_q$ and $\bX \given S \in \Isc_q$,
obtained in (iii) above, further satisfy the following subgaussian type bounds. For any $t \in \R$,
\emph{
\begin{alignat*}{3}
& \mbox{MGF}_{S,q}(t) \; && \leq \; \exp\left\{\half t^2 \sigma_S^2 (1+ 2 z_q^2)\right\}  \quad && \forall \; q \in (0,1/2], \\ 
& \mbox{MGF}_{S,q}(t) \; && \leq \; 4 \exp\left(\half t^2 \sigma_S^2\right)  \quad && \forall \; q \in (1/2, 1], 
\end{alignat*}
and for any $\bt \in \R^p$,
\begin{alignat*}{3}
& \mbox{MGF}_{\bX,q}(\bt) \; && \leq \; \exp\left[\half \|t\|_2^2 \left\{\lambda_{\max}(\bSigma) + 2 \sigma_S^2 z_q^2 \| \bgamma_0 \|_2^2 \right\}\right]   \quad
&& \forall \; q \in (0,1/2], \\ 
& \mbox{MGF}_{\bX,q}(\bt) \; && \leq \; 4 \exp\left\{\half \|t\|_2^2 \lambda_{\max}(\bSigma) \right\}  \quad
&&   \forall \; q \in (1/2,1]. 
\end{alignat*}
}
\item \label{P2:thm_4_r5} The misclassification error $\pi_q \equiv \half (\pi_q^+ + \pi_q^-)$ satisfies:  for any $q \in (0,1]$, 
\begin{eqnarray*}
\pi_q &\leq& \frac{\Phi\left(-\zbar_q - \sigma_S \bbeta_0'\bgamma_0 \right)}{\Phi\left(-\zbar_q \right)} \exp\left(\eta_0^2/2\right)   \\
&\leq& \exp\left\{\half \left(1 - \widetilde{\rho}_0^2 \right)\eta_0^2 - \zbar_q \widetilde{\rho}_0 \eta_0  \right\}  \frac {(\zbar_q^2 + 1)}{\zbar_q (\zbar_q + \widetilde{\rho}_0 \eta_0)}  \\
& \lesssim& C \exp\left\{\half \left(1 - \widetilde{\rho}_0^2 \right) \eta_0^2 - \zbar_q \widetilde{\rho}_0 \eta_0 \right\},
\end{eqnarray*}
where $C > 0$ is some universal constant.
\par\smallskip
\item \label{P2:thm_4_r6} The behavior of $\deltabar_q^2$ with respect to 
$q$ is reflected in the following bounds:
\begin{alignat*}{6}
&  \deltabar_q^2 \; && \equiv \; \sigma_S^2 \zbar_q^2 \; && = \; \sigma_S^2 z_q^2 \; && \equiv \; \delta_q^2 \; && \leq \; 2 \sigma_S^2 \log (q^{-1})      \qquad &&  \forall \; q \in (0,1].\\
& \deltabar_q^2 \; && \equiv \; \sigma_S^2 \zbar_q^2 \; && = \; \sigma_S^2 z_q^2 \; && \equiv \; \delta_q^2 \; && \geq \; 2 \sigma_S^2 \log \{(5q)^{-1}\}  \qquad && \forall \; q \in [0.0002,1].
\end{alignat*}
\end{enumerate}
\end{theorem}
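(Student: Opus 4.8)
The plan is to reduce every part of the statement to an explicit computation with the jointly Gaussian vector $(S,\bX')'$ and with the one‑dimensional truncated normal $S\given S\in\Isc_q$, for which Lemmas \ref{P2:lemma_4} and \ref{P2:lemma_5} supply all the machinery needed. Under \eqref{P2:splcase_setup}, $(S,\bX')'$ is centered Gaussian with $\mbox{Var}(S)=\sigma_S^2$ and $\mbox{Cov}(\bX,S)=\bSigma\balpha_0$, so $\bX\given S$ is Gaussian and linear in $S$ with $\E(\bX\given S)=\bgamma_0 S$ and $\mbox{Var}(\bX\given S)=\Gamma$, where $\Gamma+\sigma_S^2\bgamma_0\bgamma_0'=\bSigma$; I will also use repeatedly that $\Isc_q=(-\infty,-\deltabar_q]\cup[\deltabar_q,\infty)$ is symmetric about $0$, so the law of $S\given S\in\Isc_q$ is symmetric about $0$. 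Part (i): symmetry gives $\E_q(S)=0$, and then the tower property gives $\E_q(\bX)=\bgamma_0\E_q(S)=\bzero$; the two conditional first moments of $S$ are the means of one‑sided truncated $\Nsc_1(0,\sigma_S^2)$ laws read off from Lemma \ref{P2:lemma_5}(i) with $(a,b)=(\deltabar_q,\infty)$ and $(-\infty,\delta_q)$ (using $\deltabar_q=\sigma_S\zbar_q$, $\delta_q=-\sigma_S\zbar_q$, evenness of $\phi$, and $\Phi(z_q)=1-\Phi(\zbar_q)$), and those of $\bX$ follow again by the tower property. Part (ii) is analogous: by symmetry $\E(S^2\given S\geq\deltabar_q)=\E(S^2\given S\leq\delta_q)=\Var_q(S)$, which is the second‑moment formula in Lemma \ref{P2:lemma_5}(i); for $\bX$, write $\E_q(\bX\bX')=\E_q\{\Gamma+\bgamma_0\bgamma_0'S^2\}=\Gamma+\bgamma_0\bgamma_0'\Var_q(S)$ and substitute $\Gamma+\sigma_S^2\bgamma_0\bgamma_0'=\bSigma$. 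Part (iii): $\mbox{MGF}_{S,q}(t)$ is the equal‑weight average (weight $\tfrac12$ each, since $\P(S\leq\delta_q)=\P(S\geq\deltabar_q)=q/2$) of the two truncated‑normal MGFs of Lemma \ref{P2:lemma_5}(i); for $\bX$, conditioning on $S$ gives $\mbox{MGF}_{\bX,q}(\bt)=\exp(\tfrac12\bt'\Gamma\bt)\,\mbox{MGF}_{S,q}(\bt'\bgamma_0)$, and $\tfrac12\bt'\Gamma\bt+\tfrac12\sigma_S^2(\bt'\bgamma_0)^2=\tfrac12\bt'\bSigma\bt$ collapses the exponent to the stated form.

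For part (iv) I would feed the exact identities of (iii) into Lemma \ref{P2:lemma_5}(ii): the bracketed factor of $\mbox{MGF}_{S,q}(t)$ equals $f_q(\sigma_S t)$, which is at most $\exp(\sigma_S^2 t^2\zbar_q^2)$ for $q\in(0,1/2]$ and at most $2$ for $q\in(1/2,1]$; multiplying by the prefactor $\exp(\tfrac12\sigma_S^2 t^2)$ and using $\zbar_q^2=z_q^2$ yields the two bounds on $\mbox{MGF}_{S,q}$. For $\mbox{MGF}_{\bX,q}$, apply the same bound to $f_q(\sigma_S\bt'\bgamma_0)$ and combine with $\bt'\bSigma\bt\leq\lambda_{\max}(\bSigma)\|\bt\|_2^2$ and the Cauchy--Schwarz estimate $|\bt'\bgamma_0|\leq\|\bt\|_2\|\bgamma_0\|_2$.

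Part (v) is the substantive one. First, the distributional symmetry $(\bX,\epsilon,\epsilon^*)\mapsto(-\bX,-\epsilon,-\epsilon^*)$ sends $Y$ to $1-Y$ a.s.\ and $\{S\leq\delta_q\}$ to $\{S\geq\deltabar_q\}$, whence $\pi_q^-=\pi_q^+$ and $\pi_q=\P(Y=0\given S\geq\deltabar_q)=\P(Y=0,\,S\geq\deltabar_q)/\Phi(-\zbar_q)$. Since $\P(Y=0\given\bX)=1-\psi(\bbeta_0'\bX)=(1+e^{\bbeta_0'\bX})^{-1}\leq e^{-\bbeta_0'\bX}$, it remains to bound $\E\{e^{-\bbeta_0'\bX}\mathbf{1}(S\geq\deltabar_q)\}$. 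I would decompose $\bbeta_0'\bX=(\bbeta_0'\bgamma_0)S+W$ with $W\ind S$ Gaussian of variance $\eta_0^2(1-\rhotil_0^2)$; here $\bbeta_0'\bgamma_0=\bbeta_0'\bSigma\balpha_0/\sigma_S^2=\rhotil_0\eta_0/\sigma_S$ and $\mbox{Corr}(\bbeta_0'\bX,S)=\rhotil_0$, which is exactly the bookkeeping linking the covariance parameters to $(\eta_0,\rho_0,\rhotil_0)$. Independence factorizes the expectation as $\E e^{-W}\cdot\E\{e^{-(\rhotil_0\eta_0/\sigma_S)S}\mathbf{1}(S\geq\deltabar_q)\}=\exp\{\tfrac12\eta_0^2(1-\rhotil_0^2)\}\cdot\exp(\tfrac12\rhotil_0^2\eta_0^2)\,\Phi(-\zbar_q-\rhotil_0\eta_0)$, the last factor obtained by completing the square in a Gaussian integral; collecting the exponents to $\tfrac12\eta_0^2$ and using $\sigma_S\bbeta_0'\bgamma_0=\rhotil_0\eta_0$ gives the first bound. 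The second bound follows by bounding both normal tails with Lemma \ref{P2:lemma_4}, namely $\Phi(-\zbar_q-\rhotil_0\eta_0)\leq\phi(\zbar_q+\rhotil_0\eta_0)/(\zbar_q+\rhotil_0\eta_0)$ from above and $\Phi(-\zbar_q)\geq\zbar_q\phi(\zbar_q)/(1+\zbar_q^2)$ from below, together with $\phi(\zbar_q+\rhotil_0\eta_0)/\phi(\zbar_q)=\exp(-\zbar_q\rhotil_0\eta_0-\tfrac12\rhotil_0^2\eta_0^2)$; the third bound is the second with the rational factor $(\zbar_q^2+1)/\{\zbar_q(\zbar_q+\rhotil_0\eta_0)\}\leq 1+\zbar_q^{-2}$ absorbed into a universal constant, valid whenever $q$ is bounded away from $1$.

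Finally, part (vi) rests on $q=2\Phibar(\zbar_q)$ and $\deltabar_q^2=\delta_q^2=\sigma_S^2\zbar_q^2$. For the upper bound I would use the sharp two‑sided Gaussian tail bound $2\Phibar(t)=\P(|Z|>t)\leq e^{-t^2/2}$ for standard normal $Z$ (a routine refinement of Lemma \ref{P2:lemma_4}, following since $t\mapsto\tfrac12 e^{-t^2/2}-\Phibar(t)$ is nonnegative on $[0,\infty)$), which gives $q\leq e^{-\zbar_q^2/2}$ and hence $\zbar_q^2\leq 2\log(q^{-1})$. For the lower bound, substitute the Mills‑ratio estimate $\Phibar(\zbar_q)\geq\zbar_q\phi(\zbar_q)/(1+\zbar_q^2)$ into $q=2\Phibar(\zbar_q)$, take logarithms, and note that $\zbar_q^2\geq 2\log\{(5q)^{-1}\}$ reduces to the quadratic inequality $(1+\zbar_q^2)/(2\zbar_q)\leq 5/\sqrt{2\pi}$ in $\zbar_q$, whose solution set is a bounded interval; translating those endpoints through $\Phi$ produces the admissible range $q\in[0.0002,1]$ (the region $q>1/5$, where the right‑hand side of the claim is nonpositive, is handled trivially). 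I expect part (v) to be the main obstacle — partly the covariance bookkeeping just mentioned, partly because the three successively sharper bounds require the right decomposition of $\bbeta_0'\bX$ together with a careful two‑sided use of the Mills ratio — while the only other delicate point is pinning down the constant $5$ and the threshold $0.0002$ in (vi).
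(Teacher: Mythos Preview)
Your proposal is correct and follows essentially the same route as the paper: Lemma \ref{P2:lemma_5}(i) for the truncated-normal moments and MGFs, the regression decomposition $\bX=\bgamma_0 S+\beps$ (with $\beps\ind S$) to transfer everything from $S$ to $\bX$, Lemma \ref{P2:lemma_5}(ii) for the subgaussian bounds in (iv), the inequality $\psi(-u)\leq e^{-u}$ combined with the truncated-normal MGF for the first bound in (v) and the two-sided Mills ratio for the second, and the Mills-ratio inequalities plus the quadratic constraint $(1+\zbar_q^2)/\zbar_q\leq 5\sqrt{2/\pi}$ for (vi). One small point worth keeping: your use of the sharper tail bound $2\Phibar(t)\leq e^{-t^2/2}$ in the upper half of (vi) is exactly what is needed to get $\zbar_q^2\leq 2\log(q^{-1})$ without the stray $2\log 2$ that the stated form of Lemma \ref{P2:lemma_4} alone would leave behind.
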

\begin{remark}[Verification of Some of the Basic Conditions in our Main Results]\label{P2:splcase_rem1}
\emph{Result (iv) in Theorem \ref{P2:thm_4} implies that for any small $q$, $\bX \sim \S\G_q (\sigma_q^2)$ indeed for some $\sigma_q$, as required in Theorem \ref{P2:THM_3}. Further, result (vi) ensures that $\zbar_q^2$, and hence $\sigma_q^2$, diverges only at logarithmic orders, as $q \downarrow 0$, thereby showing that, at least in this case, $\sigma_q$ (and $\gamma_q$) appearing in the bound (\ref{P2:thm3_probbound_2}) only has a minor effect on the convergence rates of $\lambda \equiv 4 a_{\nq}$ and $\bbetahat_{\nq} (\lambda)$. Moreover, the strict positive definiteness of $\bSigma_q$, as shown by result (ii), combined with the fact that $\bX \sim \S\G_q (\sigma_q^2)$ also ensures that the restricted strong convexity in Assumption \ref{P2:strngconv_assmpn} can be ensured to hold in this case with high probability, using results from \citet{Negahban_2012} and \citet{Rud_Zhou_2013}, for some $\kappa_q \gtrsim \lambda_{\min}(\bSigma_q)$. As we show later in (\ref{P2:bqrate_eqn_new3}), $\lambda_{\min}(\bSigma_q) \geq \kappa$ $\forall \; q$ for some constant $\kappa > 0$ independent of $q$, and hence in this case, $\kappa_q$ further has no real impact on the rates of the bounds in our main results. 
}
\end{remark}

\begin{remark}[The Behavior of $\pi_q$ with Respect to $q$]\label{P2:splcase_rem2}\emph{
The bound for $\pi_q$ obtained in result (v) is a fairly sharp bound (especially for $q$ small enough). Apart from $\zbar_q$, the bound also depends critically on the constants $\eta_0$ and $\rhotil_0$ that can be respectively interpreted as the `strength' of the signal $\bbeta_0'\bX$, and (approximately) the correlation $\rho_0$ between the $\bbeta_0$ and the $\balpha_0$ directions. For this bound to obey a polynomial rate with respect to $q$, as we have assumed in our Extreme Tail Surrogacy Assumption \ref{P2:surr_assmpn}, 
it must behave as: $c_1 \exp (-c_2 \zbar_q^2 )$, owing to result (vi), for some constants $c_1, c_2 > 0$. However, treating $\eta_0$ and $\rhotil_0$ as universal constants, the bound clearly behaves only as: $c_1^* \exp(-c_2^* \zbar_q )$ for some constants $c_1^*, c_2^* >0$, thereby leading to a slower rate than desired.
}
\end{remark}

\begin{remark}[`Closeness' of $\bbeta_0$ and $\balpha_0$ for $\pi_q$ to be Desirably Small]\label{P2:splcase_rem3}\emph{
In the light of Remark \ref{P2:splcase_rem2}, it is 
helpful to envision a `regime' where $\eta_0$ and $\rhotil_0$ are allowed to vary with $q$, so that $\eta_0$ increases and $(1 -\rhotil_0^2)$ decreases, both albeit slowly enough, as $q \downarrow 0$. For a given choice of $q$, if $\eta_0$ is strong enough such that $\eta_0^2 \gtrsim d_1 \zbar_q^2$, and $(1 -\rhotil_0^2)$ is small enough so that $(1 -\rhotil_0^2)\eta_0^2 \lesssim d_2$, for some constants $d_1, d_2 > 0$, then clearly, the bound
for $\pi_q$ in result (v) starts behaving desirably as: $\pi_q \lesssim c_1 \exp (-c_2 \zbar_q^2)$ for some constants $c_1, c_2 > 0$. On the other hand, 
owing to result (vi), note that this regime also necessarily implies that:
$(1-\rho_0^2)$ $\leq $ $(1-\rhotil_0^2) \lesssim O(\zbar_q^{-2}) \lesssim O\{-(\log q)^{-1}\} \lesssim O\{1/(\log N)\}$, thus 
indicating that the $\balpha_0$ and $\bbeta_0$ directions must be fairly strongly correlated with each other, at least upto $1/(\log N)$ order. Hence, they must be `close' in this sense under this regime, a regime that is almost necessary in this case for 
(\ref{P2:misclass_error}) in Assumption \ref{P2:surr_assmpn}
to hold, and for our approach to be successful.
}
\end{remark}

Intuitively, the conclusion in Remark \ref{P2:splcase_rem3} 
makes sense since $\E(S \given \bX) \equiv \balpha_0'\bX$ and $\P(Y=1 \given\bX) \equiv \psi(\bbeta_0'\bX)$ are monotone in $\balpha_0'\bX$ and $\bbeta_0'\bX$ respectively, so that given $\bX$, the tails of $S$ will be closely linked to those of $\balpha_0'\bX$, and further owing to the surrogacy assumption in these tails, the corresponding $\psi(\bbeta_0'\bX)$ should be close to $0/1$ indicating that $\bbeta_0'\bX$ should also lie in its own tails, thus implying the necessity of a fairly strong correlation between the $\balpha_0$ and $\bbeta_0$ directions of $\bX$. These facts are illustrated in Figure \ref{P2:surrogacy_plot} (a)-(d).

However, this does \emph{not} trivialize the problem or our proposed methods. While the $\balpha_0$ direction does need to be `close' (in the sense of Remark \ref{P2:splcase_rem3}) to the $\bbeta_0$ direction in this case, it only needs to be so in $1/(\log N)$ order. Hence, while the $\balpha_0$ direction can indeed be near perfectly estimated from $\Usc_N^*$ at a rate of $O(N^{-1/2})$, it \emph{may not} be a reasonable estimator of the $\bbeta_0$ direction itself, which it might possibly be only able to estimate at sub-optimal logarithmic rates, instead of the polynomial rates we have ensured, under the additional structured sparsity assumptions, for our estimator based on $\Usc_{\nq}$.

In fact, this is \emph{exactly} what we observed in all our simulation studies in Section \ref{P2:sim} (and the real data analysis in Section \ref{P2:data_ex} as well), where the simulations were based on the set-up in (\ref{P2:splcase_setup}) with the $\balpha_0$ and $\bbeta_0$ directions being further `close' in the sense of Remark \ref{P2:splcase_rem3}. But the performance of $\balpha_0$ (estimated near perfectly from $\Usc_N^*$) as an estimator of the $\bbeta_0$ direction was clearly found to be much worse throughout, compared to the U$_{\mbox{\tiny LASSO}}$ as well as the S$_{\mbox{\tiny LASSO}}$ estimators, under all the settings and in terms of every criteria that we considered.

\begin{figure}
\centering
\mbox{
\subfigure[$\P(Y = 1 \given \bX)$ vs. $\bbeta_0'\bX$]{\includegraphics[scale=0.36]{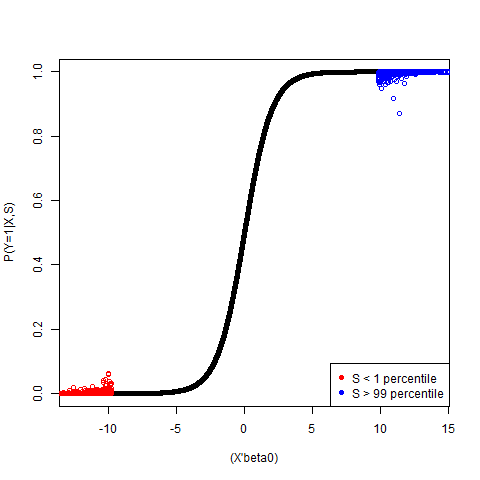}}
\subfigure[$S$ vs. $\balpha_0'\bX$]{\includegraphics[scale=0.36]{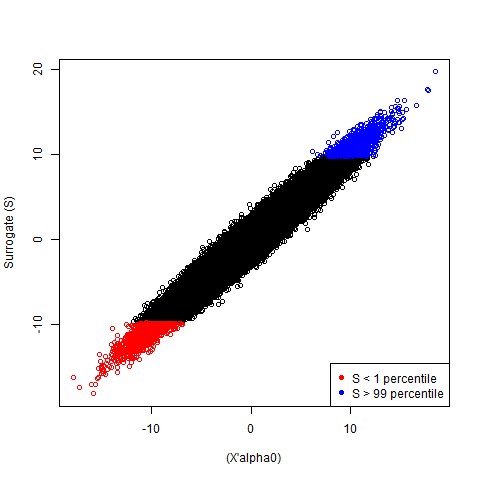}} } \\
\mbox{
\subfigure[$S$ vs. $\balpha_0'\bX$ (and $Y$)]{\includegraphics[scale=0.36]{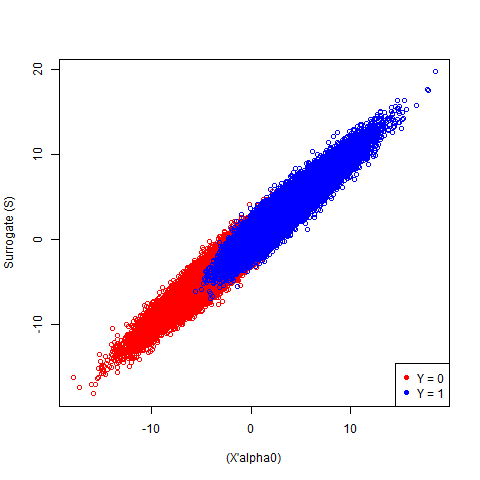}}
\subfigure[$\balpha_0'\bX$ vs. $\bbeta_0'\bX$]{\includegraphics[scale=0.36]{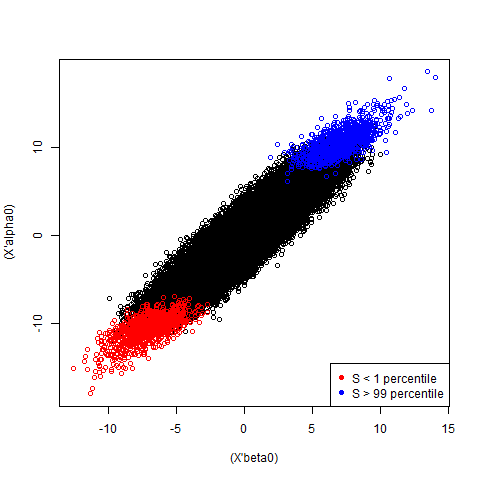}}
}
\caption{Graphical illustration of the behaviors of $\{S,\balpha_0'\bX,\bbeta_0'\bX, Y, \psi(\bbeta_0'\bX)\}$ with respect to each other, as well as the surrogacy of $S$ when $S \in \Isc_q$, under the set-up in (\ref{P2:splcase_setup}). The plots were generated using $N = 10^5$, $q = 0.02$, $p = 50$, $\bSigma = I_p$ (the $p \times p$ identity matrix), $\sigma = 1$, $\bbeta_0 = (\bone'_{c_p}, 0.5*\bone'_{c_p}, \bzero'_{p-2c_p})'$ with $c_p = \lfloor p^{\half} \rfloor$, and $\balpha_0 = \bbeta_0 + \bxi/(\log  N)$ with $\{ \bxi_{[j]}\}_{j=1}^p$ being $p$ \emph{fixed} realizations from $\mbox{Uniform}(0, 5)$. In plots (a), (b) and (d), the points with $S \leq \delta_q$ and $S \geq \deltabar_q$ are highlighted in red and blue respectively. In plot (c), the points with $Y = 0$ and $Y=1$ are highlighted in red and blue respectively. }\label{P2:surrogacy_plot}
\end{figure}

\begin{remark}[Recovery only upto Scalar Multiples \emph{even if} the Link is Known]\label{P2:splcase_rem4}
\emph{
It is worth understanding the `close' relationship of $Y \equiv 1(\bbeta_0'\bX + \epsilon > 0)$ in (\ref{P2:splcase_setup}) to its corresponding `link-free' version: $\mathbb{Y} \equiv 1(\bbeta_0'\bX > 0)$, when $S \in \Isc_q$ with $q$ small. Using Theorem \ref{P2:thm_4} and Lemma \ref{P2:lemma_4}, it can be shown that as $q \downarrow 0$, $\E_q\{(\bbeta_0'\bX)^2\} \asymp C_1 + C_2 \log(q^{-1})$ for some universal constants $C_1,C_2 > 0$ depending only on $(\bSigma,\balpha_0,\bbeta_0, \sigma_S)$ but not on $q$. Hence, the signal component in $Y$ (slowly) diverges, in $L_2(\P_q)$ norm, as $q \downarrow 0$. On the other hand, the noise component 
$\epsilon \ind (S,\bX)$ satisfies: $\E_q(\epsilon^2) = 1$ for \emph{any} $q$. Thus, given $S \in \Isc_q$ with $q$ small, the `strength' of the signal $\bbeta_0'\bX$ in $Y$ essentially `overwhelms' the \emph{fixed} level of noise contributed by $\epsilon$, and the latter plays very little role in determining the sign of $(\bbeta_0'\bX + \epsilon)$. Hence, 
the original outcome $Y$ quickly approaches the `noiseless' link-free variable $\mathbb{Y}$, that is $\P_q(Y \neq \mathbb{Y}) \approx 0$ for small enough $q$ (see also Figure \ref{P2:surrogacy_plot} for illustrations). It is important to note that $\mathbb{Y}$ is invariant to scalar transformations of $\bbeta_0$, thereby making $\bbeta_0$ identifiable \emph{only} upto scalar multiples. This, rather interestingly, indicates that under the surrogate aided unsupervised setting we have, whereby we need to move substantially into the tails of $S$, the knowledge of the link function for $Y$, even if available, virtually has no additional benefits and $\bbeta_0$ is (essentially) still identifiable only upto scalar multiples under $\P_q(\cdot)$ - something that our original SIM framework, with an unspecified link, guarantees anyway.
}

\emph{
Of course, our arguments here are somewhat heuristic and specific to the set-up in (\ref{P2:splcase_setup}). A rigorous and more generally applicable justification of this interesting (and counter-intuitive) fact needs further careful analyses beyond the scope of this work. Empirical evidence however seems to corroborate our intuition, at least under (\ref{P2:splcase_setup}), wherein we found that even if we use the knowledge of the link for $Y$ and fit a sparse logistic regression of $Y^*_q$ vs. $\bX$, it only tends to recover the $\bbeta_0$ upto a scalar multiple. Further, it is worth pointing out that an observation of a similar flavor has also been noted by \citet{Plan_Vershynin_2013_a}, albeit under a very different framework and associated assumptions, in the context of one-bit compressed sensing problems with `adversarial' bit flips (or corruptions).
}
\end{remark}

\begin{subsection}{Proof of Theorem \ref{P2:thm_4}}\label{P2:pf_thm4}

First, the results for $\{\E(S \given S \leq \delta_q)$, $\E(S^2 \given S \leq \delta_q)\}$ and $\{\E(S \given S \geq \deltabar_q), \; \E(S^2 \given S \geq \deltabar_q)\}$ follow directly from  Lemma \ref{P2:lemma_5} (i) with the choices of $a, b$ as: $\{a = - \infty, b = \delta_q\}$ and $\{ a = \deltabar_q, b = \infty\}$ respectively. Further, the results for $\E_q(S)$ and $\mbox{Var}_q(S)$ follow from noting that: $\E_q(S) = \{ \E(S \given S \leq \delta_q) + \E(S \given S \geq \deltabar_q)\} / 2$, and that: $ \{ \E(S^2 \given S \leq \delta_q) + \E(S^2 \given S \geq \deltabar_q)\} / 2$ $ = \E_q(S^2) \equiv \mbox{Var}_q (S)$ since $\E_q(S) = 0$. Next, for the corresponding results regarding $\bX$, we first note that under the assumed set-up, $\bX \given S$ follows a linear model given by: $\bX = \bgamma_0 S + \beps$ with $\beps \sim \Nsc_p(\bzero,\Gamma)$ and $\beps \ind S$, where $\bgamma_0$ and $\Gamma$ are as defined therein. Using this relation and the results already proved, the results for $\{\E(\bX \given S \leq \delta_q), \; \E(\bX \bX' \given S \leq \delta_q)\}$ and $\{\E(\bX \given S \geq \deltabar_q), \; \E(\bX \bX' \given S \geq \deltabar_q)\}$ now follow immediately. Further, the results for $\E_q(\bX)$ and $\mbox{Var}_q(\bX)$ follow from noting that: $\E_q(\bX) = \{ \E(\bX \given S \leq \delta_q) + \E(\bX \given S \geq \deltabar_q)\} / 2$, and that: $ \{ \E(\bX \bX' \given S \leq \delta_q) + \E(\bX \bX' \given S \geq \deltabar_q)\} / 2$ $ = \E_q(\bX \bX') \equiv \mbox{Var}_q (\bX)$ since $\E_q(\bX) = 0$. This completes the proof of all the results mentioned in (i) and (ii) in Theorem \ref{P2:thm_4}.  \qed

Next, we note that: $\forall \; t \in \R$, $\E_q\{\exp(tS)\} = [\E\{\exp(tS) \given S \leq \delta_q \} + \E\{\exp(tS) \given S \geq \deltabar_q\}]/2$, and the result in (iii) for $\mbox{MGF}_{S,q}(t)$ now follows directly from Lemma \ref{P2:lemma_5} (i) using the choices of $a, b$ as: $\{a = - \infty, b = \delta_q\}$ and $\{ a = \deltabar_q, b = \infty\}$ respectively. For $\mbox{MGF}_{\bX,q}(\bt)$, we note that: $\forall \; \bt \in \R^p$, $\E_q\left\{\exp\left(\bt'\bX\right)\right\} = \E_q[\exp\{(\bt'\bgamma_0)S + \bt'\beps\}]$ $= [\E\{\exp(\bt'\beps)\}] \mbox{MGF}_{S,q}(\bt'\bgamma_0)$, where in the last step, we also use $\beps \ind S$. The result now follows from using the result for $\mbox{MGF}_{S,q}(\cdot)$, and the standard expression for the MGF of $\beps \sim \Nsc_p(\bzero, \Gamma)$, as well as using the fact that $\Gamma = (\bSigma - \sigma_S^2 \bgamma_0 \bgamma_0')$. This completes the proof of all results in (iii). Further, all the bounds in (iv) for $\mbox{MGF}_{S,q}(\cdot)$ are straightforward implications of Lemma \ref{P2:lemma_5} (ii), and so are the bounds for $\mbox{MGF}_{\bX,q}(\cdot)$ in (iv), where we additionally use the standard inequalities: $\bt'\bSigma\bt \leq \lambda_{\max}(\bSigma) \| \bt \|_2^2$ and $ |\bt'\bgamma_0 |^2 \leq \| \bt \|_2^2 \| \bgamma_0 \|_2^2$ $\; \forall \; \bt \in \R^p$. This therefore completes the proof of all the MGF related results mentioned in (iii) and (iv) in Theorem \ref{P2:thm_4}.  \qed

Next, for the bounds on $\deltabar_q$ in result (vi), the upper bound is a straightforward consequence of the second inequality provided in Lemma \ref{P2:lemma_4}, and noting that: $q/2 = \Phibar(\zbar_q)$ and $\deltabar_q = \sigma_S \zbar_q$. The lower bound follows from the lower bound given in the first inequality in Lemma \ref{P2:lemma_4}. The restriction $q \geq 0.0002$ in the statement of the lower bound result in (vi) is needed to bound the quantity $\{(1 + \zbar_q^2)/\zbar_q\}$ that inevitably comes up while using the inequality from the lemma. In particular, this restriction implies that $\{(1 + \zbar_q^2)/\zbar_q\} \leq 5\sqrt{2/\pi}$ and ensures the final bound stated in the result. This completes the proof of (vi). \qed 

Finally, to show the results in (v), we first note that:
\begin{alignat*}{2}
& \pi_q^-   \; && = \; \E\{ \P(Y = 1 \given \bX, S) \given S \leq \delta_q\} \; \equiv \; \E\{ \psi(\bbeta_0'\bX) \given S \leq \delta_q \} \\
&  \; && \leq \; \E \left\{\exp\left(\bbeta_0'\bX\right)\given S \leq \delta_q \right\} \;\; \equiv \;\;  \frac{\Phi(-\zbar_q - \sigma_S \bbeta_0'\bgamma_0)}{\Phi(-\zbar_q)}\exp\left(\eta_0^2 / 2\right), \quad \mbox{and} \\
& \pi_q^+ \; && = \; \E\{ \P(Y = 0 \given \bX, S) \given S \geq \deltabar_q\} \; \equiv \; \E\{ \psi(-\bbeta_0'\bX) \given S \geq \deltabar_q \} \\
& \; && \leq \; \E \left\{\exp\left(-\bbeta_0'\bX \right) \given S \geq \deltabar_q \right\} \;\; \equiv \;\; \frac{\Phi(-\zbar_q - \sigma_S \bbeta_0'\bgamma_0)}{\Phi(-\zbar_q)} \exp\left(\eta_0^2/2\right).
\end{alignat*}
Therefore,
\begin{equation}
\pi_q  \; \equiv \; \half(\pi_q^- + \pi_q^+) \; \leq \; \frac{\Phi(-\zbar_q - \sigma_S \bbeta_0'\bgamma_0)}{\Phi(-\zbar_q)}\exp\left(\eta_0^2/2\right). \label{P2:pf4_eqn3}
\end{equation}
The final bounds above follow, similar to the earlier proofs for the results in (iii), from straightforward uses of the results regarding MGFs of truncated normal distributions given in Lemma \ref{P2:lemma_5}, as well as use of the relationship between $\bX$ and $S$ given by: $\bX = \bgamma_0 S + \beps$ with $\beps \sim \Nsc_p(\bzero,\Gamma)$ and $\beps \ind S$, and noting the definitions of $\bgamma_0$, $\Gamma$ and $\eta_0$. (\ref{P2:pf4_eqn3}) therefore establishes the first bound in result (v) of Theorem \ref{P2:thm_4}. The subsequent bounds in result (v) now follow from (\ref{P2:pf4_eqn3}) along with straightforward uses of the first inequality (both the upper and lower bounds) given in Lemma \ref{P2:lemma_4} noting that $\bbeta_0'\bgamma_0 > 0$ by assumption, as well as using the fact that $\sigma_s \bbeta_0'\bgamma_0 \equiv \rhotil_0 \eta_0$. All the claims in result (v) are now established, and the proof of Theorem \ref{P2:thm_4} is complete. \qed
\end{subsection}

\subsection{Bounds on the Behavior of the Scale Multiplier} \label{P2:bqrate}
In this section, we investigate the behavior of the constant $b_q$ that appears in the representation (\ref{P2:thm1_eqn1}) of $\bbetabar_q$ in Theorem \ref{P2:THM_1}, as well as in the deviation bound (\ref{P2:thm2_dev_bound}) in Theorem \ref{P2:THM_2}. It essentially represents the scalar multiple upto which our $\ULASSO$ estimator can recover the $\bbeta_0$ direction based on $\Usc_{\nq}$, for any given $q$. Hence, if one is interested in the convergence rates of the normalized $\ULASSO$ estimator as an estimator of the normalized $\bbeta_0$ direction: $\bbeta_0/ \| \bbeta_0\|_2$ (identifiable upto a sign), then the behavior of the constant $b_q$ with respect to $q$ becomes of interest, as $|b_q|^{-1}$ will now appear in the deviation bound for the normalized estimator, and therefore if $b_q$ does converge to $0$ as $q \downarrow 0$, its rate of convergence will affect the final rate of the deviation bound. In this section, we obtain bounds (shown in (\ref{P2:bqrate_finalrate}) later) for $b_q$, under the set-up in (\ref{P2:splcase_setup}), and show therein that while $b_q$ converges to $0$, it only does so at a very slow rate (square root of logarithmic orders) and hence, has no real impact on the (dominating) polynomial part of the rate in the deviation bound in this case.

We assume the set-up in (\ref{P2:splcase_setup}) for $(S,\bX')'$ and adopt all associated notations and quantities introduced in Appendix \ref{P2:splcase}. However, unlike (\ref{P2:splcase_setup}), we do not make any specific assumptions on $Y$, and only require (\ref{P2:misclass_error}) to hold, so that for all small enough $q$, $\pi_q \lesssim q^{\nu}$ for some $\nu > 0$. 
Throughout in this analysis, we use $\{c_k\}_{k\geq 1}$ to denote generic universal positive constants that do not depend on $q$ or any distributional parameter of $\P_q(\cdot)$, but may involve certain distributional parameters of $\P(\cdot)$ such as 
$(\bSigma,\balpha_0,\bbeta_0,\rho_0,\sigma_S,p)$. Since this is essentially a `population' level analysis and we are mostly interested in
behavioral changes with respect to $q$, such constants will be treated as fixed. Lastly, for any symmetric matrix $\mathbf{A} \equiv [a_{ij}]_{p \times p}$, let 
$\| \mathbf{A} \|_1 \equiv \underset{1 \leq j \leq p}{\max} \sum_{i=1}^p |a_{ij}|$ denote the matrix $L_1$ norm of $\mathbf{A}$.

To analyze $|b_q|$, we first recall from Theorem \ref{P2:THM_1}, with all notations as defined therein, the following representations of $\bbetabar_q$ and $\balphabar_q$ for any $q \in (0,1]$,
\begin{equation}
\bbetabar_q \; = \; b_q \bbeta_0 + a_q \balpha_0, \;\;\; \mbox{and} \;\;\; \balphabar_q \; = \; a_q^*\balpha_0.\label{P2:bqrate_eqn0}
\end{equation}
Further, the definitions (\ref{P2:y_sqloss})-(\ref{P2:s_sqloss}) of $\bbetabar_q$ and $\balphabar_q$ also imply that for any $q$,
\begin{equation}
\bbetabar_q \;=\; \bSigma_q^{-1} \E_q(\bX Y), \;\;\; \mbox{and} \;\;\;  \balphabar_q \;= \; \bSigma_q^{-1} \E_q(\bX Y^*_q),
\label{P2:bqrate_eqn1}
\end{equation}
where we also used the facts that $\bSigma_q \succ 0$ and $\E_q(\bX) = \bzero$ for any $q$, both of which follow from Theorem \ref{P2:thm_4}.
Now, while the behavior of $\bbetabar_q$ and $b_q$ is still somewhat difficult to analyze directly, the behavior of $\balphabar_q$ is relatively easier to analyze under the assumed set-up for $(S,\bX')'$. We will therefore take an indirect route, where we first evaluate $\balphabar_q$ explicitly under the assumed set-up, then bound $\bbetabar_q$ and $b_q$ in terms of $\balphabar_q$ and constants, 
and finally use the exact form of $\balphabar_q$ to obtain precise bounds on the behavior of $\bbetabar_q$ and $b_q$. 

To this end, we first note that, using (\ref{P2:bqrate_eqn1}), the following \emph{general} bound holds:
\begin{eqnarray}
\nonumber \left\| \bbetabar_q - \balphabar_q \right\|_{\infty} &\equiv& \left\| \bSigma_q^{-1}\left[ \E_q \{\bX(Y - Y^*_q)\}\right] \right\|_{\infty} \\
\nonumber &\leq& \left\| \bSigma_q^{-1} \right\|_1 \left\| \E_q \{\bX(Y - Y^*_q)\}\right\|_{\infty} \\
&\leq& \left\| \bSigma_q^{-1} \right\|_1 \left[\E_q\left\{(Y-Y^*_q)^2\right\}\right]^{\half}  \left[\underset{1 \leq j \leq p}{\max} \left\{\E_q\left(\bX_{[j]}^2\right)\right\}^{\half}\right], \label{P2:bqrate_eqn2} 
\end{eqnarray}
where the bounds follow from standard matrix norm inequalities and the Cauchy-Schwartz inequality. Note that $\E_q\{(Y - Y^*_q)^2\} = \pi_q$, and since $\bX \sim \S\G_q(\sigma_q^2)$ due to Theorem \ref{P2:thm_4}, $\E_q(\bX^2_{[j]}) \lesssim \sigma_q^2$ $\; \forall \; 1 \leq j \leq p$. Further, 
$\pi_q \lesssim q^{\nu}$ due to (\ref{P2:misclass_error}), and $\sigma_q^2 \lesssim \log(q^{-1})$ due to Theorem \ref{P2:thm_4}. 
Using these facts in (\ref{P2:bqrate_eqn2}), we then have:
\begin{equation*}
\nonumber \left\| \E_q\{ \bX(Y - Y^*_q)\} \right\|_{\infty} \; \lesssim \; c_1 \left(\pi_q \sigma_q^2\right)^{\half}
  \; \lesssim \; c_2 \left\{ q^{\nu} \log (q^{-1}) \right\}^{\half},
\end{equation*}
and therefore,
\begin{equation}
\left\| \bbetabar_q - \balphabar_q \right\|_{\infty} \; \lesssim \; c_1 \left\| \bSigma_q^{-1} \right\|_1 \left(\pi_q \sigma_q^2\right)^{\half}
  \; \lesssim \; c_2 \left\| \bSigma_q^{-1} \right\|_1 \left\{ q^{\nu} \log (q^{-1}) \right\}^{\half}. \label{P2:bqrate_eqn3}
\end{equation}
For (\ref{P2:bqrate_eqn3}), stronger bounds involving $\pi_q$ instead of $\pi_q^{\half}$ can also be obtained if we assume $\bX$ is bounded or more generally, $\E_q(\bX \given Y \neq Y^*_q)$ is bounded. However, the bound in (\ref{P2:bqrate_eqn3}), obtained under weaker conditions, is still polynomial in $q$, which suffices for all our purposes here.
Now, using (\ref{P2:bqrate_eqn0}) and (\ref{P2:bqrate_eqn3}), we have:
\begin{eqnarray*}
\nonumber |b_q| \left\| \bbeta_0 \right\|_{\infty} &\equiv& \left\| \bbetabar_q - a_q \balpha_0 \right\|_{\infty} \;\; \leq \;\; \left\| \bbetabar_q \right\|_{\infty} + |a_q| \left\| \balpha_0\right\|_{\infty} \\
\nonumber &\lesssim& \left\| \balphabar_q \right\|_{\infty} + |a_q| \left\| \balpha_0\right\|_{\infty} + c_2 \left\| \bSigma_q^{-1} \right\|_1 \left\{ q^{\nu} \log (q^{-1}) \right\}^{\half}, \quad \mbox{and}\\
\nonumber |b_q| \left\| \bbeta_0 \right\|_{\infty} &\equiv& \left\| \bbetabar_q - a_q \balpha_0 \right\|_{\infty} \;\; \geq \;\; \left\| \bbetabar_q \right\|_{\infty} - |a_q| \left\| \balpha_0\right\|_{\infty} \\
\nonumber &\gtrsim& \left\| \balphabar_q \right\|_{\infty} - |a_q| \left\| \balpha_0\right\|_{\infty} - c_2 \left\| \bSigma_q^{-1} \right\|_1 \left\{ q^{\nu} \log (q^{-1}) \right\}^{\half}.
\end{eqnarray*}
Hence, for all small enough $q \in (0,1]$ of interest, we have:
\begin{eqnarray}
|b_q| &\lesssim& c_3 \left\| \balphabar_q \right\|_{\infty} + c_4 |a_q| + c_5 \left\| \bSigma_q^{-1} \right\|_1 \left\{ q^{\nu} \log (q^{-1}) \right\}^{\half}, \;\; \mbox{and}  \label{P2:bqrate_eqn4} \\
|b_q| &\gtrsim& c_3 \left\| \balphabar_q \right\|_{\infty} - c_4 |a_q| - c_5 \left\| \bSigma_q^{-1} \right\|_1 \left\{ q^{\nu} \log (q^{-1}) \right\}^{\half},  \label{P2:bqrate_eqn5}
\end{eqnarray}
for some constants $(c_3,c_4,c_5) > 0$ that do not depend on $q$. (\ref{P2:bqrate_eqn4})-(\ref{P2:bqrate_eqn5}) therefore provide upper and lower bounds for $|b_q|$ and show that in order to analyze the behavior of $b_q$ as $q$ decreases, it suffices to analyze those of $\|\balphabar_q\|_{\infty}$ and $|a_q|$.

Towards this goal, we next explicitly 
evaluate $\bSigma_q^{-1}$ and $\balphabar_q$, which is where we make maximum use of the assumed set-up for $(S,\bX')'$. Using the results of Theorem \ref{P2:thm_4} and Woodbury's formula for matrix inverses, we have: for each $q \in (0,1]$,
\begin{eqnarray}
\nonumber \bSigma_q^{-1} &\equiv& \left\{ \bSigma + \left(\bgamma_0\bgamma_0' \right) \frac{\sigma_S^2 \zbar_q \phi(\zbar_q)}{\Phibar(\zbar_q)}\right\} ^{-1} \\ 
\nonumber &=& \left\{ \bSigma + \bSigma \left(\frac{\balpha_0\balpha_0'}{\sigma_S^2}\right) \bSigma \xi_q \right\}^{-1}, \quad \mbox{where} \;\; \xi_q \; \overset{\mbox{\tiny{def}}}{=} \; \zbar_q \frac{\phi(\zbar_q)}{\Phibar(\zbar_q)} \;\;\; \forall \; q, \\
\nonumber &=& \bSigma^{-1} - \bSigma^{-1} \bSigma \frac{(\balpha_0 \balpha_0')/\sigma_S^2}{\{1 + (\balpha_0'\bSigma \bSigma^{-1} \bSigma \balpha_0)(\xi_q/\sigma_S^2)\}} \bSigma \bSigma^{-1} \xi_q\\
&=& \bSigma^{-1} - \widetilde{\xi}_q(\balpha_0 \balpha_0'), \quad \mbox{where} \;\; \widetilde{\xi}_q \; \overset{\mbox{\tiny{def}}}{=} \; \frac{\xi_q}{\sigma_S^2 + \xi_q(\balpha_0'\bSigma \balpha_0)} \;\;\; \forall \; q. \label{P2:bqrate_eqn6}
\end{eqnarray}
%
Next, using the definition of $\balphabar_q$, as well as (\ref{P2:bqrate_eqn6}) above, we have: $\forall \; q \in (0,1]$,
\begin{eqnarray}
\nonumber \balphabar_q &\equiv& \bSigma_q^{-1} \E_q(\bX Y^*_q) \; = \; \bSigma_q^{-1} \left\{ \E_q(\bX \given Y^*_q = 1) \; \P_q(Y^*_q =1)\right\} \\
\nonumber &=& \frac{1}{2}\bSigma_q^{-1} \E(\bX \given S \geq \deltabar_q) \; = \;  \frac{1}{2}\bSigma_q^{-1} \bgamma_0 \left\{ \sigma_S \frac{\phi(\zbar_q)}{\Phibar(\zbar_q)} \right\} \; \mbox{[using Theorem \ref{P2:thm_4}],}\\
\nonumber &=& \frac{1}{2}\bSigma_q^{-1} \frac{(\bSigma \balpha_0)}{\sigma_S^2} \left(  \sigma_S \frac{\xi_q}{\zbar_q} \right) \qquad [\mbox{using the defintions of} \; \bgamma_0 \; \mbox{and} \; \xi_q],\\
\nonumber &=& \frac{1}{2} \left\{ \bSigma^{-1} - \frac{\xi_q (\balpha_0 \balpha_0')}{\sigma_S^2 + \xi_q(\balpha_0'\bSigma \balpha_0)} \right\} \left(\bSigma \balpha_0\right) \left( \frac{\xi_q}{\sigma_S \zbar_q} \right) \\
\nonumber &=&  \frac{1}{2} \balpha_0  \left\{ 1 - \frac{\xi_q (\balpha_0'\Sigma \balpha_0)}{\sigma_S^2 + \xi_q(\balpha_0'\Sigma \balpha_0)} \right\} \left( \frac{\xi_q}{\sigma_S \zbar_q} \right) \\
\nonumber &=& \frac{1}{2} \balpha_0 \left\{ \frac{\sigma_S^2}{\sigma_S^2 + \xi_q(\balpha_0'\bSigma \balpha_0)} \right\} \left( \frac{\xi_q}{\sigma_S \zbar_q} \right) \\
 &=& \xi^*_q \balpha_0, \quad \mbox{where} \;\; \xi^*_q \; \overset{\mbox{\tiny{def}}}{=} \; \frac{\sigma_S \xi_q}{2 \zbar_q \{\sigma_S^2 + \xi_q(\balpha_0'\Sigma \balpha_0)\}} \; \equiv \; \frac{\sigma_S \widetilde{\xi}_q}{2 \zbar_q} \;\;\; \forall \; q.  \label{P2:bqrate_eqn7}
\end{eqnarray}
Further, owing to Lemma \ref{P2:lemma_4} and Theorem \ref{P2:thm_4}, note that $\xi_q \geq 0$ and $\zbar_q \geq 0$ also satisfy the following bounds: for any $q \in [0.0002,1]$,
\begin{equation*}
\zbar_q^2 \; \leq \; \xi_q \; \leq \; (1+\zbar_q^2) \quad \mbox{and} \quad 2 \log(0.2 q^{-1}) \; \leq \; \zbar_q^2 \; \leq \; 2 \log(q^{-1}),
\end{equation*}
and therefore, $\xi_q$, $\widetilde{\xi}_q$ and $\xi^*_q$ satisfy the following bounds: $\forall \; q \in [0.0002,1]$ and for some constants $\{c_k\}_{k=6}^{11} > 0$ that do not depend on $q$,
\begin{eqnarray}
&& c_6 \log(q^{-1}) \; \lesssim \; \xi_q \; \lesssim \; c_7 \log(q^{-1}), \quad c_8 \; \lesssim \; \widetilde{\xi}_q \; \lesssim c_9, \quad \mbox{and} \label{P2:bqrate_eqn8} \\
&& \frac{c_{10}}{\{\log(q^{-1})\}^{\half}} \; \lesssim \; \xi^*_q \; \lesssim \; \frac{c_{11}}{\{\log(q^{-1})\}^{\half}}.  \label{P2:bqrate_eqn9}
\end{eqnarray}
Using (\ref{P2:bqrate_eqn9}) in (\ref{P2:bqrate_eqn7}), we then have:
\begin{equation}
\frac{c_{12}}{\{\log(q^{-1})\}^{\half}} \; \lesssim \; \left\| \balphabar_q \right\|_{\infty} \; \lesssim \; \frac{c_{13}}{\{\log(q^{-1})\}^{\half}}.  \label{P2:bqrate_eqn_new1}
\end{equation}
(\ref{P2:bqrate_eqn_new1}) therefore establishes the behavior of $\left\| \balphabar_q \right\|_{\infty}$ and shows that it converges to $0$ as $q \downarrow 0$, but only at a very slow rate of $(- \log q)^{- \half}$. Recall that $\left\| \balphabar_q \right\|_{\infty}$ was one of the key quantities in the bounds (\ref{P2:bqrate_eqn4})-(\ref{P2:bqrate_eqn5}). We next focus on the behaviors of $\|\bSigma_q^{-1}\|_1$ and $|a_q|$, the other key quantities in (\ref{P2:bqrate_eqn4})-(\ref{P2:bqrate_eqn5}) that depend on $q$. To this end, we first note that for each $q$, $\bSigma_q \succ 0$ owing to Theorem \ref{P2:thm_4} and hence, $\bSigma_q^{-1} \succ 0$ as well. Further, $\bSigma^{-1}_q$ depends on $q$ only through the quantity $\widetilde{\xi}_q$. Using (\ref{P2:bqrate_eqn6}) and (\ref{P2:bqrate_eqn8}), we then have: $\forall \; q \in [0.0002,1]$,
\def\tr{\mbox{trace}}
\begin{equation}
\| \bSigma_q^{-1} \|_1 \; \leq \; \|\bSigma^{-1}\|_1 + \widetilde{\xi}_q \| \balpha_0 \balpha_0'\|_1 \; \lesssim \; c_{14}, \label{P2:bqrate_eqn_new2}
\end{equation}
and further,
\begin{eqnarray}
\nonumber \left\{\lambda_{\min}(\bSigma_q)\right\}^{-1} &=& \lambda_{\max}(\bSigma_q^{-1}) \; \leq \; \tr(\bSigma_q^{-1}) \;\; \equiv \;\; \tr(\bSigma^{-1}) - \widetilde{\xi}_q \| \balpha_0\|_2^2 \\
&\leq& \tr(\bSigma^{-1}) + \widetilde{\xi}_q \| \balpha_0\|_2^2 \;\; \lesssim \;\; c_{15} \label{P2:bqrate_eqn_new3}, 
\end{eqnarray}
for some constants $c_{14}, c_{15} > 0$ independent of $q$, where for any matrix $\mathbf{A} \succ \bzero$, $\lambda_{\min}(\mathbf{A}) > 0$ and $\lambda_{\max}(\mathbf{A}) > 0$ respectively denote its minimum and maximum eigenvalues.

(\ref{P2:bqrate_eqn_new3}) shows that $\lambda_{\min}(\bSigma_q)$ is bounded away from $0$ uniformly in $q$. This is useful since with $\bX \sim \S\G_q(\sigma_q^2)$ and $\bSigma_q \succ 0$, it can be already shown, using results from \citet{Negahban_2012} and \citet{Rud_Zhou_2013}, that Assumption \ref{P2:strngconv_assmpn} holds with high probability by choosing the constant $\kappa_q$ therein as: $\kappa_q \asymp \lambda_{\min}(\bSigma_q)$, and (\ref{P2:bqrate_eqn_new3}) therefore now ensures that: $\kappa_q \geq \kappa > 0$, for some universal constant $\kappa$ independent of $q$.

Now, as shown in the proof of Theorem \ref{P2:THM_2}, for any choice of the tuning parameter $\lambda$ that is $(\bbeta_0,\balpha_0,q)$-admissible, as in Assumption \ref{P2:sparsity_assmpn}, $|a_q| \lesssim \lambda/\kappa_q$. Hence, due to (\ref{P2:bqrate_eqn_new3}) and the subsequent remarks above, $|a_q| \lesssim \lambda$ for any such $\lambda$. Therefore, as long as for some $\theta > 0$, $\exists \; \lambda \asymp q^{\theta}$ that is $(\bbeta_0,\balpha_0,q)$-admissible (note that the choice of $\lambda \equiv 4 a_{\nq}$, considered throughout for our main results in Section \ref{P2:ULASSO}, is a special case of such a $\lambda$), we have: $|a_q| \lesssim q^{\theta}$. Using this, as well as (\ref{P2:bqrate_eqn_new1}) and (\ref{P2:bqrate_eqn_new2}), in the original bounds (\ref{P2:bqrate_eqn4})-(\ref{P2:bqrate_eqn5}), 
we have: 
\begin{equation}
\left| |b_q|  - \frac{c^*}{\{\log(q^{-1})\}^{\half}} \right| \; \lesssim  \; d^* q^{\nu^*}\{\log(q^{-1})\}^{\half} \qquad \forall \; q \;\; \mbox{small enough}, \label{P2:bqrate_finalrate}
\end{equation}
where $\nu^* \equiv \min (\nu/2, \theta)$  and $c^*, d^* > 0$ are constants independent of $q$.

(\ref{P2:bqrate_finalrate}) therefore establishes the behavior of $|b_q|$ and shows that if $q$ goes to $0$, $|b_q|$ does converge to $0$, but only at a slow rate of $O(1/\sqrt{- \log q})$. In particular, if $q \asymp N^{-\eta}$ for some $0 < \eta < 1$, $|b_q|$ converges to $0$ at a slow rate of $O(1/\sqrt{\log N})$ and consequently, $|b_q^{-1}|$ diverges at a rate of $O(\sqrt{\log N})$. Hence, at least under the set-up for $(S,\bX')'$ considered here, the rate of divergence of the constant $|b_q|^{-1}$ that inevitably appears in the deviation bound for the normalized $\ULASSO$ estimator, 
has no real impact on the bound (apart from the minor effect of slowing down the overall rate by a $\sqrt{\log N}$ factor), and the (dominating) polynomial part of the rate in our bound stays unaffected. \qed

\subsection{Closed Form Expression of the Scale Multiplier} 
For any given $q$, an explicit characterization of the constant $b_q$ can be useful, for instance, to check if $b_q$ is non-zero (and hence, ensure that we are not ending up estimating the `zero'-direction rather than the $\bbeta_0$ direction). To characterize $b_q$, we shall more generally compute the constant $b_{\bv}$ in Assumption \ref{P2:dlc_assmpn} for any $\bv \in \R^p$, and then simply observe that $b_q$ is given by $b_{\bv}$ for $\bv = \bbetabar_q$ as in (\ref{P2:y_sqloss}).

To this end, we note that for any $\bv \in \R^p$, the constants $b_{\bv}$ and $a_{\bv}$ in Assumption \ref{P2:dlc_assmpn} must be the regression coefficients obtained from a (population based) least squares regression of $\bv'\bX$ on $(\bbeta_0'\bX,\balpha_0'\bX)$. Using standard theory of least squares estimation and some straightforward algebra,
it is then easy to show that: 
for any $\bv \in \R^p$, 
\begin{eqnarray}
\nonumber b_{\bv} &=& \frac{1}{(1 - \rho^2)(\bbeta_0'\bSigma \bbeta_0)^{\half}} \left\{ \frac{\bv'\bSigma\bbeta_0}{(\bbeta_0'\bSigma \bbeta_0)^{\half}} - \rho \frac{\bv'\bSigma\balpha_0}{(\balpha_0'\bSigma \balpha_0)^{\half}}\right\} \\
&=& \frac{1}{(1 - \rho^2)(\bbeta_0'\bSigma \bbeta_0)^{\half}} \; \mbox{Cov}\left[\bv'\bX,  \left\{\frac{\bbeta_0'\bX}{(\bbeta_0'\bSigma \bbeta_0)^{\half}} - \rho \frac{\balpha_0'\bX}{(\balpha_0'\bSigma \balpha_0)^{\half}} \right\} \right], \label{P2:bqrate_eqn10} 
\end{eqnarray}
where $\rho = \mbox{Corr}(\bbeta_0'\bX, \balpha_0'\bX)$. Hence, $b_q$ is given by (\ref{P2:bqrate_eqn10}) above with the choice: $\bv = \bbetabar_q$. \qed 

\phantomsection
\addcontentsline{toc}{section}{References}
\bibliographystyle{imsart-nameyear}
\bibliography{P2_Arxiv_R2_Biblio}

\begin{thebibliography}{60}

\bibitem[\protect\citeauthoryear{Bielinski et~al.}{2011}]{bielinski2011mayo}
\begin{barticle}[author]
\bauthor{\bsnm{Bielinski},~\bfnm{Suzette~J.}\binits{S.~J.}},
  \bauthor{\bsnm{Chai},~\bfnm{High~Seng}\binits{H.~S.}},
  \bauthor{\bsnm{Pathak},~\bfnm{Jyotishman}\binits{J.}},
  \bauthor{\bsnm{Talwalkar},~\bfnm{Jayant~A.}\binits{J.~A.}} \betal{et~al.}
(\byear{2011}).
\btitle{Mayo Genome Consortia: A Genotype-Phenotype Resource for Genome-Wide
  Association Studies with an Application to the Analysis of Circulating
  Bilirubin Levels}.
\bjournal{Mayo Clinic Proceedings}
\bvolume{86}
\bpages{606--614}.
\end{barticle}
\endbibitem

\bibitem[\protect\citeauthoryear{Bowden and Turkington}{1990}]{Bowden_1990}
\begin{bbook}[author]
\bauthor{\bsnm{Bowden},~\bfnm{Roger}\binits{R.}} \AND
  \bauthor{\bsnm{Turkington},~\bfnm{Darrell}\binits{D.}}
(\byear{1990}).
\btitle{Instrumental Variables}
\bvolume{8}.
\bpublisher{Cambridge University Press, Cambridge, England}.
\end{bbook}
\endbibitem

\bibitem[\protect\citeauthoryear{Brillinger}{1982}]{Brillinger_1982}
\begin{barticle}[author]
\bauthor{\bsnm{Brillinger},~\bfnm{David~R}\binits{D.~R.}}
(\byear{1982}).
\btitle{A Generalized Linear Model with {``Gaussian"} Regressor Variables}.
\bjournal{A Festschrift For Erich L. Lehmann}
\bpages{97-115}.
\end{barticle}
\endbibitem

\bibitem[\protect\citeauthoryear{Buldygin and
  Moskvichova}{2013}]{Buldygin_2013}
\begin{barticle}[author]
\bauthor{\bsnm{Buldygin},~\bfnm{V.~V.}\binits{V.~V.}} \AND
  \bauthor{\bsnm{Moskvichova},~\bfnm{K.~K.}\binits{K.~K.}}
(\byear{2013}).
\btitle{The Sub-Gaussian Norm of a Binary Random Variable}.
\bjournal{Theory of Probability and Mathematical Statistics}
\bvolume{86}
\bpages{33-49}.
\end{barticle}
\endbibitem

\bibitem[\protect\citeauthoryear{Buonaccorsi}{2010}]{Buonaccorsi_2010}
\begin{bbook}[author]
\bauthor{\bsnm{Buonaccorsi},~\bfnm{John~P.}\binits{J.~P.}}
(\byear{2010}).
\btitle{Measurement Error: Models, Methods, and Applications}.
\bpublisher{CRC Press}.
\end{bbook}
\endbibitem

\bibitem[\protect\citeauthoryear{Burkardt}{2014}]{Burkardt_2014}
\begin{btechreport}[author]
\bauthor{\bsnm{Burkardt},~\bfnm{John}\binits{J.}}
(\byear{2014}).
\btitle{The Truncated Normal Distribution.}
\btype{Technical Report},
\bpublisher{Department of Scientific Computing, Florida State University, USA}.
\end{btechreport}
\endbibitem

\bibitem[\protect\citeauthoryear{Carroll et~al.}{2006}]{Carroll_2006}
\begin{bbook}[author]
\bauthor{\bsnm{Carroll},~\bfnm{Raymond~J.}\binits{R.~J.}},
  \bauthor{\bsnm{Ruppert},~\bfnm{David}\binits{D.}},
  \bauthor{\bsnm{Crainiceanu},~\bfnm{Ciprian~M.}\binits{C.~M.}} \AND
  \bauthor{\bsnm{Stefanski},~\bfnm{Leonard~A.}\binits{L.~A.}}
(\byear{2006}).
\btitle{Measurement Error in Nonlinear Models: A Modern Perspective}.
\bpublisher{CRC Press}.
\end{bbook}
\endbibitem

\bibitem[\protect\citeauthoryear{Chapelle, Sch\"{o}lkopf and
  Zien}{2006}]{Chapelle_2006}
\begin{bbook}[author]
\bauthor{\bsnm{Chapelle},~\bfnm{Olivier}\binits{O.}},
  \bauthor{\bsnm{Sch\"{o}lkopf},~\bfnm{Bernhard}\binits{B.}} \AND
  \bauthor{\bsnm{Zien},~\bfnm{Alexander}\binits{A.}}
(\byear{2006}).
\btitle{Semi-Supervised Learning}.
\bpublisher{MIT Press, Cambridge, MA, USA}.
\end{bbook}
\endbibitem

\bibitem[\protect\citeauthoryear{Chen, Wang and Krovetz}{2005}]{chen2005clue}
\begin{barticle}[author]
\bauthor{\bsnm{Chen},~\bfnm{Yixin}\binits{Y.}},
  \bauthor{\bsnm{Wang},~\bfnm{James}\binits{J.}} \AND
  \bauthor{\bsnm{Krovetz},~\bfnm{Robert}\binits{R.}}
(\byear{2005}).
\btitle{{CLUE}: Cluster-Based Retrieval of Images by Unsupervised Learning}.
\bjournal{IEEE Transactions on Image Processing}
\bvolume{14}
\bpages{1187--1201}.
\end{barticle}
\endbibitem

\bibitem[\protect\citeauthoryear{Chiani, Dardari and Simon}{2003}]{Chiani_2003}
\begin{barticle}[author]
\bauthor{\bsnm{Chiani},~\bfnm{Marco}\binits{M.}},
  \bauthor{\bsnm{Dardari},~\bfnm{Davide}\binits{D.}} \AND
  \bauthor{\bsnm{Simon},~\bfnm{Marvin}\binits{M.}}
(\byear{2003}).
\btitle{New Exponential Bounds and Approximations for the Computation of Error
  Probability in Fading Channels}.
\bjournal{IEEE Transactions on Wireless Communications}
\bvolume{2}
\bpages{840--845}.
\end{barticle}
\endbibitem

\bibitem[\protect\citeauthoryear{Cios et~al.}{2007}]{cios2007unsupervised}
\begin{binproceedings}[author]
\bauthor{\bsnm{Cios},~\bfnm{Krzysztof~J.}\binits{K.~J.}},
  \bauthor{\bsnm{Swiniarski},~\bfnm{Roman~W.}\binits{R.~W.}},
  \bauthor{\bsnm{Pedrycz},~\bfnm{Witold}\binits{W.}} \AND
  \bauthor{\bsnm{Kurgan},~\bfnm{Lukasz~A.}\binits{L.~A.}}
(\byear{2007}).
\btitle{Unsupervised Learning: Clustering}.
In \bbooktitle{Data Mining}
\bpages{257--288}.
\bpublisher{Springer}.
\end{binproceedings}
\endbibitem

\bibitem[\protect\citeauthoryear{Cook}{2009}]{Cook_2009}
\begin{bbook}[author]
\bauthor{\bsnm{Cook},~\bfnm{R.~Dennis}\binits{R.~D.}}
(\byear{2009}).
\btitle{Regression Graphics: Ideas for Studying Regressions through Graphics}
\bvolume{482}.
\bpublisher{John Wiley \& Sons}.
\end{bbook}
\endbibitem

\bibitem[\protect\citeauthoryear{Denny et~al.}{2010}]{denny2010phewas}
\begin{barticle}[author]
\bauthor{\bsnm{Denny},~\bfnm{Joshua~C.}\binits{J.~C.}},
  \bauthor{\bsnm{Ritchie},~\bfnm{Marylyn~D.}\binits{M.~D.}},
  \bauthor{\bsnm{Basford},~\bfnm{Melissa~A.}\binits{M.~A.}},
  \bauthor{\bsnm{Pulley},~\bfnm{Jill~M.}\binits{J.~M.}} \betal{et~al.}
(\byear{2010}).
\btitle{{PheWAS}: Demonstrating the Feasibility of a Phenome-Wide Scan to
  Discover Gene--Disease Associations}.
\bjournal{Bioinformatics}
\bvolume{26}
\bpages{1205--1210}.
\end{barticle}
\endbibitem

\bibitem[\protect\citeauthoryear{Diaconis and Freedman}{1984}]{Diaconis_1984}
\begin{barticle}[author]
\bauthor{\bsnm{Diaconis},~\bfnm{Persi}\binits{P.}} \AND
  \bauthor{\bsnm{Freedman},~\bfnm{David}\binits{D.}}
(\byear{1984}).
\btitle{Asymptotics of Graphical Projection Pursuit}.
\bjournal{The Annals of Statistics}
\bpages{793--815}.
\end{barticle}
\endbibitem

\bibitem[\protect\citeauthoryear{Duan and Li}{1991}]{Duan_1991}
\begin{barticle}[author]
\bauthor{\bsnm{Duan},~\bfnm{Naihua}\binits{N.}} \AND
  \bauthor{\bsnm{Li},~\bfnm{Ker-Chau}\binits{K.-C.}}
(\byear{1991}).
\btitle{Slicing Regression: A Link-Free Regression Method}.
\bjournal{The Annals of Statistics}
\bvolume{19}
\bpages{505-530}.
\end{barticle}
\endbibitem

\bibitem[\protect\citeauthoryear{D\"{u}embgen}{2010}]{Dumbgen_2010}
\begin{barticle}[author]
\bauthor{\bsnm{D\"{u}embgen},~\bfnm{Lutz}\binits{L.}}
(\byear{2010}).
\btitle{Bounding Standard Gaussian Tail Probabilities}.
\bjournal{arXiv Preprint (arXiv:1012.2063)}.
\end{barticle}
\endbibitem

\bibitem[\protect\citeauthoryear{Fr{\'e}nay and Verleysen}{2014}]{Frenay_2014}
\begin{barticle}[author]
\bauthor{\bsnm{Fr{\'e}nay},~\bfnm{Beno{\^\i}t}\binits{B.}} \AND
  \bauthor{\bsnm{Verleysen},~\bfnm{Michel}\binits{M.}}
(\byear{2014}).
\btitle{Classification in the Presence of Label Noise: {A} Survey}.
\bjournal{IEEE Transactions on Neural Networks and Learning Systems}
\bvolume{25}
\bpages{845--869}.
\end{barticle}
\endbibitem

\bibitem[\protect\citeauthoryear{Genzel}{2017}]{Genzel_2017}
\begin{barticle}[author]
\bauthor{\bsnm{Genzel},~\bfnm{Martin}\binits{M.}}
(\byear{2017}).
\btitle{High-Dimensional Estimation of Structured Signals from Non-Linear
  Observations with General Convex Loss Functions}.
\bjournal{IEEE Transactions on Information Theory}
\bvolume{63}
\bpages{1601--1619}.
\end{barticle}
\endbibitem

\bibitem[\protect\citeauthoryear{Gllavata, Ewerth and
  Freisleben}{2004}]{gllavata2004text}
\begin{binproceedings}[author]
\bauthor{\bsnm{Gllavata},~\bfnm{Julinda}\binits{J.}},
  \bauthor{\bsnm{Ewerth},~\bfnm{Ralph}\binits{R.}} \AND
  \bauthor{\bsnm{Freisleben},~\bfnm{Bernd}\binits{B.}}
(\byear{2004}).
\btitle{Text Detection in Images Based on Unsupervised Classification of
  High-Frequency Wavelet Coefficients}.
In \bbooktitle{Proceedings of the 17th International Conference on Pattern
  Recognition, 2004. ICPR 2004.}
\bvolume{1}
\bpages{425--428}.
\bpublisher{IEEE}.
\end{binproceedings}
\endbibitem

\bibitem[\protect\citeauthoryear{Goldstein, Minsker and
  Wei}{2016}]{Goldstein_2016}
\begin{barticle}[author]
\bauthor{\bsnm{Goldstein},~\bfnm{Larry}\binits{L.}},
  \bauthor{\bsnm{Minsker},~\bfnm{Stanislav}\binits{S.}} \AND
  \bauthor{\bsnm{Wei},~\bfnm{Xiaohan}\binits{X.}}
(\byear{2016}).
\btitle{Structured Signal Recovery from Non-Linear and Heavy-Tailed
  measurements}.
\bjournal{ArXiv preprint arXiv:1609.01025}.
\end{barticle}
\endbibitem

\bibitem[\protect\citeauthoryear{Hall and Li}{1993}]{Hall_1993}
\begin{barticle}[author]
\bauthor{\bsnm{Hall},~\bfnm{Peter}\binits{P.}} \AND
  \bauthor{\bsnm{Li},~\bfnm{Ker-Chau}\binits{K.-C.}}
(\byear{1993}).
\btitle{On Almost Linearity of Low Dimensional Projections from High
  Dimensional Data}.
\bjournal{The Annals of Statistics}
\bpages{867--889}.
\end{barticle}
\endbibitem

\bibitem[\protect\citeauthoryear{Hastie, Tibshirani and
  Friedman}{2008}]{Hastie_2008}
\begin{bbook}[author]
\bauthor{\bsnm{Hastie},~\bfnm{Trevor}\binits{T.}},
  \bauthor{\bsnm{Tibshirani},~\bfnm{Robert}\binits{R.}} \AND
  \bauthor{\bsnm{Friedman},~\bfnm{Jerome}\binits{J.}}
(\byear{2008}).
\btitle{The Elements of Statistical Learning: Data Mining, Inference, and
  Prediction}.
\bpublisher{Second Edition, Springer Series in Statistics, Springer, Berlin}.
\end{bbook}
\endbibitem

\bibitem[\protect\citeauthoryear{Henegar, Cl{\'e}ment and
  Zucker}{2006}]{henegar2006unsupervised}
\begin{bincollection}[author]
\bauthor{\bsnm{Henegar},~\bfnm{Corneliu}\binits{C.}},
  \bauthor{\bsnm{Cl{\'e}ment},~\bfnm{Karine}\binits{K.}} \AND
  \bauthor{\bsnm{Zucker},~\bfnm{Jean-Daniel}\binits{J.-D.}}
(\byear{2006}).
\btitle{Unsupervised Multiple-Instance Learning for Functional Profiling of
  Genomic Data}.
In \bbooktitle{Machine Learning: ECML 2006}
\bpages{186--197}.
\bpublisher{Springer}.
\end{bincollection}
\endbibitem

\bibitem[\protect\citeauthoryear{Hofmann}{2001}]{hofmann2001unsupervised}
\begin{barticle}[author]
\bauthor{\bsnm{Hofmann},~\bfnm{Thomas}\binits{T.}}
(\byear{2001}).
\btitle{Unsupervised Learning by Probabilistic Latent Semantic Analysis}.
\bjournal{Machine Learning}
\bvolume{42}
\bpages{177--196}.
\end{barticle}
\endbibitem

\bibitem[\protect\citeauthoryear{Horowitz}{2009}]{Horowitz_2009}
\begin{bbook}[author]
\bauthor{\bsnm{Horowitz},~\bfnm{Joel~L.}\binits{J.~L.}}
(\byear{2009}).
\btitle{Semiparametric and Nonparametric Methods in Econometrics}
\bvolume{12}.
\bpublisher{Springer}.
\end{bbook}
\endbibitem

\bibitem[\protect\citeauthoryear{Horrace}{2004}]{Horrace_2004}
\begin{barticle}[author]
\bauthor{\bsnm{Horrace},~\bfnm{William}\binits{W.}}
(\byear{2004}).
\btitle{On Ranking and Selection from Independent Truncated Normal
  Distributions}.
\bjournal{Journal of Econometrics}
\bvolume{126}
\bpages{335--354}.
\end{barticle}
\endbibitem

\bibitem[\protect\citeauthoryear{Horrace}{2005}]{Horrace_2005}
\begin{barticle}[author]
\bauthor{\bsnm{Horrace},~\bfnm{William}\binits{W.}}
(\byear{2005}).
\btitle{Some Results on the Multivariate Truncated Normal Distribution}.
\bjournal{Journal of Multivariate Analysis}
\bvolume{94}
\bpages{209--221}.
\end{barticle}
\endbibitem

\bibitem[\protect\citeauthoryear{Ichimura}{1993}]{Ichimura_1993}
\begin{barticle}[author]
\bauthor{\bsnm{Ichimura},~\bfnm{Hidehiko}\binits{H.}}
(\byear{1993}).
\btitle{Semiparametric Least Squares {(SLS)} and Weighted {SLS} Estimation of
  Single-Index Models}.
\bjournal{Journal of Econometrics}
\bvolume{58}
\bpages{71--120}.
\end{barticle}
\endbibitem

\bibitem[\protect\citeauthoryear{Javanmard and
  Montanari}{2014}]{Javanmard_2014}
\begin{barticle}[author]
\bauthor{\bsnm{Javanmard},~\bfnm{Adel}\binits{A.}} \AND
  \bauthor{\bsnm{Montanari},~\bfnm{Andrea}\binits{A.}}
(\byear{2014}).
\btitle{Confidence Intervals and Hypothesis Testing for High-Dimensional
  Regression}.
\bjournal{The Journal of Machine Learning Research}
\bvolume{15}
\bpages{2869--2909}.
\end{barticle}
\endbibitem

\bibitem[\protect\citeauthoryear{Johnson, Kotz and
  Balakrishnan}{1994}]{Johnson_1994}
\begin{bbook}[author]
\bauthor{\bsnm{Johnson},~\bfnm{Norman~L.}\binits{N.~L.}},
  \bauthor{\bsnm{Kotz},~\bfnm{Samuel}\binits{S.}} \AND
  \bauthor{\bsnm{Balakrishnan},~\bfnm{N.}\binits{N.}}
(\byear{1994}).
\btitle{Continuous Univariate Distributions: Volume 1}.
\bpublisher{John Wiley \& Sons, New York, USA}.
\end{bbook}
\endbibitem

\bibitem[\protect\citeauthoryear{Kohane}{2011}]{Kohane_2011}
\begin{barticle}[author]
\bauthor{\bsnm{Kohane},~\bfnm{Isaac}\binits{I.}}
(\byear{2011}).
\btitle{Using Electronic Health Records to Drive Discovery in Disease
  Genomics}.
\bjournal{Nature Reviews Genetics}
\bvolume{12}
\bpages{417-428}.
\end{barticle}
\endbibitem

\bibitem[\protect\citeauthoryear{Kohane, Churchill and
  Murphy}{2012}]{kohane2012translational}
\begin{barticle}[author]
\bauthor{\bsnm{Kohane},~\bfnm{Isaac}\binits{I.}},
  \bauthor{\bsnm{Churchill},~\bfnm{Susanne}\binits{S.}} \AND
  \bauthor{\bsnm{Murphy},~\bfnm{Shawn}\binits{S.}}
(\byear{2012}).
\btitle{A Translational Engine at the National Scale: Informatics for
  Integrating Biology and the Bedside}.
\bjournal{Journal of the American Medical Informatics Association}
\bvolume{19}
\bpages{181--185}.
\end{barticle}
\endbibitem

\bibitem[\protect\citeauthoryear{Li}{1991}]{K-C_Li_1991}
\begin{barticle}[author]
\bauthor{\bsnm{Li},~\bfnm{Ker-Chau}\binits{K.-C.}}
(\byear{1991}).
\btitle{Sliced Inverse Regression for Dimension Reduction}.
\bjournal{Journal of the American Statistical Association}
\bvolume{86}
\bpages{316-327}.
\end{barticle}
\endbibitem

\bibitem[\protect\citeauthoryear{Li and Duan}{1989}]{Duan_Li_1989}
\begin{barticle}[author]
\bauthor{\bsnm{Li},~\bfnm{Ker-Chau}\binits{K.-C.}} \AND
  \bauthor{\bsnm{Duan},~\bfnm{Naihua}\binits{N.}}
(\byear{1989}).
\btitle{Regression Analysis under Link Violation}.
\bjournal{The Annals of Statistics}
\bvolume{17}
\bpages{1009-1052}.
\end{barticle}
\endbibitem

\bibitem[\protect\citeauthoryear{Liao et~al.}{2010}]{Liao_2010}
\begin{barticle}[author]
\bauthor{\bsnm{Liao},~\bfnm{Katherine~P.}\binits{K.~P.}},
  \bauthor{\bsnm{Cai},~\bfnm{Tianxi}\binits{T.}},
  \bauthor{\bsnm{Gainer},~\bfnm{Vivian}\binits{V.}},
  \bauthor{\bsnm{Goryachev},~\bfnm{Sergey}\binits{S.}} \betal{et~al.}
(\byear{2010}).
\btitle{Electronic Medical Records for Discovery Research in Rheumatoid
  Arthritis}.
\bjournal{Arthritis Care and Research}
\bvolume{62}
\bpages{1120-1127}.
\end{barticle}
\endbibitem

\bibitem[\protect\citeauthoryear{Liao et~al.}{2013}]{Liao_2013}
\begin{barticle}[author]
\bauthor{\bsnm{Liao},~\bfnm{Katherine~P.}\binits{K.~P.}},
  \bauthor{\bsnm{Kurreeman},~\bfnm{Fina}\binits{F.}},
  \bauthor{\bsnm{Li},~\bfnm{Gang}\binits{G.}},
  \bauthor{\bsnm{Duclos},~\bfnm{Grant}\binits{G.}} \betal{et~al.}
(\byear{2013}).
\btitle{Associations of Autoantibodies, Autoimmune Risk Alleles, and Clinical
  Diagnoses from the Electronic Medical Records in Rheumatoid Arthritis Cases
  and Non--Rheumatoid Arthritis Controls}.
\bjournal{Arthritis \& Rheumatism}
\bvolume{65}
\bpages{571--581}.
\end{barticle}
\endbibitem

\bibitem[\protect\citeauthoryear{Liao et~al.}{2015}]{liao2015development}
\begin{barticle}[author]
\bauthor{\bsnm{Liao},~\bfnm{Katherine~P.}\binits{K.~P.}},
  \bauthor{\bsnm{Cai},~\bfnm{Tianxi}\binits{T.}},
  \bauthor{\bsnm{Savova},~\bfnm{Guergana~K.}\binits{G.~K.}},
  \bauthor{\bsnm{Murphy},~\bfnm{Shawn~N.}\binits{S.~N.}} \betal{et~al.}
(\byear{2015}).
\btitle{Development of Phenotype Algorithms Using Electronic Medical Records
  and Incorporating Natural Language Processing}.
\bjournal{BMJ}
\bvolume{350}
\bpages{h1885}.
\end{barticle}
\endbibitem

\bibitem[\protect\citeauthoryear{Merkl and Rauber}{2000}]{merkl2000document}
\begin{bincollection}[author]
\bauthor{\bsnm{Merkl},~\bfnm{Dieter}\binits{D.}} \AND
  \bauthor{\bsnm{Rauber},~\bfnm{Andreas}\binits{A.}}
(\byear{2000}).
\btitle{Document classification with Unsupervised Artificial Neural Networks}.
In \bbooktitle{Soft Computing in Information Retrieval}
\bpages{102--121}.
\bpublisher{Springer}.
\end{bincollection}
\endbibitem

\bibitem[\protect\citeauthoryear{Murphy et~al.}{2009}]{Murphy_2009}
\begin{barticle}[author]
\bauthor{\bsnm{Murphy},~\bfnm{Shawn}\binits{S.}},
  \bauthor{\bsnm{Churchill},~\bfnm{Susanne}\binits{S.}},
  \bauthor{\bsnm{Bry},~\bfnm{Lynn}\binits{L.}},
  \bauthor{\bsnm{Chueh},~\bfnm{Henry}\binits{H.}} \betal{et~al.}
(\byear{2009}).
\btitle{Instrumenting the Health Care Enterprise for Discovery Research in the
  Genomic Era}.
\bjournal{Genome Research}
\bvolume{19}
\bpages{1675-1681}.
\end{barticle}
\endbibitem

\bibitem[\protect\citeauthoryear{Natarajan et~al.}{2013}]{Natarajan_2013}
\begin{binproceedings}[author]
\bauthor{\bsnm{Natarajan},~\bfnm{Nagarajan}\binits{N.}},
  \bauthor{\bsnm{Dhillon},~\bfnm{Inderjit~S.}\binits{I.~S.}},
  \bauthor{\bsnm{Ravikumar},~\bfnm{Pradeep~K.}\binits{P.~K.}} \AND
  \bauthor{\bsnm{Tewari},~\bfnm{Ambuj}\binits{A.}}
(\byear{2013}).
\btitle{Learning with Noisy Labels}.
In \bbooktitle{Advances in Neural Information Processing Systems}
\bpages{1196--1204}.
\end{binproceedings}
\endbibitem

\bibitem[\protect\citeauthoryear{Negahban et~al.}{2012}]{Negahban_2012}
\begin{barticle}[author]
\bauthor{\bsnm{Negahban},~\bfnm{Sahand~N.}\binits{S.~N.}},
  \bauthor{\bsnm{Ravikumar},~\bfnm{Pradeep}\binits{P.}},
  \bauthor{\bsnm{Wainwright},~\bfnm{Martin~J.}\binits{M.~J.}} \AND
  \bauthor{\bsnm{Yu},~\bfnm{Bin}\binits{B.}}
(\byear{2012}).
\btitle{A Unified Framework for High-Dimensional Analysis of $M$-Estimators
  with Decomposable Regularizers}.
\bjournal{Statistical Science}
\bvolume{27}
\bpages{538-557}.
\bnote{(See also the extended ArXiv version: arXiv:1010.2731v1 for certain
  results cited here).}
\end{barticle}
\endbibitem

\bibitem[\protect\citeauthoryear{Neykov, Liu and Cai}{2016}]{Neykov_L1_2016}
\begin{barticle}[author]
\bauthor{\bsnm{Neykov},~\bfnm{Matey}\binits{M.}},
  \bauthor{\bsnm{Liu},~\bfnm{Jun~S.}\binits{J.~S.}} \AND
  \bauthor{\bsnm{Cai},~\bfnm{Tianxi}\binits{T.}}
(\byear{2016}).
\btitle{L1-Regularized Least Squares for Support Recovery of High Dimensional
  Single Index Models with {Gaussian} Designs}.
\bjournal{Journal of Machine Learning Research}
\bvolume{17}
\bpages{1--37}.
\end{barticle}
\endbibitem

\bibitem[\protect\citeauthoryear{Pearl}{2000}]{Pearl_2000}
\begin{bbook}[author]
\bauthor{\bsnm{Pearl},~\bfnm{Judea}\binits{J.}}
(\byear{2000}).
\btitle{Causality: Models, Reasoning and Inference}.
\bpublisher{Cambridge University Press, New York, USA}.
\end{bbook}
\endbibitem

\bibitem[\protect\citeauthoryear{Plan and
  Vershynin}{2013}]{Plan_Vershynin_2013_a}
\begin{barticle}[author]
\bauthor{\bsnm{Plan},~\bfnm{Yaniv}\binits{Y.}} \AND
  \bauthor{\bsnm{Vershynin},~\bfnm{Roman}\binits{R.}}
(\byear{2013}).
\btitle{Robust 1-Bit Compressed Sensing and Sparse Logistic Regression: A
  Convex Programming Approach}.
\bjournal{IEEE Transactions on Information Theory}
\bvolume{59}
\bpages{482-494}.
\end{barticle}
\endbibitem

\bibitem[\protect\citeauthoryear{Plan and Vershynin}{2016}]{Plan_2016}
\begin{barticle}[author]
\bauthor{\bsnm{Plan},~\bfnm{Yaniv}\binits{Y.}} \AND
  \bauthor{\bsnm{Vershynin},~\bfnm{Roman}\binits{R.}}
(\byear{2016}).
\btitle{The Generalized Lasso with Non-Linear Observations}.
\bjournal{IEEE Transactions on Information Theory}
\bvolume{62}
\bpages{1528--1537}.
\end{barticle}
\endbibitem

\bibitem[\protect\citeauthoryear{Plan, Vershynin and
  Yudovina}{2017}]{Plan_2017}
\begin{barticle}[author]
\bauthor{\bsnm{Plan},~\bfnm{Yaniv}\binits{Y.}},
  \bauthor{\bsnm{Vershynin},~\bfnm{Roman}\binits{R.}} \AND
  \bauthor{\bsnm{Yudovina},~\bfnm{Elena}\binits{E.}}
(\byear{2017}).
\btitle{High-Dimensional Estimation with Geometric Constraints}.
\bjournal{Information and Inference: A Journal of the IMA}
\bvolume{6}
\bpages{1--40}.
\end{barticle}
\endbibitem

\bibitem[\protect\citeauthoryear{Powell, Stock and
  Stoker}{1989}]{Powell_SS_1989}
\begin{barticle}[author]
\bauthor{\bsnm{Powell},~\bfnm{James~L.}\binits{J.~L.}},
  \bauthor{\bsnm{Stock},~\bfnm{James~H.}\binits{J.~H.}} \AND
  \bauthor{\bsnm{Stoker},~\bfnm{Thomas~M.}\binits{T.~M.}}
(\byear{1989}).
\btitle{Semiparametric Estimation of Index Coefficients}.
\bjournal{Econometrica: Journal of the Econometric Society}
\bpages{1403--1430}.
\end{barticle}
\endbibitem

\bibitem[\protect\citeauthoryear{Ritchie et~al.}{2010}]{ritchie2010robust}
\begin{barticle}[author]
\bauthor{\bsnm{Ritchie},~\bfnm{Marylyn~D.}\binits{M.~D.}},
  \bauthor{\bsnm{Denny},~\bfnm{Joshua~C.}\binits{J.~C.}},
  \bauthor{\bsnm{Crawford},~\bfnm{Dana~C.}\binits{D.~C.}},
  \bauthor{\bsnm{Ramirez},~\bfnm{Andrea~H.}\binits{A.~H.}} \betal{et~al.}
(\byear{2010}).
\btitle{Robust Replication of Genotype-Phenotype Associations across Multiple
  Diseases in an Electronic Medical Record}.
\bjournal{The American Journal of Human Genetics}
\bvolume{86}
\bpages{560--572}.
\end{barticle}
\endbibitem

\bibitem[\protect\citeauthoryear{Rudelson and Zhou}{2013}]{Rud_Zhou_2013}
\begin{barticle}[author]
\bauthor{\bsnm{Rudelson},~\bfnm{Mark}\binits{M.}} \AND
  \bauthor{\bsnm{Zhou},~\bfnm{Shuheng}\binits{S.}}
(\byear{2013}).
\btitle{Reconstruction from Anisotropic Random Measurements}.
\bjournal{IEEE Transactions on Information Theory}
\bvolume{59}
\bpages{3434--3447}.
\end{barticle}
\endbibitem

\bibitem[\protect\citeauthoryear{Shi and Horvath}{2006}]{shi2006unsupervised}
\begin{barticle}[author]
\bauthor{\bsnm{Shi},~\bfnm{Tao}\binits{T.}} \AND
  \bauthor{\bsnm{Horvath},~\bfnm{Steve}\binits{S.}}
(\byear{2006}).
\btitle{Unsupervised Learning with Random Forest Predictors}.
\bjournal{Journal of Computational and Graphical Statistics}
\bvolume{15}.
\end{barticle}
\endbibitem

\bibitem[\protect\citeauthoryear{Tallis}{1961}]{Tallis_1961}
\begin{barticle}[author]
\bauthor{\bsnm{Tallis},~\bfnm{Georges~M.}\binits{G.~M.}}
(\byear{1961}).
\btitle{The Moment Generating Function of the Truncated Multi-Normal
  Distribution}.
\bjournal{Journal of the Royal Statistical Society. Series B (Methodological)}
\bvolume{23}
\bpages{223--229}.
\end{barticle}
\endbibitem

\bibitem[\protect\citeauthoryear{Thrampoulidis, Abbasi and
  Hassibi}{2015}]{Thramp_2015}
\begin{binproceedings}[author]
\bauthor{\bsnm{Thrampoulidis},~\bfnm{Christos}\binits{C.}},
  \bauthor{\bsnm{Abbasi},~\bfnm{Ehsan}\binits{E.}} \AND
  \bauthor{\bsnm{Hassibi},~\bfnm{Babak}\binits{B.}}
(\byear{2015}).
\btitle{The Lasso with Non-Linear Measurements is Equivalent to One with Linear
  Measurements}.
In \bbooktitle{Advances in Neural Information Processing Systems}
\bpages{3420--3428}.
\bnote{(ArXiv preprint arXiv:1506.02181v1).}
\end{binproceedings}
\endbibitem

\bibitem[\protect\citeauthoryear{van~de Geer and
  Lederer}{2013}]{Van_de_Geer_2013}
\begin{barticle}[author]
\bauthor{\bparticle{van~de} \bsnm{Geer},~\bfnm{Sara}\binits{S.}} \AND
  \bauthor{\bsnm{Lederer},~\bfnm{Johannes}\binits{J.}}
(\byear{2013}).
\btitle{The {Bernstein--Orlicz} Norm and Deviation Inequalities}.
\bjournal{Probability Theory and Related Fields}
\bvolume{157}
\bpages{225--250}.
\end{barticle}
\endbibitem

\bibitem[\protect\citeauthoryear{van~de Geer et~al.}{2014}]{VdG_2014}
\begin{barticle}[author]
\bauthor{\bparticle{van~de} \bsnm{Geer},~\bfnm{Sara}\binits{S.}},
  \bauthor{\bsnm{B{\"u}hlmann},~\bfnm{Peter}\binits{P.}},
  \bauthor{\bsnm{Ritov},~\bfnm{Ya{’}acov}\binits{Y.}} \AND
  \bauthor{\bsnm{Dezeure},~\bfnm{Ruben}\binits{R.}}
(\byear{2014}).
\btitle{On Asymptotically Optimal Cnfidence Regions and Tests for
  High-Dimensional Models}.
\bjournal{The Annals of Statistics}
\bvolume{42}
\bpages{1166--1202}.
\end{barticle}
\endbibitem

\bibitem[\protect\citeauthoryear{Vershynin}{2010}]{Vershynin_2010}
\begin{barticle}[author]
\bauthor{\bsnm{Vershynin},~\bfnm{Roman}\binits{R.}}
(\byear{2010}).
\btitle{Introduction to the Non-Asymptotic Analysis of Random Matrices}.
\bjournal{arXiv preprint arXiv:1011.3027}.
\end{barticle}
\endbibitem

\bibitem[\protect\citeauthoryear{Wang, Xu and Zhu}{2012}]{Wang_2012}
\begin{barticle}[author]
\bauthor{\bsnm{Wang},~\bfnm{Tao}\binits{T.}},
  \bauthor{\bsnm{Xu},~\bfnm{Pei-Rong}\binits{P.-R.}} \AND
  \bauthor{\bsnm{Zhu},~\bfnm{Li-Xing}\binits{L.-X.}}
(\byear{2012}).
\btitle{Non-Convex Penalized Estimation in High-Dimensional Models with
  Single-Index Structure}.
\bjournal{Journal of Multivariate Analysis}
\bvolume{109}
\bpages{221--235}.
\end{barticle}
\endbibitem

\bibitem[\protect\citeauthoryear{Wei}{2018}]{Wei_2018}
\begin{barticle}[author]
\bauthor{\bsnm{Wei},~\bfnm{Xiaohan}\binits{X.}}
(\byear{2018}).
\btitle{Structured Recovery with Heavy-tailed Measurements: A Thresholding
  Procedure and Optimal Rates}.
\bjournal{ArXiv preprint arXiv:1804.05959}.
\end{barticle}
\endbibitem

\bibitem[\protect\citeauthoryear{Wei and Kosorok}{2013}]{kosorok_2013}
\begin{barticle}[author]
\bauthor{\bsnm{Wei},~\bfnm{Susan}\binits{S.}} \AND
  \bauthor{\bsnm{Kosorok},~\bfnm{Michael}\binits{M.}}
(\byear{2013}).
\btitle{Latent Supervised Learning}.
\bjournal{Journal of the American Statistical Association}
\bvolume{108}
\bpages{957--970}.
\end{barticle}
\endbibitem

\bibitem[\protect\citeauthoryear{Wilke et~al.}{2011}]{wilke2011emerging}
\begin{barticle}[author]
\bauthor{\bsnm{Wilke},~\bfnm{R.~A.}\binits{R.~A.}},
  \bauthor{\bsnm{Xu},~\bfnm{H.}\binits{H.}},
  \bauthor{\bsnm{Denny},~\bfnm{J.~C.}\binits{J.~C.}},
  \bauthor{\bsnm{Roden},~\bfnm{D.~M.}\binits{D.~M.}} \betal{et~al.}
(\byear{2011}).
\btitle{The Emerging Role of Electronic Medical Records in Pharmacogenomics}.
\bjournal{Clinical Pharmacology \& Therapeutics}
\bvolume{89}
\bpages{379--386}.
\end{barticle}
\endbibitem

\bibitem[\protect\citeauthoryear{Yu et~al.}{2017}]{Yu_2015}
\begin{barticle}[author]
\bauthor{\bsnm{Yu},~\bfnm{Sheng}\binits{S.}},
  \bauthor{\bsnm{Chakrabortty},~\bfnm{Abhishek}\binits{A.}},
  \bauthor{\bsnm{Liao},~\bfnm{Katherine~P.}\binits{K.~P.}},
  \bauthor{\bsnm{Cai},~\bfnm{Tianrun}\binits{T.}} \betal{et~al.}
(\byear{2017}).
\btitle{Surrogate-Assisted Feature Extraction for High-Throughput Phenotyping}.
\bjournal{Journal of the American Medical Informatics Association}
\bvolume{24}
\bpages{e143--e149}.
\bdoi{10.1093/jamia/ocw135}
\end{barticle}
\endbibitem

\end{thebibliography}

\end{document}